 \newtheorem{theorem}{Theorem} \newtheorem{lemma}[theorem]{Lemma}
 \newtheorem{corollary}[theorem]{Corollary}
 \newtheorem{definition}[theorem]{Definition}
\newcommand{\xcoord}{\ensuremath{x}}
\newcommand{\ycoord}{\ensuremath{y}}
\newcommand{\myint}[1]{\ensuremath{\textsf{int}(#1)}}
\newcommand{\mybd}[1]{\ensuremath{\partial#1}}
\newcommand{\mycl}[1]{\ensuremath{\textsf{cl}(#1)}}
\newcommand{\norm}[1]{\ensuremath{\left\lVert#1\right\rVert}}
\newcommand{\inner}[2]{\ensuremath{\left\langle #1, #2 \right\rangle}}
\newcommand{\nopt}[1]{\ensuremath{\textsf{opt}(#1)}}
\newcommand{\vv}[1]{\ensuremath{\mathbf{#1}}}
\newcommand{\prob}[1]{\ensuremath{\textsf{wPartition}(#1)}}
\newcommand{\probt}[1]{\ensuremath{\textsf{wPartition}(T, #1)}}
\newcommand{\dwidth}[2]{\ensuremath{\omega_{#1}(#2)}}
\newcommand{\oset}{\ensuremath{\mathcal{O}}}
\newcommand{\bdeset}{\ensuremath{E^\mathrm{bd}}}
\newcommand{\corridor}{\ensuremath{\mathcal{A}}}
\newcommand{\inst}{\ensuremath{\mathtt{INST}}}
\newcommand{\usetp}{\ensuremath{\mathbb{S}^+}}
\newcommand{\paths}{\ensuremath{\Lambda}}
\newcommand{\core}[1]{\ensuremath{\mathrm{core}(#1)}}
\newcommand{\segout}{\ensuremath{\mathcal{Z}^{\mathrm{out}}}}
\newcommand{\segin}{\ensuremath{\mathcal{Z}^{\mathrm{in}}}}
\newcommand{\seginast}{\ensuremath{\mathcal{Z}^{\mathrm{in}\ast}}}
\newcommand{\segunion}{\ensuremath{\mathbf{Z}}}
\newcommand{\trg}[1]{\ensuremath{#1^\mathrm{tr}}}
\newcommand{\nlayers}{\ensuremath{h}}
\newcommand{\conv}[1]{\ensuremath{\mathrm{CH}(#1)}}
\def\denseitems{
    \itemsep1pt plus1pt minus1pt
    \parsep0pt plus0pt
    \parskip0pt\topsep0pt}
\newbox\ProofSym \setbox\ProofSym=\hbox{%
  \unitlength=0.18ex%
  \begin{picture}(10,10) \put(0,0){\framebox(9,9){}}
    \put(0,3){\framebox(6,6){}}
  \end{picture}}
\title{Minimum Partition of Polygons under Width and Cut Constraints\thanks{The work by Chung was supported by a KIAS Individual Grant AP106101 via the Center for Artificial Intelligence and Natural Sciences 
at Korea Institute for Advanced Study. 
The work by Iwama was supported in part by MOST, Taiwan, under
Grants NSTC 110-2223-E-007-001 and NSTC 111-2223-E-007-010.
The work by Liao was supported by MOST Taiwan Grants NSTC 111-2221-E-002-207-MY3, 114-2221-E-002-221-MY3, and 114-2221-E-002-220-MY3.
The work by Ahn was supported by 
the National Research Foundation of Korea (NRF) 
grant funded by the Korea government(MSIT) (RS-2023-00219980) 
and the Institute of Information 
\& communications Technology Planning \& Evaluation(IITP) grant funded by the Korea government(MSIT) (No. RS-2019-II191906, Artificial Intelligence Graduate School Program(POSTECH)).
}}
\author{Jaehoon Chung\thanks{Korea Institute for Advanced Study (KIAS), Seoul, Korea. 
{\tt sk7755@kias.re.kr}} 
  \and Kazuo Iwama\thanks{Department of Industrial Engineering and Engineering Management, National Tsing Hua University, Taiwan. {\tt iwama@ie.nthu.edu.tw}} 
  \and Chung-Shou Liao\thanks{Department of Electrical Engineering, National Taiwan University, Taiwan. {\tt csliao@ntu.edu.tw}} 
  \and Hee-Kap Ahn\thanks{Graduate School of Artificial Intelligence, Department of Computer Science and Engineering, Pohang University of Science and Technology (POSTECH), Korea. {\tt heekap@postech.ac.kr}}}
\begin{document}
% \linenumbers
\date{}
\maketitle

\begin{abstract}
  We study the problem of partitioning a polygon into the minimum
  number of subpolygons using cuts in predetermined directions such
  that each resulting subpolygon satisfies a given width constraint.
  A polygon satisfies the unit-width constraint for a set of unit
  vectors if the length of the orthogonal projection of the polygon on
  a line parallel to a vector in the set is at most one.  We analyze
  structural properties of the minimum partition numbers, focusing on
  monotonicity under polygon containment.  We show that the minimum
  partition number of a simple polygon is at least that of any
  subpolygon, provided that the subpolygon satisfies a certain
  orientation-wise convexity with respect to the polygon.  As a
  consequence, we prove a partition analogue of Bang's conjecture
  about coverings of convex regions in the plane: for any partition of
  a convex body in the plane, the sum of relative widths of all parts
  is at least one.  For any convex polygon, there exists a direction
  along which an optimal partition is achieved by parallel cuts.
  Given such a direction, an optimal partition can be computed in
  linear time.
\end{abstract}

\section{Introduction} Most works in partitioning polygons have primarily focused on
maximizing geometric measures, such as fatness or the minimum side
length of resulting pieces~\cite{Buchin2021,Damian2004,ORourke2004}.
In this paper, we study an opposite objective: partitioning a polygon
into subpolygons whose widths are bounded above in certain directions.
Such a width constraint commonly arises in manufacturing and recycling
industries, where materials must be cut or processed within certain
width limits.  For example, wood chipping and metal shredding require
pieces to fit within the machine's inlet.  In some materials, cut
directions are critical for preserving structural strength; for
example, fabric is typically cut along the fiber
direction~\cite{Hearle1967}.

Our problem is rooted in classical questions in convex geometry,
notably Tarski's plank problem and its affine-invariant extension by
Bang~\cite{Bang1951}.  The original conjecture asserts that any
covering of a convex body in $\mathbb{R}^d$ by strips must have a
total width at least the minimal width of the body, which was proven
by Bang.  Bang further proposed the \emph{affine plank problem}, in
which strip widths are measured relative to the body's width in the
same direction.  The affine version remains open in general, with only
partial results~\cite{Gardner1988,Akopyan2019,Hunter1993,Ball1991}.
It is known to be equivalent to the Davenport
conjecture~\cite{Alexander1968}, which concerns partitions. Several
partition analogues have been
studied~\cite{Bezdek1996,Akopyan2012,Bezdek2013}.

Our work can also be viewed as a partition analogue of these problems
in the plane, where width constraints replace strips, and simple
polygons substitute convex bodies.

\begin{figure}[t!]
  \centering
  \includegraphics[width=0.75\textwidth]{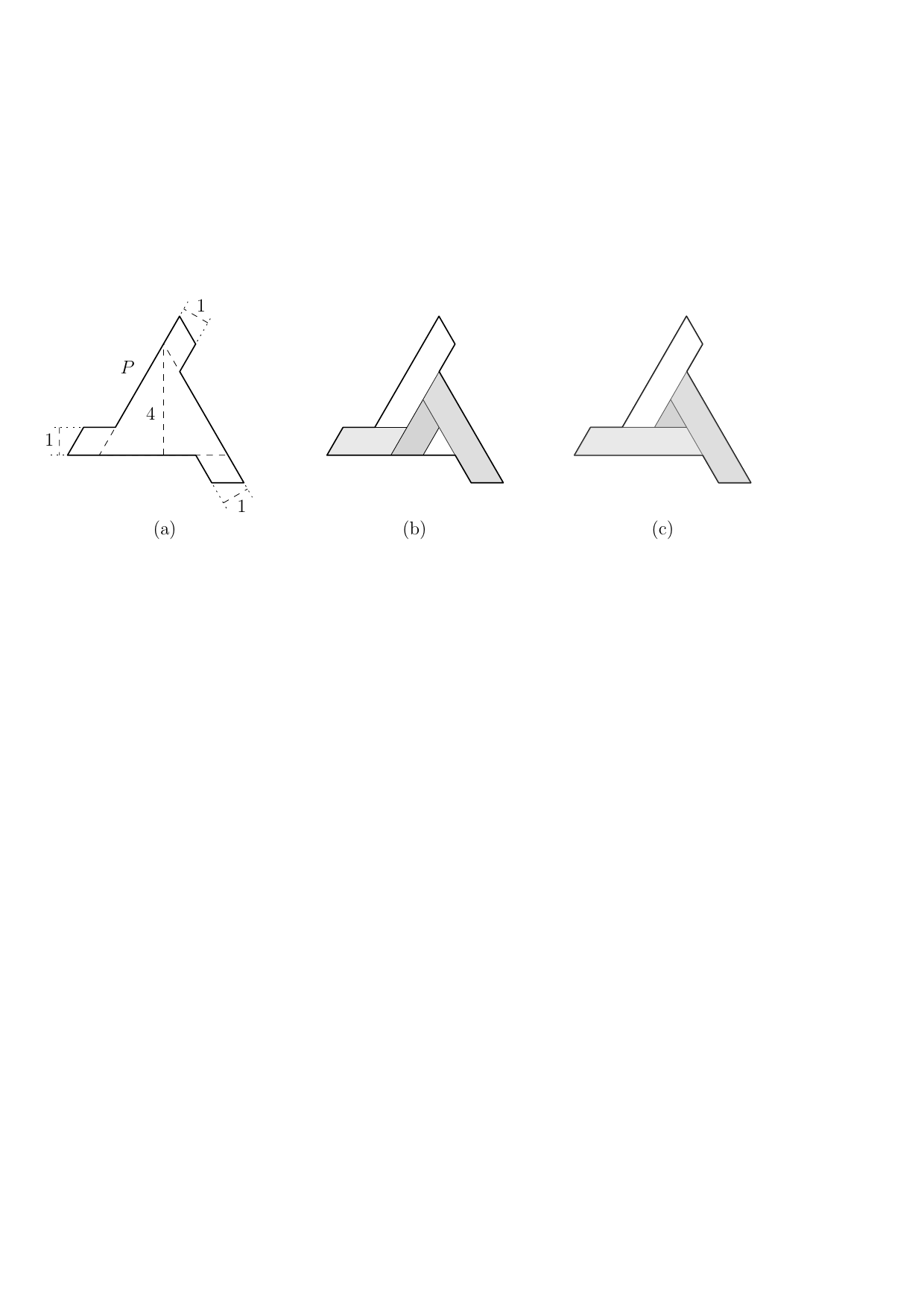}
  \caption{\small (a) The polygon $P$ has a windmill shape with three
    arms extending from an equilateral triangle of height 4.  (b--c)
    The minimum partitions of $P$ under constraints
    $U = \{\vv{v}_{0^\circ}, \vv{v}_{60^\circ}, \vv{v}_{120^\circ}\}$
    and
    $W = \{\vv{v}_{30^\circ}, \vv{v}_{90^\circ},
    \vv{v}_{150^\circ}\}$, where
    $\vv{v}_{\theta} = (\cos \theta, \sin \theta) \in \usetp$.  (b) A
    guillotine partition of five trapezoids. (c) A non-guillotine
    partition of four trapezoids.}
  \label{fig:min.unit.width.partition}
\end{figure}

\subsection{Problem definition and results}
We consider the problem of partitioning polygons into the minimum
number of pieces satisfying both a unit-width constraint and a cut
constraint.  Let $P$ be a simple polygon.  Let $Q$ be a piece in a
partition of $P$ satisfying unit-width constraint $W\subseteq \usetp$,
where $\usetp$ is the set of unit vectors
$\{(\cos \theta, \sin \theta) \mid 0 \le \theta < \pi \}$.  Let
$\dwidth{\vv{v}}{Q}$ denote the \emph{width} of $Q$ in
$\vv{v}\in \usetp$, that is, the length of the orthogonal projection
of $Q$ on a line parallel to $\vv{v}$.  We say $Q$ satisfies the
\emph{unit-width constraint} $W$ if $\dwidth{\vv{v}}{Q}\le 1$ for some
vector $\vv{v} \in W$.

The process of partitioning $P$ must satisfy a cut constraint
$U\subseteq \usetp$.  A \emph{cut} is defined as a line segment within
$P$ whose relative interior lies in the interior of $P$.  A unit
vector is often used to represent a direction, such as the orientation
of a cut, meaning that the cut lies on a line parallel to the vector.
We require that every cut must be in a direction in $U$.  If a cut
extends from one edge of $P$ to the other edge, it divides $P$ into
two distinct pieces; such a cut is called a \emph{guillotine} cut.
 A \emph{guillotine partition} of $P$ (also called a \emph{binary partition}) 
 is obtained by a finite sequence of guillotine cuts; 
 it starts from $P$ and recursively partitions each piece into two subpieces
 using a guillotine cut.  A non-guillotine partition of $P$ is an arbitrary partition of $P$ using cuts that
  are not necessarily guillotine.
  Figure~\ref{fig:min.unit.width.partition} shows a guillotine partition and a non-guillotine partition.

Given a simple polygon $P$, our objective is to partition $P$ into the
minimum number of pieces using cuts constrained by $U$ such that each
piece in the partition satisfies unit-width constraint $W$.  We denote
this minimum partition problem by $\prob{P,W,U}$ and its
  optimum value by $\nopt{P,W,U}$.  Throughout this paper, we use $W$
and $U$ exclusively to refer to the unit-width constraint and the cut
constraint, respectively.

\subparagraph{Related Work.}  Damian and Pemmaraju~\cite{Damian2004b}
and Damian-Iordache~\cite{Damian2000} gave a polynomial-time algorithm
for partitioning a simple polygon into the minimum number of
subpolygons without using Steiner points such that each subpolygon has
diameter at most $\alpha$, for $\alpha>0$.  Later, Buchin and
Selbach~\cite{Buchin2021} showed that this problem becomes NP-hard for
polygons with holes.  Worman~\cite{Worman2003} proved NP-completeness
for a variant in which each subpolygon must be contained in an
axis-aligned square of side length $\alpha$.  Abrahamsen and
Stade~\cite{Abrahamsen2024} showed that allowing Steiner points leads
to NP-hardness for the partition problem under axis-aligned
unit-square containment, even for simple polygons without holes.  This
marks the first known NP-hardness result for a minimum partition
problem on hole-free polygons.

Abrahamsen and Rasmussen~\cite{Abrahamsen2025} studied the problem of
partitioning simple polygons into the minimum number of pieces such
that each piece satisfies a bounded-size constraint (e.g., unit area,
perimeter, diameter, or containment within unit disks or squares).
They noted that computing optimal partitions is hard, even when the
input polygon is a square.

\subparagraph{Our Results.}  Our main contribution is an analysis of
the minimum partition number under constraints
$W, U \subseteq \usetp$.  First, we provide necessary and sufficient
conditions for the existence of feasible partitions in $\prob{P,W,U}$,
along with a decision algorithm for testing feasibility

Second, we study the monotonicity of the minimum partition number
under polygon containment $Q \subseteq P$
(Sections~\ref{sec.monotonicity.partition.num},
\ref{sec:reconfig.restricted.partition},
and~\ref{sec.analysis.reconfiguration}).  We show that this
monotonicity does not hold in general, and identify a sufficient
condition based on a restricted-orientation convexity, called
\emph{$\oset$-convexity}, where $\oset \subseteq \usetp$.
Theorem~\ref{thm:monotonicity.containment} states that
$\nopt{Q,W,U} \leq \nopt{P,W,U}$ holds if $Q$ is $U$-convex with
respect to $P$ for guillotine partitions, or $\overline{W}$-convex
with respect to $P$ for non-guillotine partitions, where
$\overline{W}$ is the set of all unit vectors perpendicular to those
in $W$.

Finally, we prove a partition analogue of Bang's conjecture
(Section~\ref{sec:bang.type.theorem}).  The statement of Bang's
conjecture is as follows: if a convex body $K\subset \mathbb{R}^d$ is
covered by strips $H_1,\ldots, H_m$, then
$\sum_{i=1}^m \inf_{\vv{v}\in \usetp}
\frac{\dwidth{\vv{v}}{H_i}}{\dwidth{\vv{v}}{K}} \ge 1$.  Our theorem
replaces strip coverings with arbitrary partitions and extends the
direction set to any subset $W\subseteq \usetp$.

\begin{theorem}[Bang-Type Partition Analogue]\label{thm:bang.type.analogue}
  Let $K\subset \mathbb{R}^2$ be a convex body, and let
  $K_1\cup \cdots \cup K_m = K$ be its arbitrary partition.  Then, for
  any subset $W \subseteq \usetp$, \[ \sum_{i=1}^m \inf_{\vv{v}
      \in W}\frac{\dwidth{\vv{v}}{K_i}}{\dwidth{\vv{v}}{K}} \ge 1.\]
\end{theorem}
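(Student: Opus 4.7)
Plan:

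The theorem is a partition analogue of the affine Bang conjecture in the plane, and I plan to derive it from the minimum-partition machinery established earlier in the paper, through a scaling-and-limit argument.

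The first step is a reduction to a finite unit-width constraint with attained infima. Since the ratio $\vv{v}\mapsto\dwidth{\vv{v}}{K_i}/\dwidth{\vv{v}}{K}$ is continuous on the compact semicircle $\usetp$, replacing $W$ by its topological closure preserves every $\alpha_i$ and makes the infimum attained; picking $\vv{v}_i$ attaining $\alpha_i$ for each $K_i$ and restricting to the finite set $W^{*}=\{\vv{v}_1,\ldots,\vv{v}_m\}$ still preserves each $\alpha_i$. I would also use the equivalent reformulation: $\sum_i\alpha_i\ge 1$ holds if and only if $\sum_i\dwidth{\vv{v}_i}{K_i}/\dwidth{\vv{v}_i}{K}\ge 1$ for every choice of directions $\vv{v}_i\in W^{*}$ (one per piece), since the former is the minimum over all such choices.

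The second step is to translate the continuous inequality into a discrete partition-counting statement. For a large integer $N$, I would consider the problem $\prob{NK,W^{*},U^{*}}$ with cut direction set $U^{*}$ containing $\vv{v}_i^\perp$ for each $i$. Since $NK$ is a convex polygon, the optimality result stated in the abstract gives $\nopt{NK,W^{*},U^{*}}=\min_{\vv{v}\in W^{*}}\lceil N\dwidth{\vv{v}}{K}\rceil$, realized by parallel cuts in a single direction. On the other side, refining the scaled partition $\bigcup_i NK_i$ by parallel cuts in direction $\vv{v}_i^\perp$ within each $NK_i$---invoking Theorem~\ref{thm:monotonicity.containment} to control non-convex pieces by passing through an $\overline{W^{*}}$-convex enclosure of $NK_i$ inside $NK$---produces a feasible partition of at most $\sum_i\lceil N\dwidth{\vv{v}_i}{K_i}\rceil$ pieces. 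Combining yields $\sum_i\lceil N\dwidth{\vv{v}_i}{K_i}\rceil\ge\min_{\vv{v}\in W^{*}}\lceil N\dwidth{\vv{v}}{K}\rceil$.

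The main obstacle is sharpening the resulting asymptotic inequality. Dividing by $N$ and letting $N\to\infty$ yields only $\sum_i\dwidth{\vv{v}_i}{K_i}\ge\min_{\vv{v}\in W^{*}}\dwidth{\vv{v}}{K}$, which is strictly weaker than the Bang-type statement $\sum_i\dwidth{\vv{v}_i}{K_i}/\dwidth{\vv{v}_i}{K}\ge 1$ whenever the widths $\dwidth{\vv{v}_i}{K}$ differ across pieces. I plan to close this gap by exploiting the affine invariance of the theorem: apply a linear transformation to $K$ (and reparametrize $W^{*}$ accordingly) that equalizes $\dwidth{\vv{v}}{K}$ across the chosen directions, using the four degrees of freedom of $GL_2(\mathbb{R})$ in the plane to handle up to four widths, and handling the general case by an inductive reduction on the number of distinct directions in $W^{*}$ (for example, by merging or aggregating pieces that share a chosen direction before normalization). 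This normalization-and-induction step is the crux of the argument and the place where I expect the most delicate bookkeeping.
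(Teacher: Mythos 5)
There is a genuine gap --- in fact several --- and the crux of your argument is exactly the part you leave open. First, your discrete step is circular within this paper: the lower bound $\nopt{NK,W^{*},U^{*}} \ge \min_{\vv{v}\in W^{*}}\lceil N\dwidth{\vv{v}}{K}\rceil$ for a convex polygon (Corollary~\ref{cor:convex.opt.partition}) is \emph{derived from} Theorem~\ref{thm:bang.type.analogue}; the paper gives no independent proof of it, and proving it directly for an arbitrary finite $W^{*}$ and arbitrary cut directions is essentially the statement you are trying to establish. Second, your upper bound $\sum_i\lceil N\dwidth{\vv{v}_i}{K_i}\rceil$ on the number of pieces of the refined partition fails when the $K_i$ are non-convex: a unit strip orthogonal to $\vv{v}_i$ can meet $NK_i$ in many connected components, each of which is a separate piece, and the appeal to Theorem~\ref{thm:monotonicity.containment} via an ``$\overline{W^{*}}$-convex enclosure'' does not repair this --- that theorem compares $\nopt{Q}$ with $\nopt{P}$ for a subpolygon $Q$, it neither bounds the component count of a strip refinement nor yields a partition of $NK$ from partitions of overlapping enclosures. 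Third, and most seriously, even granting both steps you only reach $\sum_i\dwidth{\vv{v}_i}{K_i}\ge\min_{\vv{v}\in W^{*}}\dwidth{\vv{v}}{K}$, the non-affine (Bang) form, and your proposed fix --- a linear normalization equalizing $\dwidth{\vv{v}}{K}$ over the chosen directions, plus an unspecified induction --- cannot work as described: an element of $GL_2(\mathbb{R})$ cannot equalize widths in more than a bounded number of prescribed directions, while the number of distinct $\vv{v}_i$ can be as large as $m$, and ``merging pieces that share a direction'' changes the left-hand side in an uncontrolled way. This normalization obstruction is precisely why the affine plank/Davenport-type statements are hard, so the step you defer is the whole theorem.

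For comparison, the paper's proof takes a different route that avoids all of this. Lemma~\ref{lem:convert.convex.partition} reconfigures an arbitrary partition $\{K_i\}$ of $K$ into a \emph{convex} partition $\{K_i^\ast\}$ with the same number of pieces and with $\dwidth{\vv{v}}{K_i^\ast}\le\dwidth{\vv{v}}{K_i}$ for every $\vv{v}\in\usetp$ (pockets of $\conv{K_i}$ are reallocated, and the corridor/layer machinery of Section~\ref{sec:reconfig.restricted.partition} reconnects the fragments of the other pieces without increasing any directional width, since every reallocated region stays inside the relevant convex hulls). It then invokes the known result of Akopyan, Karasev, and Petrov for \emph{convex} partitions, $\sum_i r_K(K_i^\ast)\ge 1$ with $r_K(K_i^\ast)\le\dwidth{\vv{v}}{K_i^\ast}/\dwidth{\vv{v}}{K}$ for every direction, which immediately gives the affine inequality for any $W\subseteq\usetp$. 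If you want to salvage your scaling idea, you would need an independent proof of the convex-polygon lower bound and a way to handle the relative (per-direction) normalization; as written, the proposal does not prove the theorem.
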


To the best of our knowledge, this is the first partition analogue
that allows non-convex pieces.  We also show that, for
\(U \subseteq \overline{W}\), an optimal partition of a convex polygon
can be computed in linear time using equally spaced parallel cuts (See
Corollary~\ref{cor:convex.opt.partition}). 
% The omitted proofs will be found in the full version of the
% paper. % Appendix~\ref{sec:missing.proofs}.

\section{Preliminaries}
Let $P$ be a simple polygon with $n$ vertices in the plane.  We assume
that the vertices of $P$ are given in a list sorted in
counterclockwise order along its boundary.  A partition of a simple
polygon is a set of connected pieces with pairwise disjoint interiors
whose union equals the polygon. The cardinality of a partition is the
number of its pieces.

For a set $X \subseteq \mathbb{R}^2$, we denote by $\mybd{X}$ the
boundary of $X$, by $\myint{X}$ the interior of $X$, and by $\mycl{X}$
the closure of $X$.  We treat a polygon as the union of its interior
and boundary; $\mycl{P} = P$, $\mybd{P}$ is the boundary, and
$\myint{P}$ is the interior.

For a point $p \in \mathbb{R}^2$, let $\xcoord(p)$ and $\ycoord(p)$ be
its $x$- and $y$-coordinates, respectively.  For any two points $p$
and $q$ in $\mathbb{R}^2$, we use $\overline{pq}$ to denote the line
segment connecting $p$ and $q$ with length $|\overline{pq}|$.  We call
$\overline{pq}$ a cut in $P$ if it lies entirely in the interior of
$P$, excluding its endpoints.  If both endpoints lie on $\mybd{P}$, we
call it a guillotine cut.

The inner product of any two vectors $\vv{u}$ and $\vv{v}$ is denoted
by $\inner{\vv{u}}{\vv{v}}$.  The Euclidean norm of a vector $\vv{v}$
is denoted by $\norm{\vv{v}}$.  A vector with norm 1 is called a
\emph{unit vector}.  We use $\usetp$ to denote the set of unit vectors
$\{(\cos \theta, \sin \theta) \mid 0 \le \theta < \pi \}$.  For a
subset $V \subseteq \usetp$, we define
$\overline{V} = \{\vv{v}\in \usetp \mid \inner{\vv{u}}{\vv{v}} = 0
\text{ for some } \vv{u}\in V\}$.

For a compact set $X \subseteq \mathbb{R}^2$ and a vector
$\vv{v} \in \usetp$, let $\dwidth{\vv{v}}{X}$ denote the length of the
orthogonal projection of $X$ on a line parallel to $\vv{v}$.  We say
$X$ satisfies unit-width constraint $W \subseteq \usetp$ if and only
if $\dwidth{\vv{v}}{X} \le 1$ for some $\vv{v} \in W$.

A \emph{strip} is the region in the plane bounded by two parallel
lines.  The distance between the bounding lines is the \emph{width} of
the strip, and the direction in $\usetp$ orthogonal to the bounding
lines is called the \emph{normal} vector of this strip.  If a strip
has width 1, we call it a \emph{unit} strip.

We use the notation $[m]\coloneqq \{1,2,\ldots,m\}$ for a positive
integer $m$.  For a finite set $A$, we use $|A|$ to denote the
cardinality of $A$ which is the number of its elements.

\section{Existence of partitions for \texorpdfstring{$W$ and $U$}{W and U}}\label{sec.existence.solutions}
There may not exist a partition of a polygon that satisfies % both 
unit-width constraint $W$ and cut constraint $U$. 
Consider the example that both $W$ and $U$ consist of 
a single unit vector $(1,0)$. 
% If a horizontal line segment of length strictly larger than 1 can be translated inside a polygon, 
If the horizontal width of a polygon is strictly larger than 1,
no partition of the polygon by horizontal cuts imposed by $U$ satisfies $W$.
On the other hand, if $U$ contains two or more distinct vectors, 
it is always possible to partition a polygon using guillotine cuts 
such that the width of each piece in the partition is small enough to satisfy $W$ for any width constraint $W$. 
The same holds for non-guillotine partition as any guillotine partition is also a valid non-guillotine partition.
Thus, to determine the existence of a solution, 
it suffices to consider the case that $U$ is a singleton. 
In this case, every cut for partitioning $P$ must be a guillotine cut.

Without loss of generality, consider the case that $U=\{(0,1)\}$.
Consider the vertical decomposition of $P$ by drawing two vertical extensions from every vertex of $P$.
Observe that these extensions are guillotine cuts in $P$.
Let $\mathcal{T}$ denote the set of those trapezoids in the decomposition.
See Figure~\ref{fig:exist.condition}(a).

One can
observe that $\prob{P,W,U}$ has a solution if and only if
$\probt{W,U}$ has a solution for each $T\in \mathcal{T}$.
% Observe that $\prob{P,W,U}$ has a solution if and only if 
% $\probt{W,U}$ has a solution for each $T\in \mathcal{T}$.
The following lemma provides the conditions under which a trapezoid 
has a solution.

\begin{figure}[ht!]
  \centering
  \includegraphics[width=.9\textwidth]{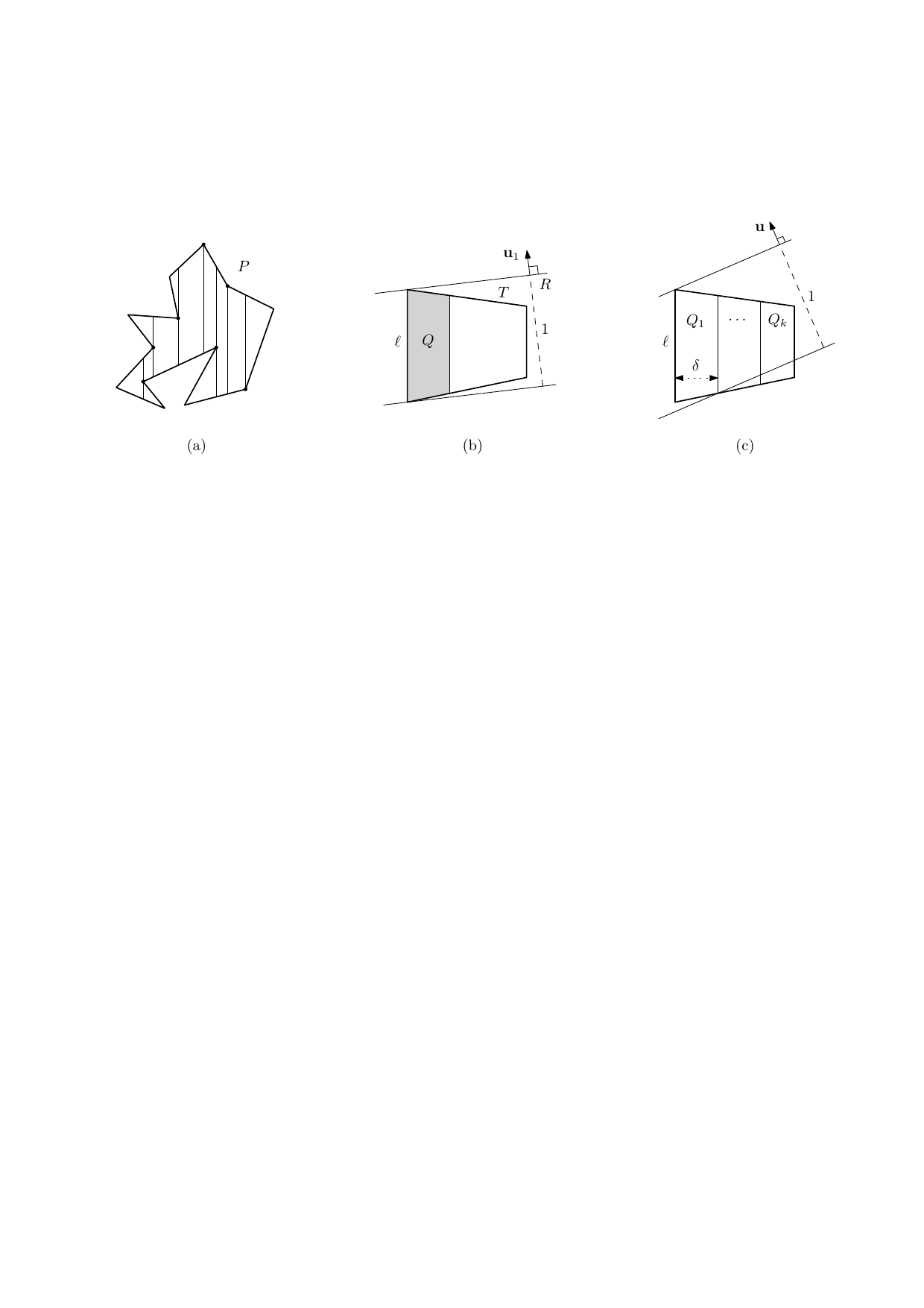}
  \caption{\small 
  (a) $P$ is partitioned into trapezoids by vertical cuts at each vertex of $P$.
  (b) Since $\ell$ is the longer vertical side of a trapezoid $T$, $T\subset R$ for 
  the unit strip $R$ with normal vector $\vv{u}_1$ containing $Q$.
  (c) $\delta$ is a positive value such that $\dwidth{\vv{u}}{Q(\delta)} = 1$, and 
  $Q$ is partitioned into a finite number of subpolygons, each having horizontal width of at most $\delta$.
  }
  \label{fig:exist.condition}
\end{figure}
\begin{lemma}\label{lem:exist.partition}
  Let $W\subseteq \usetp$ be a set of unit vectors. 
  Let $T$ be a trapezoid whose parallel opposite sides are vertical, with $\ell$ being the longer vertical side.
  Then $\probt{W,\{(0,1)\}}$ has a solution if and only if $\dwidth{\vv{u}}{\ell} < 1$ or  
    $\dwidth{\vv{u}}{\ell} = \dwidth{\vv{u}}{T} = 1$ for some $\vv{u} \in W$. 
  \end{lemma}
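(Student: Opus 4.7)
The plan is to route both directions through the leftmost piece of any candidate partition of $T$. Place $T$ so that its two vertical sides lie on the lines $x=a$ and $x=b$ with $a<b$, and without loss of generality let $\ell$ be the left side, of vertical length $h_L$. Write $y^T(x), y^B(x)$ for the upper and lower boundaries of $T$ on $[a,b]$, with slopes $s^T$ and $s^B$, and let $h(x)=y^T(x)-y^B(x)$. Since $U=\{(0,1)\}$, every cut is vertical, and because parallel cuts cannot close each other off inside $T$, any admissible cut is a vertical guillotine cut; consequently every partition of $T$ is a left-to-right sequence of vertical sub-trapezoids.

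For sufficiency, the case $\dwidth{\vv{u}}{\ell}=\dwidth{\vv{u}}{T}=1$ is handled by the trivial partition $\{T\}$, as in Figure~\ref{fig:exist.condition}(b). When $\dwidth{\vv{u}}{\ell}<1$ for some $\vv{u}=(\cos\theta,\sin\theta)\in W$, the width $\dwidth{\vv{u}}{Q_{[c_1,c_2]}}$ of the sub-trapezoid of $T$ supported on $[c_1,c_2]$ is continuous in $(c_1,c_2)$ and tends to $h(c_1)\sin\theta\le h_L\sin\theta=\dwidth{\vv{u}}{\ell}<1$ as $c_2\to c_1$. Uniform continuity on the compact set $\{(c_1,c_2):a\le c_1\le c_2\le b\}$ then yields a single $\delta>0$ such that every sub-trapezoid of horizontal width at most $\delta$ has $\vv{u}$-width at most one, and the equally spaced partition of $T$ into such strips is valid, as in Figure~\ref{fig:exist.condition}(c).

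For necessity, assume the condition fails and, for contradiction, that some partition $\mathcal{P}$ exists. Let $Q_L$ be its leftmost piece, so $\ell\subseteq Q_L\subseteq T$ with $Q_L$ a sub-trapezoid whose left side is $\ell$ and whose right side sits at $x=c$ for some $c\in(a,b]$. For every $\vv{u}\in W$ either $\dwidth{\vv{u}}{\ell}>1$, in which case $\ell\subseteq Q_L$ immediately yields $\dwidth{\vv{u}}{Q_L}>1$, or $\dwidth{\vv{u}}{\ell}=1$ and $\dwidth{\vv{u}}{T}>1$. The core of the proof handles this latter case. Writing $\vv{u}=(\cos\theta,\sin\theta)$ with $\sin\theta>0$, let $R$ be the unit strip of normal $\vv{u}$ containing $\ell$; its two boundary lines have slope $-\cot\theta$ and pass through the endpoints of $\ell$. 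Because $\ell$ is the longer vertical side, $h(x)$ is non-increasing, i.e., $s^T\le s^B$; a short calculation of the four vertex projections $\inner{\vv{u}}{\cdot}$ of $T$ shows that $\dwidth{\vv{u}}{T}>1$ is equivalent to $s^T>-\cot\theta$ or $s^B<-\cot\theta$, meaning the top or bottom edge of $T$ exits $R$ as one moves rightward from $\ell$.

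This slope identification is the main obstacle, since one must check that the apparently additional failure modes (top-right vertex of $T$ falling below the bottom of $R$, or bottom-right vertex rising above the top of $R$) are subsumed by the other two via the ordering $y^B(b)<y^T(b)$. Once in hand the conclusion is immediate: when $s^T>-\cot\theta$, the top-right vertex $(c,y^T(c))$ of $Q_L$ satisfies $\inner{\vv{u}}{(c,y^T(c))}-\inner{\vv{u}}{(a,y^T(a))}=(c-a)(\cos\theta+s^T\sin\theta)>0$, placing it strictly above the top boundary of $R$ and forcing $\dwidth{\vv{u}}{Q_L}>1$; the case $s^B<-\cot\theta$ is symmetric via the bottom-right vertex. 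Hence no $\vv{u}\in W$ witnesses $Q_L$ satisfying the unit-width constraint, contradicting the validity of $\mathcal{P}$. The conceptual subtlety worth emphasizing is that $\ell\subseteq Q_L$ alone does not force $\dwidth{\vv{u}}{Q_L}>1$ when $\dwidth{\vv{u}}{\ell}=1$; one genuinely needs the slopes of the top and bottom edges of $T$ near $\ell$, which is exactly why the conjunction $\dwidth{\vv{u}}{\ell}=\dwidth{\vv{u}}{T}=1$ appears as the second sufficient condition.
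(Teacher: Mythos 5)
Your proof is correct and follows essentially the same route as the paper's: both reduce to the piece containing $\ell$, use the unit strip with normal $\vv{u}$ determined by $\dwidth{\vv{u}}{\ell}=1$, and compare the slopes of the top and bottom edges of $T$ against the strip's boundary slope. The differences are only organizational---you run the necessity step in contrapositive form via explicit vertex projections of the leftmost slab (the paper instead argues that $Q'\subseteq R$ forces $T\subseteq R$), and you obtain the sufficiency spacing $\delta$ from uniform continuity rather than the paper's translate-each-slab-into-$Q_1$ argument---neither of which changes the substance.
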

\begin{proof}
  Suppose that $\probt{W,\{(0,1)\}}$ has a solution, 
  but no vector $\vv{u}\in W$ satisfies $\dwidth{\vv{u}}{\ell} < 1$ or  
    $\dwidth{\vv{u}}{\ell} = \dwidth{\vv{u}}{T} = 1$. 
    Clearly, $\dwidth{\vv{u}}{\ell} \le |\ell|$ by definition of width. 
    If $|\ell| < 1$, then every vector $\vv{u}\in W$ satisfies $\dwidth{\vv{u}}{\ell} < 1$.
    Thus, $|\ell| \ge 1$. 
  
  Since $U=\{(0,1)\}$, every partition of $T$ has a piece $Q$ containing $\ell$. 
  If $\dwidth{\vv{v}}{\ell} > 1$ for every $\vv{v} \in W$, then
  $\dwidth{\vv{v}}{Q}>1$, and thus there is no solution to $\probt{W,\{(0,1)\}}$, a contradiction.
  Therefore, $|\ell| \ge 1$, and there exists $\vv{v} \in W$ such that $\dwidth{\vv{v}}{\ell} \le \dwidth{\vv{v}}{Q} \le 1$.

  Observe that there are exactly two distinct unit vectors $\vv{u}_1,\vv{u}_2 \in \usetp$ 
  for $\ell$ with $|\ell|>1$ 
  such that $\dwidth{\vv{u}_1}{\ell} = \dwidth{\vv{u}_2}{\ell} = 1$. Also, $\vv{u}_1$ and 
  $\vv{u}_2$ are symmetric with respect to the $y$-axis.
  % along $x=0$. 
  If $|\ell|=1$, then $\vv{u}_1=\vv{u}_2=(0,1)$ is the only
  vector in $\usetp$ satisfying $\dwidth{\vv{u}_1}{\ell} = \dwidth{\vv{u}_2}{\ell} = 1$.
  For any vector $\vv{v} \in W \setminus\{\vv{u}_1, \vv{u}_2\}$, we have
  $\dwidth{\vv{v}}{\ell} > 1$ by the assumption. 
  % Thus, $W$ must contain at least one of $\vv{u}_1$ or $\vv{u}_2$.
  
  Let $\Pi$ be an optimal partition for $\probt{W,\{(0,1)\}}$,
  and let $Q'$ be a piece in $\Pi$ that contains $\ell$.
  Since $\Pi$ is a solution, % to $\probt{W,\{(0,1)\}_{/1}}$, 
  $Q'$ satisfies $W$, i.e., $\dwidth{\vv{v}}{Q'} \le 1$ for some $\vv{v} \in W$.
  By the assumption, we have $1 \le \dwidth{\vv{v}}{\ell}$. 
  Moreover, since $\ell \subseteq Q'$, $\dwidth{\vv{v}}{\ell} \le \dwidth{\vv{v}}{Q'}$. 
  Combining these inequalities, we obtain $\dwidth{\vv{v}}{\ell} = 1$, which implies that
  $\vv{v}$ must be $\vv{u}_1$ or $\vv{u}_2$.

  Consider the case $\vv{v} = \vv{u}_1$.
  Let $R$ be the unit strip with normal vector $\vv{u}_1$ that contains $Q'$.  
  Observe that $T$ and $Q'$ share the segment $\ell$ as well as non-empty portions of the top and bottom sides of $T$ 
  that are incident to $\ell$. See Figure~\ref{fig:exist.condition}(b).
  Since $T$ is a trapezoid, $T$ is also contained in $R$.
  Thus, $\dwidth{\vv{u}_1}{T} = 1$, a contradiction. Similarly, we have a contradiction
  for the case $\vv{v} = \vv{u}_2$.
  Therefore, there is a unit vector $\vv{u} \in W$ satisfying one of the conditions.

  Conversely, assume that there is a unit vector $\vv{u} \in W$ satisfying $\dwidth{\vv{u}}{\ell} < 1$ or  
    $\dwidth{\vv{u}}{\ell} = \dwidth{\vv{u}}{T} = 1$.
  Clearly, $\nopt{T,W,\{(0,1)\}} = 1$ if $\dwidth{\vv{u}}{T} = 1$.   
  We consider the case that $\dwidth{\vv{u}}{\ell} < 1$ and $\dwidth{\vv{u}}{T} > 1$.
  Without loss of generality, we assume that $\ell$ lies on $x=t$ and appears on the left side of $T$.
  Let $\delta>0$ be the largest value such that the subpolygon $Q$ in the partition of $T$ 
  induced by $x=t+\delta$ and lying to the left of $x=t+\delta$ satisfies $\dwidth{\vv{u}}{Q} \le 1$.
  Let $\Pi=\{Q_1,Q_2, \ldots, Q_k\}$ be the partition of $T$ obtained by drawing lines at $x=t+i\delta$ 
  for $i=1,2,\ldots$, 
  % one-by-one from $i=1$ 
  until the rightmost subpolygon $Q_k$ 
  has horizontal width at most $\delta$. 
  Observe that each $Q_j$ has its top and bottom edges 
  lying on the top and bottom edges of $T$, respectively. 
  Since $\ell$ is the longer vertical side of $T$, 
  every $Q_j$ for $j=2,\ldots,k$ can be translated to lie inside $Q_1$. 
  It follows that $\dwidth{\vv{u}}{Q_j}\le \dwidth{\vv{u}}{Q_1} \le 1$ for every $j = 2,\ldots, k$.
  Thus, $\probt{W,\{(0,1)\}}$ has a solution.
\end{proof}

In summary, $\prob{P,W,U}$ has a solution if $U$ contains two distinct vectors in $\usetp$. %$\usetp\setminus\{(-1,0)\}$.
It can be done in $O(1)$ time 
by checking whether $U$ is a singleton, since $U \subseteq \usetp$.
% \complain{How to do this? In how much time? In $O(|U|)$ time?}
For $U$ consisting of one vector in $\usetp$, 
we first perform a trapezoidal decomposition of $P$, and then 
check whether each trapezoid satisfies the condition stated in Lemma~\ref{lem:exist.partition}.
By employing a trapezoidal decomposition technique~\cite{Chazelle1990}, 
we can determine whether $\prob{P,W,U}$ has a solution in $O(n+|W|)$ time
for any finite set $W \subseteq \usetp$.

\begin{corollary}\label{cor:existential.test}
  We can determine whether a simple $n$-gon $P$ has a solution to $\prob{P,W,U}$
  for any finite sets $W,U\subseteq \usetp$ in $O(n + |W|)$ time.  
\end{corollary}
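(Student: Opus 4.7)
The plan is to dispatch on $|U|$, matching the dichotomy emphasized in the text. I first handle the trivial cases: if $W=\emptyset$, report infeasible; if $|U|\ge 2$, report feasible, since two non-parallel families of guillotine cuts produce pieces of arbitrarily small diameter and hence of width less than $1$ in every direction. Both checks take $O(1)$ time.

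The substantive case is $|U|=1$. After a rotation I take $U=\{(0,1)\}$ and apply Chazelle's trapezoidal decomposition algorithm~\cite{Chazelle1990} to obtain, in $O(n)$ time, the set $\mathcal{T}$ of $O(n)$ trapezoids of the vertical decomposition of $P$. By the discussion preceding Lemma~\ref{lem:exist.partition}, $\prob{P,W,U}$ is feasible iff $\probt{W,U}$ is feasible for every $T\in\mathcal{T}$, which the lemma characterizes in terms of the longer vertical side $\ell$ of $T$.

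The main obstacle is to avoid the $O(n\,|W|)$ cost of a naive per-trapezoid scan of $W$. The key observation is that for any vertical segment $\ell$ and any $\vv{u}=(\cos\theta_{\vv{u}},\sin\theta_{\vv{u}})\in\usetp$ one has $\dwidth{\vv{u}}{\ell}=|\ell|\,|\sin\theta_{\vv{u}}|$, so a single vector $\vv{u}^*\in W$ minimizing $|\sin\theta_{\vv{u}}|$ is simultaneously optimal for the strict-inequality branch of Lemma~\ref{lem:exist.partition} across \emph{every} trapezoid. In one linear scan of $W$, I compute $\vv{u}^*$ and also record whether $W$ contains $\vv{u}^{**}$, the reflection of $\vv{u}^*$ across the $y$-axis in direction space; these two are the only unit vectors in $\usetp$ capable of realizing $\dwidth{\vv{u}}{\ell}=1$ for any given vertical $\ell$, exactly as noted in the proof of the lemma. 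Each trapezoid $T$ is then tested in $O(1)$ time: compute $L=|\ell|$, compare $L\,|\sin\theta^*|$ with $1$, declare feasibility or infeasibility on the strict inequalities, and in the equality case additionally test $\dwidth{\vv{u}^*}{T}=1$ or, failing that, $\vv{u}^{**}\in W$ with $\dwidth{\vv{u}^{**}}{T}=1$. Summing the $O(n)$ decomposition, $O(|W|)$ preprocessing, and $O(n)$ trapezoid tests yields the claimed $O(n+|W|)$ bound.
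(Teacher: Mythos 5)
Your proposal is correct and follows essentially the same route as the paper: dispatch on $|U|$, Chazelle's trapezoidal decomposition for the singleton case, selection of a single representative vector of $W$ minimizing the width of vertical segments, and an $O(1)$ test per trapezoid via Lemma~\ref{lem:exist.partition}. Your explicit handling of the boundary case $\dwidth{\vv{u}^*}{\ell}=1$ by also recording whether the mirrored direction $\vv{u}^{**}$ lies in $W$ is in fact slightly more careful than the paper's one-vector check, and it correctly reflects the two critical directions identified in the lemma's proof.
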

\begin{proof}
  Consider the case that $U$ consists of one vector in $\usetp$. Without loss of generality, let  $U=\{(0,1)\}$.
  The trapezoidal decomposition of $P$ can be done in $O(n)$ time~\cite{Chazelle1990}. 
  We find the vector $\vv{v} \in W$ that maximizes $|\inner{\vv{v}}{(0,1)}|$ in $O(|W|)$ time. 
  This corresponds to the direction that minimizes $\dwidth{\vv{v}}{\ell}$ for any vertical segment $\ell$. 
  For each trapezoid $T$, we test in $O(1)$ time whether the selected vector $\vv{v}$ satisfies the condition. 
  If $\dwidth{\vv{v}}{\ell} \neq 1$, we can directly determine whether $\prob{T,W,U}$ has a solution. 
  In the case that $\dwidth{\vv{v}}{\ell}=1$,
  % When the vertical segment $\ell$ within $T$ satisfies $\dwidth{\vv{v}}{\ell} = 1$, 
  $\dwidth{\vv{v}}{T} = 1$ if and only if its top and bottom sides lie in the unit strip $R$ with normal vector $\vv{v}$ that contains $\ell$.
  Since the number of trapezoids is $O(n)$, the total running time is $O(n + |W|)$.
\end{proof}

\section{Monotonicity of minimum partition numbers}\label{sec.monotonicity.partition.num}
In this section, we assume that $\prob{P,W,U}$ has a partition
satisfying the constraints.  
The necessary and sufficient condition for feasibility is presented in 
Section~\ref{sec.existence.solutions}.
% the full
% version. % Appendix~\ref{sec.existence.solutions}.
We show $\nopt{Q,W,U}\le \nopt{P,W,U}$ for any subpolygon $Q$ of $P$
that satisfies a certain condition.  For both guillotine and
non-guillotine partitions, we identify sufficient conditions on $Q$
that ensure this monotonicity.  When the constraints $W$ and $U$ are
clear from context, we abbreviate $\nopt{P,W,U}$ as $\nopt{P}$.

\begin{figure}[ht!]
  \centering
  \includegraphics[width=0.9\textwidth]{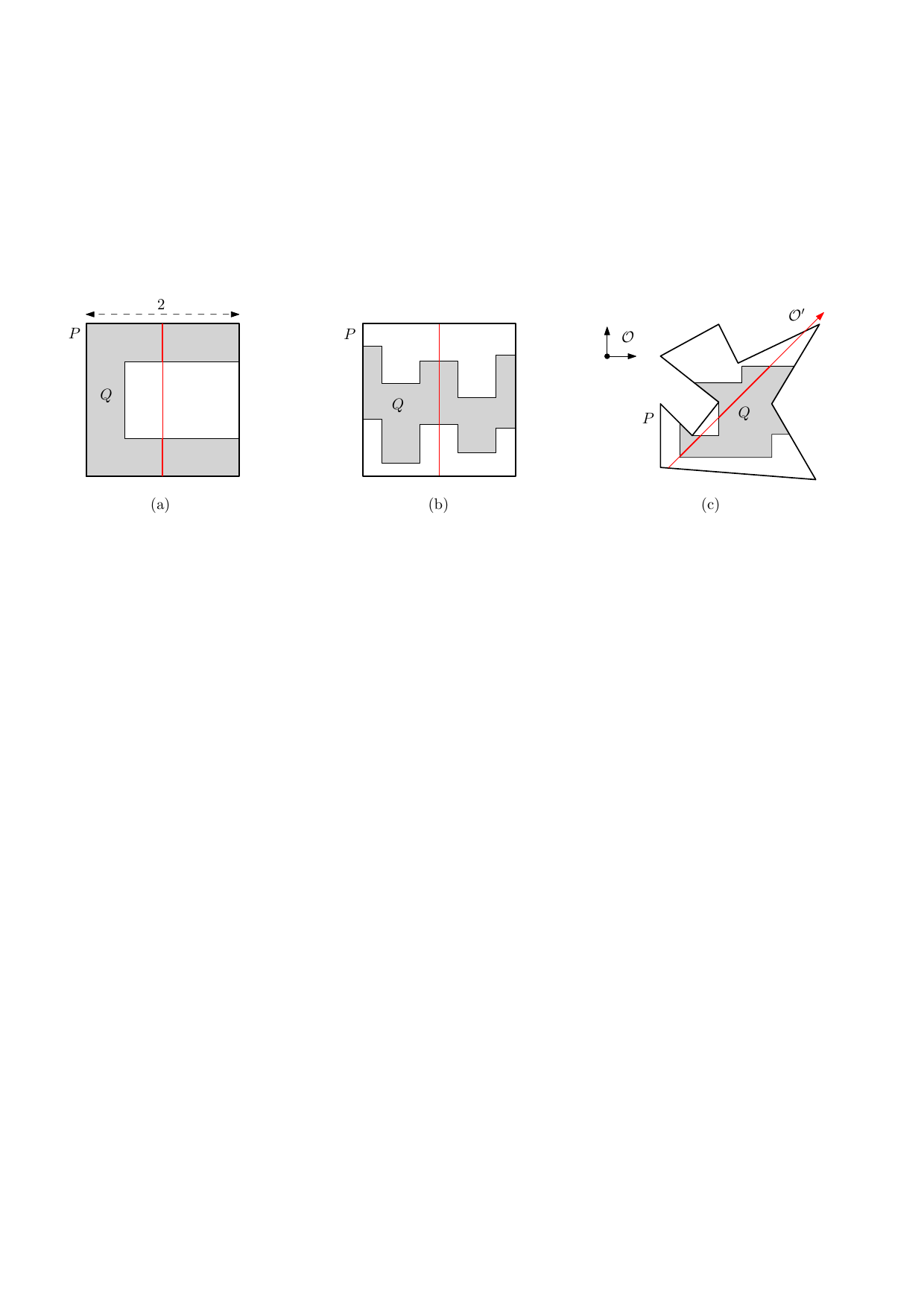}
  \caption{\small (a) A polygon $P$ and its subpolygon $Q$ (gray) with
    $\nopt{P,W,U} = 2$ and $\nopt{Q,W,U} > 2$ for $W = \{(1,0)\}$ and
    $U = \{(0,1)\}$.  (b) $Q$ (gray) is $U$-convex (or
    $\overline{W}$-convex) with respect to $P$, resulting in
    $\nopt{Q,W,U} \le \nopt{P,W,U}$.  (c) $Q$ (gray) of $P$ is
    $\oset$-convex, but not $\oset'$-convex with respect to $P$, where
    $\oset = \{(1,0),(0,1)\}$ and
    $\oset'=\{(\frac{\sqrt{2}}{2},\frac{\sqrt{2}}{2})\}$.}
  \label{fig:restricted.monotonicity}
\end{figure}

Let $P$ be a square of side length 2, and let $Q$ be a subpolygon of
$P$ as shown in Figure~\ref{fig:restricted.monotonicity}(a).  Consider
the instance $\prob{P,W,U}$ with $W = \{(1,0)\}$ and $U = \{(0,1)\}$.
Since $U$ is a singleton, every cut must be a guillotine cut.  Observe
that neither $P$ nor $Q$ satisfies unit-width constraint $W$.  A
vertical cut halving $P$ yields a feasible partition of two pieces.
No vertical cut, however, in $Q$ partitions $Q$ into two pieces, each
with horizontal width at most 1.  Thus, $\nopt{Q} >2 = \nopt{P}$,
implying that the inclusion $Q \subseteq P$ alone is not sufficient to
ensure $\nopt{Q}\le \nopt{P}$.

Such failures are common in minimum partitioning problems, as the
optimal partition number depends on the geometric complexity of $Q$.
Two typical approaches are restricting the geometry of $Q$ relative to
$P$, and constraining the partition class to specific families, such
as the guillotine (binary) class~\cite{Bezdek1996}.  These alter the
structural properties of feasible solutions.

\subsection{Restricted-orientation convexity}
Rawlins~\cite{Rawlins1987} introduced the \emph{restricted-orientation
  convexity} as a generalization of standard convexity in Euclidean
space.  For $\oset \subseteq \usetp$, a set $X$ is
\emph{$\oset$-convex} if, for every line $\ell$ parallel to a vector
in $\oset$, $X\cap \ell$ is connected (we regard an empty set as being
connected).  When $\oset = \usetp$, $\oset$-convexity coincides with
standard convexity in $\mathbb{R}^2$.  We extend this concept to
subpolygons of simple polygons with respect to guillotine cuts.
\begin{definition}
  Let $Q$ be a subpolygon of a polygon $P$, and $\oset$ be a set of
  unit vectors.  Then $Q$ is $\oset$-convex with respect to $P$ if its
  intersection with every guillotine cut in $P$ parallel to a vector
  in $\oset$ is connected.
\end{definition}
Figure~\ref{fig:restricted.monotonicity}(c) shows a subpolygon $Q$
that is $\oset$-convex with respect to $P$ for
$\oset=\{(0,1), (1,0)\}$, but not $\oset'$-convex for
$\oset'=\{(\frac{\sqrt{2}}{2},\frac{\sqrt{2}}{2})\}$ since a
guillotine cut in $P$ parallel to
$(\frac{\sqrt{2}}{2},\frac{\sqrt{2}}{2})$ intersects $Q$ in at least
two connected components.

Observe that $\oset$-convexity with respect to a polygon $P$ is
equivalent to the \emph{geodesic convexity} within $P$ when
$\oset = \usetp$.
% See Appendix~\ref{appendix.geodesic.convexity} for details.
The following theorem gives sufficient conditions for monotonicity of
the minimum partition numbers.
\begin{theorem}[Monotonicity in Polygon
  Containment]\label{thm:monotonicity.containment}
  Let $U$ and $W$ be sets of unit vectors, and let $Q$ be a subpolygon
  of a polygon $P$.  Assume that $\prob{P,W,U}$ has a solution.
  \begin{itemize}\denseitems
  \item $\nopt{Q} \le \nopt{P}$ for guillotine partitions if $Q$ is
    $U$-convex with respect to $P$.
  \item $\nopt{Q} \le \nopt{P}$ for non-guillotine partitions if $Q$
    is $\overline{W}$-convex with respect to $P$.
  \end{itemize}
\end{theorem}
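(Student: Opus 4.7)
The plan is to induct on $\nopt{P,W,U}$. Pick a root cut $c$ of some optimal guillotine partition of $P$; $c$ is parallel to some $\vv{v}\in U$ and splits $P$ into $P_1$, $P_2$ with $\nopt{P_1}+\nopt{P_2}\le \nopt{P}$. By $U$-convexity of $Q$ with respect to $P$, $c\cap Q$ is connected, hence a (possibly empty) single segment. If it is empty, then $Q\subseteq P_1$ or $Q\subseteq P_2$ by connectedness of $Q$; otherwise $c\cap Q$ is a chord of $Q$ between two points of $\partial Q$, so it splits $Q$ into the subpolygons $Q_i:=Q\cap P_i$ for $i=1,2$. Before recursing one must check that $Q_i$ is $U$-convex with respect to $P_i$: given any guillotine cut $c'$ of $P_i$ parallel to some $\vv{v}'\in U$, extend $c'$ along its supporting line to the maximal interval $c''$ lying in $P$. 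Then $c''$ is a guillotine cut of $P$, so $c''\cap Q$ is connected, and since $c'\subseteq c''$ and $c'\subseteq P_i$ we have $c'\cap Q_i = c'\cap(c''\cap Q)$, an intersection of two intervals on a single line, hence connected. The induction then yields $\nopt{Q_i}\le \nopt{P_i}$ for each $i$, so $\nopt{Q}\le \nopt{Q_1}+\nopt{Q_2}\le \nopt{P_1}+\nopt{P_2}\le \nopt{P}$.

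\textbf{Non-guillotine case.} Fix an optimal partition $\Pi^\ast=\{P_1,\ldots,P_k\}$ of $P$ and for each $i$ a witness $\vv{w}_i\in W$ with $\dwidth{\vv{w}_i}{P_i}\le 1$; then $P_i$ lies in a unit strip $S_i$ with normal $\vv{w}_i$, whose bounding lines are parallel to the unique $\vv{v}_i\in \overline{W}$ perpendicular to $\vv{w}_i$. The natural candidate partition of $Q$ is the induced refinement $\Pi_Q$ consisting of all connected components of $Q\cap P_i$ as $i$ ranges over $[k]$. Each component sits inside its $P_i\subseteq S_i$ and so satisfies $W$ automatically, and the whole problem reduces to establishing the combinatorial inequality $|\Pi_Q|\le k$.

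The main obstacle is precisely this counting bound, because without $\overline{W}$-convexity the number of components of $Q\cap P_i$ is a priori unbounded in $k$. My plan is a reconfiguration argument of the kind developed in Sections~\ref{sec:reconfig.restricted.partition} and~\ref{sec.analysis.reconfiguration}: starting from $\Pi^\ast$, iteratively apply local moves that reassign an excess component of some $Q\cap P_i$ to an adjacent piece $P_j$, exploiting the fact that the bounding lines of each strip $S_j$, being parallel to $\vv{v}_j\in \overline{W}$, cut $P$ into guillotine cuts whose $Q$-intersections are intervals. This interval structure is what one should invoke to show that whenever $Q\cap P_i$ splits into components separated by an arc of $\partial P_i\setminus \partial Q$, at least one of those components can be absorbed into an adjacent strip $S_j$ while still respecting its unit-width bound. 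A potential argument on the total number of components of the induced refinement of $Q$, strictly decreasing under each legal move, would then terminate the process at a partition of $Q$ with at most $k$ pieces. The technical heart will be to certify that such a legal move always exists whenever $|\Pi_Q|>k$ and that it creates no new violations elsewhere.
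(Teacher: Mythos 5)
Your guillotine argument is correct and is in substance the paper's own proof, reorganized as an induction on the cut tree: the paper restricts the entire cut sequence of a feasible guillotine partition to $Q$ and, exactly as you do, extends a cut of a subpiece to a guillotine cut of $P$ so that $U$-convexity yields connectedness of its trace on $Q$; your heredity observation (that $Q\cap P_i$ is $U$-convex with respect to $P_i$) packages this nicely. Two cosmetic points: when $c\cap Q$ is nonempty but misses $\myint{Q}$ you should fall back to the ``$Q\subseteq P_1$ or $Q\subseteq P_2$'' case, and the extension $c''$ should stop at the first contact with $\mybd{P}$ (the maximal segment inside $P$ may contain a reflex vertex of $P$ in its relative interior and then is not a guillotine cut); neither affects the argument.

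The non-guillotine half is where the genuine gap lies, and you flag it yourself: you never show that a ``legal move'' exists, and that existence is essentially the whole theorem (it occupies Sections~\ref{sec:reconfig.restricted.partition} and~\ref{sec.analysis.reconfiguration} of the paper). Two concrete problems with your plan. First, your stated reduction target, $|\Pi_Q|\le k$ for the induced refinement, is false as stated: even when $Q$ is $\overline{W}$-convex with respect to $P$, the restriction of an optimal partition can have strictly more connected pieces than $\nopt{P}$ (this is exactly Figure~\ref{fig:restricted.partition}, where $\nopt{P}=2$ but the restriction has four components), so merging moves are mandatory, not optional. Second, the move you propose --- reassigning an excess component $C$ of $Q\cap P_i$ to an adjacent piece $P_j$ --- is the wrong primitive: $C$ lies in the unit strip of $P_i$, but there is no reason it lies in the unit strip $S_j$ of any neighboring piece, so the claimed absorption ``respecting its unit-width bound'' can simply be impossible and your decreasing potential never gets started. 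The paper never transfers whole components between pieces; it keeps one piece per $P_i$ and reconnects the fragments of $P_i\cap Q$ to one another through thin layer segments inside a corridor along $\mybd{Q}$, built from zigzag curves in directions of $U$ (this is where $|U|\ge 2$ is used). The two difficulties this creates are precisely what your sketch omits: (i) distinct pieces' reconnections must not reallocate the same corridor region, which the paper handles with the interval sets $\mathcal{J}_Q$, the containment DAG $G_{\mathcal{J}}$, its transitive reduction, and the level-based layer assignment (Lemmas~\ref{lem:proper.containment.intervals}--\ref{lem:core.mutual.connect}); and (ii) each reallocated region must stay inside $P_i$'s own unit strip $H_i$, which is the only place $\overline{W}$-convexity is actually used, via the turning-point argument (Lemmas~\ref{lem:lying.interior.P} and~\ref{lem:num.turning.points}): the arc of $\mybd{Q}$ between two fragments of $P_i$ lies in $\myint{P}$ and can have at most one turning point in the direction normal to $H_i$, since two turning points would produce a guillotine cut of $P$ parallel to a vector of $\overline{W}$ meeting $Q$ in two components. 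Your proposal invokes $\overline{W}$-convexity only in passing and never at a point where it does work, so the second bullet of the theorem remains unproven.
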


Revisit the subpolygons $Q$ in
Figure~\ref{fig:restricted.monotonicity}(a,b).  In (a), $Q$ is neither
$U$-convex nor $\overline{W}$-convex with respect to $P$, whereas in
(b), $Q$ satisfies both conditions, and
Theorem~\ref{thm:monotonicity.containment} implies that
$\nopt{Q} \le \nopt{P} = 2$, for both guillotine and non-guillotine
partitions.

\begin{figure}[t!]
  \centering
  \includegraphics[width=0.65\textwidth]{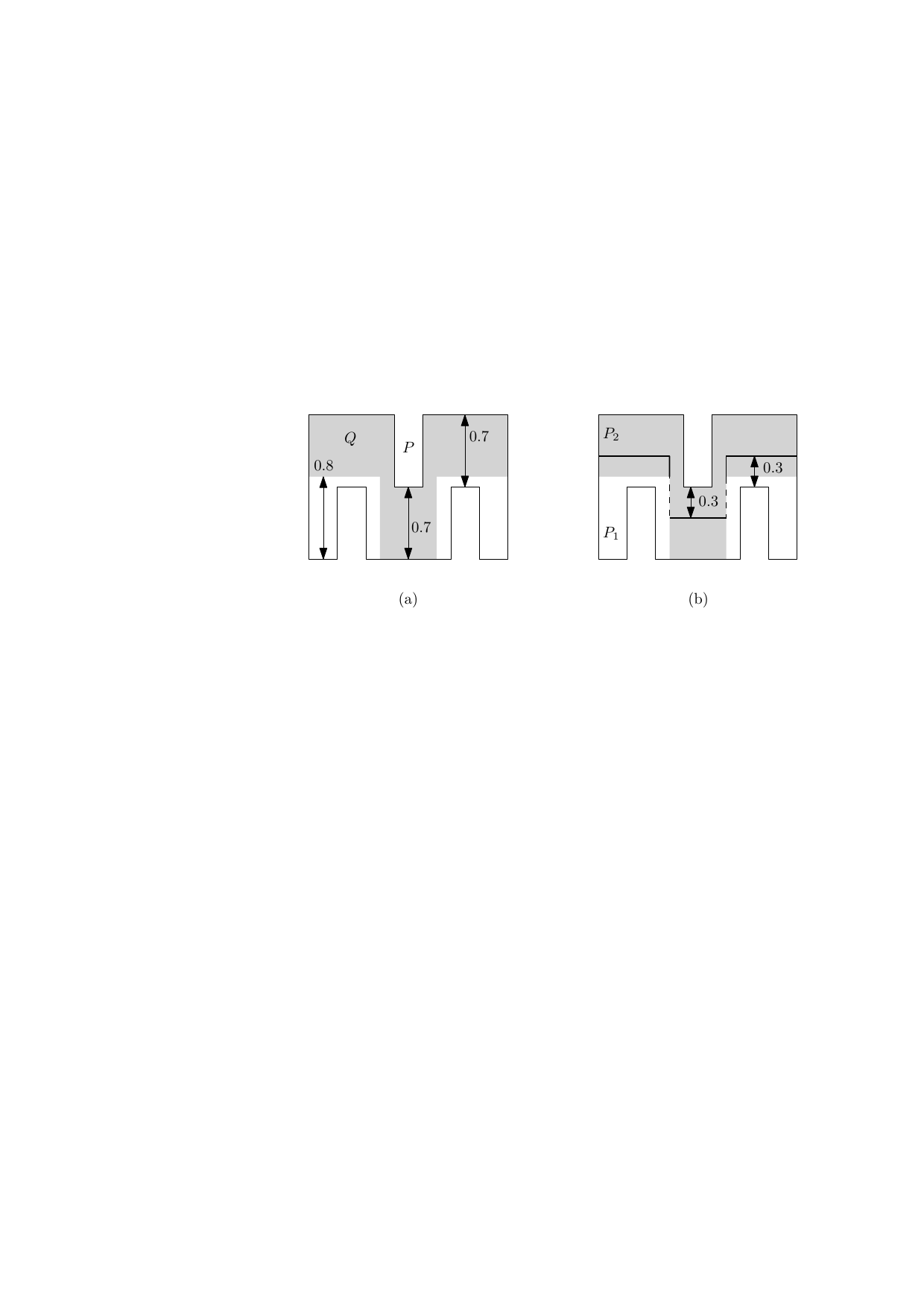}
  \caption{\small A non-guillotine partition of $P$, where
    $W=\{(0,1)\}$ and $U =\{(0,1),(1,0)\}$.  (a) The polygon $P$ and a
    subpolygon $Q$, both having vertical widths greater than 1.  (b)
    An optimal non-guillotine partition of $P$ into two pieces, but
    four when restricted to $Q$.}
  \label{fig:restricted.partition}
\end{figure}

\subsection{Monotonicity in guillotine partitions}\label{sec:monotonicity.guillotine}
We prove Theorem~\ref{thm:monotonicity.containment} for the
  guillotine case. Let $\Pi = \{P_1, \ldots, P_m\}$ be a guillotine
partition of $P$ feasible to $\prob{P,W,U}$, and let
  $\Pi[Q] = \{P_i \cap Q\}_{i=1,\ldots,m}$ be its restriction to a
  subpolygon $Q$ that is $U$-convex with respect to $P$. Then,
$\bigcup_{i=1}^m (P_i \cap Q) = Q$, and thus, $\Pi[Q]$ also forms a
partition of $Q$.  Note that each piece in $\Pi[Q]$ satisfies
unit-width constraint $W$, as every piece is a subset of some $P_i$
that satisfies the constraint.  To show $\nopt{Q} \le \nopt{P}$, it
suffices to verify two aspects: (1) $\Pi[Q]$ is a guillotine partition
of $Q$ and (2) $|\Pi[Q]|\le m$.

Since $\Pi$ is a guillotine partition, the cuts used for $\Pi$ are
partially ordered by their precedence in the partitioning process.
Let $\mathcal{C}=\langle c_1,\ldots,c_{m-1}\rangle$ denote a sequence
of guillotine cuts that produces $\Pi$.  Then each $c_i \cap Q$ is
connected as $c_i$ is a cut in $P$ with a direction in $U$ and $Q$ is
$U$-convex with respect to $P$.  Let
$\mathcal{D}=\langle d_1,\ldots, d_{m'}\rangle$ be the
  subsequence of $\langle c_1 \cap {Q}, \ldots, c_{m-1}\cap {Q} \rangle$ 
  consisting of those satisfying $c_i \cap \myint{Q}\neq\emptyset$ for $i=1,\ldots,m-1$. Note
that each $d_j$ for $j\in[m']$ is a cut in $Q$ and $m'\le m-1$.  Also,
observe that $\Pi[Q]$ is induced by $\mathcal{D}$.  It suffices to
show that $\mathcal{D}$ is indeed a sequence of guillotine cuts in
$Q$: each $d_j\in \mathcal{D}$ is a guillotine cut in one subpolygon
obtained by applying $\langle d_1, \ldots, d_{j-1} \rangle$ to $Q$.

Let $i$ be the smallest index in $[m-1]$ such that
$c_{i} \cap \myint{Q}\neq\emptyset$.  Since no cut in
$\langle c_1,\ldots, c_{i-1}\rangle$ intersects $\myint{Q}$, there is
a subpolygon in the partition of $P$ by
$\langle c_1,\ldots, c_{i_1 - 1}\rangle$ that contains $Q$.  Observe
that $c_{i}$ is a guillotine cut in this subpolygon that is aligned
with a direction in $U$.  Since $Q$ is $U$-convex with respect to $P$,
\ $d_1=c_{i} \cap Q$ is a line segment, and thus, $d_1$ is a
guillotine cut in $Q$.

We proceed by induction on $j$ with $1 < j \le m'$, that
$\langle d_1, \ldots, d_{j-1}\rangle$ forms a sequence of guillotine
cuts in $Q$.  Among the pieces of $Q$ partitioned by the sequence, let
$Q_j$ denote the one containing $d_j$, where $d_j$ is a cut in $Q_j$.
By definition, there exists an index $k\in [m]$ such that
$d_j = c_{k} \cap Q$.  Let $P_k$ be the piece in the partition of $P$
by $\langle c_1,\ldots, c_{k-1}\rangle$ that contains $c_k$.

To see that $d_j$ is a guillotine cut of $Q_j$, recall that
  $d_j = c_k \cap Q_j$, where $c_k$ is a guillotine cut of $P_k$ and
  $Q_j \subseteq P_k$.  Let $c_k'$ be the guillotine cut in $P$
  obtained by extending $c_k$ until it touches $\mybd{P}$.  Since $Q$
  is $U$-convex, $c_k'\cap Q$ is connected, and hence so is its
  subsegment $c_k \cap Q = d_j$.  Moreover, as $c_k$ spans
  $\mybd{P_k}$, its restriction to $Q_j$ necessarily touches
  $\mybd{Q_j}$ at both endpoints.  Thus, $d_j$ is a guillotine cut in
  $Q_j$. We conclude that
$\mathcal{D}=\langle d_1, \ldots, d_{m'}\rangle$ induces a guillotine
partition of $Q$. Since $m'\le m$, we have $|\Pi[Q]|\le m$.

\begin{lemma}\label{lem:guillotine.monotonicity}
  Let $P$ be a polygon and let $\Pi = \{P_1,\ldots,P_m\}$ be a
  solution to the problem $\prob{P,W,U}$ for guillotine cuts.  If a
  subpolygon $Q$ of $P$ is $U$-convex with respect to $P$, the
  restricted partition $\{Q\cap P_i\}_{i = 1, \ldots, m}$ is a
  solution to $\prob{Q,W,U}$ for guillotine cuts with at most $m$
  pieces.
\end{lemma}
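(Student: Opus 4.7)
The plan is to take any feasible guillotine partition $\Pi = \{P_1, \ldots, P_m\}$ of $P$ together with a sequence of guillotine cuts $\mathcal{C} = \langle c_1, \ldots, c_{m-1}\rangle$ that produces it, and restrict everything to $Q$. The restricted family $\Pi[Q] = \{Q \cap P_i\}_{i=1}^m$ is automatically a partition of $Q$ with pairwise disjoint interiors since $\Pi$ is. Each piece $Q \cap P_i$ lies inside $P_i$, so $\dwidth{\vv{v}}{Q \cap P_i} \le \dwidth{\vv{v}}{P_i} \le 1$ for the same witness vector $\vv{v}\in W$ that works for $P_i$; thus every piece of $\Pi[Q]$ satisfies $W$. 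The real content of the lemma is that $\Pi[Q]$ is itself a guillotine partition and that $|\Pi[Q]| \le m$.

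To get at this, I would form the subsequence $\mathcal{D} = \langle d_1, \ldots, d_{m'}\rangle$ of $\langle c_1 \cap Q, \ldots, c_{m-1}\cap Q\rangle$ consisting of those $c_i \cap Q$ with $c_i \cap \myint{Q} \neq \emptyset$, skipping cuts that miss $Q$'s interior (these contribute nothing to $\Pi[Q]$). Since each $c_i$ has direction in $U$ and $Q$ is $U$-convex with respect to $P$, each $c_i \cap Q$ is connected, hence a single line segment. Immediately $m' \le m - 1$, so $|\Pi[Q]| \le m' + 1 \le m$, which settles the cardinality claim.

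The remaining and more delicate step is to verify, by induction on $j$, that $d_j$ is a genuine guillotine cut in the unique piece $Q_j$ of $Q$ produced by applying $\langle d_1, \ldots, d_{j-1}\rangle$. Writing $d_j = c_k \cap Q$, the key inclusion to establish is $Q_j \subseteq P_k$, where $P_k$ is the $P$-piece that $c_k$ splits after $\langle c_1,\ldots, c_{k-1}\rangle$ has been applied. Once this is in hand, $c_k$ spans $P_k$ between two points of $\mybd{P_k}$, so its restriction to $Q_j$ has both endpoints on $\mybd{Q_j}$, which together with connectedness makes $d_j$ a guillotine cut in $Q_j$.

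The main obstacle is the inclusion $Q_j \subseteq P_k$: I need to argue that the $P$-cuts dropped from the restricted sequence (those missing $\myint{Q}$) do not alter the enclosing $P$-piece around $Q_j$. The natural way is to show that at every dropped step, $Q$ lies entirely inside exactly one of the two generated subpieces, so the "$P$-piece containing $Q_j$" evolves in lockstep with the "$Q$-piece containing $d_j$" along the filtered sequence. Once that lockstep correspondence is nailed down, the induction closes and the lemma follows.
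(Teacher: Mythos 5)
Your outline follows the paper's proof almost step for step: restrict the cut sequence to $Q$, keep only those cuts meeting $\myint{Q}$, get $|\Pi[Q]|\le m'+1\le m$, and induct to show each $d_j$ is a guillotine cut of the current piece $Q_j$ via the inclusion $Q_j\subseteq P_k$. But the step you yourself single out as the crux is exactly the step you leave unproven, and the route you sketch for it fails as stated. You propose to show that ``at every dropped step, $Q$ lies entirely inside exactly one of the two generated subpieces.'' That is false in general: as soon as one applied cut has crossed $\myint{Q}$, the subpolygon $Q$ straddles two or more pieces of the intermediate partition of $P$, so at a later dropped step $Q$ is not contained in the piece being split at all (take $P$ a square, $Q$ a central square, a first vertical cut through $Q$, then a horizontal cut confined to one of the two resulting pieces, above $Q$). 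So no ``lockstep'' correspondence for $Q$ as a whole can be set up, and your induction does not close as written.

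The correct justification is local, not global. The part of $\mybd{P_k}$ lying in $\myint{P}$ is covered by the earlier cuts $c_1,\ldots,c_{k-1}$. Each such $c_i$ either misses $\myint{Q}$, in which case it certainly misses $\myint{Q_j}\subseteq\myint{Q}$, or it meets $\myint{Q}$, in which case its restriction is some $d_{j'}$ with $j'<j$ that has already been applied to $Q$ and hence lies on the boundary of the current pieces of $Q$, again avoiding $\myint{Q_j}$. Since $\myint{Q_j}$ is connected, contained in $\myint{P}$, and contains relative-interior points of $d_j\subseteq c_k$ (so it meets $\myint{P_k}$), it can never reach the complement of $P_k$ without crossing $\mybd{P_k}\cap\myint{P}$; hence $Q_j\subseteq P_k$, and then, as you say, $c_k$ spanning $\mybd{P_k}$ forces both endpoints of $d_j$ onto $\mybd{Q_j}$. (The paper states $Q_j \subseteq P_k$ as immediate from the construction; the above is what actually backs it up.) One further precision point: $U$-convexity is defined with respect to guillotine cuts of $P$, whereas $c_k$ for $k\ge 2$ is only a guillotine cut of an intermediate piece, so to get connectivity of $c_k\cap Q$ you should, as the paper does, extend $c_k$ to a full chord $c_k'$ of $P$, apply $U$-convexity to $c_k'\cap Q$, and observe that $c_k\cap Q$ is an intersection of two collinear segments.
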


\section{Reconfiguration of restricted non-guillotine partitions}\label{sec:reconfig.restricted.partition}
Let $Q$ be a subpolygon of $P$ that is $\overline{W}$-convex with
respect to $P$.  The monotonicity $\nopt{Q} \le \nopt{P}$ trivially
holds when $\nopt{Q} = 1$.  Also, any feasible partition is guillotine
when $U$ is a singleton.  Assume that $\nopt{Q} > 1$ and $U$ contains
at least two distinct vectors.

Let $\Pi = \{P_1,\ldots, P_m\}$ be any feasible partition of $P$ to
$\prob{P,W,U}$.  Its restriction to $Q$, defined as
$\Pi[Q] =\{P_i \cap Q\}_{i=1,\ldots,m}$, is a feasible partition of
$Q$ to $\prob{Q,W,U}$.  However, some regions $P_i \cap Q$ may be
disconnected, even when $Q$ is $\overline{W}$-convex with respect to
$P$ (See Figure~\ref{fig:restricted.partition}(a--b)).  To address
this, we modify the cuts in $\Pi[Q]$ to reconnect disjoint fragments
into connected regions while preserving feasibility for
$\prob{Q, W, U}$.

Consider any element $P_i$ in the partition $\Pi$ of $P$ such that
$\myint{P_i} \cap \myint{Q} \neq \emptyset$.  The intersection
$\myint{P_i} \cap \myint{Q}$ consists of open connected components.
We define $R_i$ as \[R_i \coloneqq \left\{ \, \mycl{X}
    \,\middle |\, X \text{ is a connected component of } \myint{P_i}
    \cap \myint{Q} \text{ with } \mycl{X} \cap \mybd{Q} \neq \emptyset
    \, \right\}.\]

Let $R_i = \{C_1, C_2, \ldots, C_t\}$.  Note that each $C_j$ is a
subpolygon of $Q$ with positive area that touches $\mybd{Q}$.  To
connect elements in $R_i$ into a single piece, we reconfigure $\Pi[Q]$
by iteratively performing a process, called \emph{reallocation}.  Let
$X$ be a connected region in $Q$.  We define the reallocation of $X$
to $P_i$ as the operation that modifies $\Pi[Q]$ by expanding the
region assigned to $P_i$ so that it includes $X$.  This can be done by
adding new boundaries along $\mybd{X} \setminus P_i$ and removing
those along $\mybd{P_i} \setminus X$.  If $X$ and $P_i$ intersect in
their interiors, or share a boundary segment of positive length, then
$X \cup P_i$ appears as a single piece in the resulting partition.  We
say that the region $X$ is \emph{reallocated} to $P_i$.

\subsection{\texorpdfstring{Construction and layering of corridor of $Q$}
{ Construction and Layering of Corridor of Q}}
We first construct a narrow corridor along $\mybd{Q}$.  This corridor
lies entirely within $Q$, closely following $\mybd{Q}$, and provides
the space needed to link elements of $R_i$.

We borrow the definitions from~\cite{Barequet2014}.
For $\delta \ge 0$, the inner $\delta$-annulus of 
the polygon $Q$ is the closed region containing all points inside $Q$ 
at distance at most $\delta$ from $\mybd{Q}$.
The inner \emph{$\delta$-offset} polygon 
of $Q$, denoted by $Q^\delta$, is defined as the polygon obtained from 
the inner $\delta$-annulus of $Q$ 
by extending each of the line segments until they intersect with other extensions. 
Note that the inner $0$-offset polygon of $Q$ is $Q$ itself.
Figure~\ref{fig:offset.polygon}(a--b) illustrates the inner $\delta$-annulus and $\delta$-offset polygon of $Q$.

Observe that these extensions may intersect outside $Q$ so that the offset polygon is not simple.  
A more formal definition of the inner $\delta$-offset polygon 
can be derived using the shrinking process in~\cite{Aichholzer1996}.
For sufficiently small $\delta\ge 0$, however,
$Q^\delta$ is well-defined as a simple polygon entirely contained within $Q$. 
Each edge of the offset polygon 
is parallel to its corresponding edge of $Q$, and appears in the same cyclic order 
along the boundary of the offset polygon as in $Q$. 

The $\delta$-\emph{corridor} of $Q$ is 
defined as the closed region that lies between $Q$ and $Q^\delta$.
This $\delta$-corridor is topologically equivalent to an annulus, allowing us to define its inner boundary and outer boundary.
The inner boundary of the $\delta$-corridor coincides with $\mybd{Q^\delta}$, %the boundary of the inner $\delta$-offset polygon, 
while its outer boundary is $\mybd{Q}$.
The parameter $\delta$ is referred as the width of this corridor.

\subparagraph{Determining width of the Corridor.} 
% In the reconfiguration process, reallocation takes place entirely within the corridor of $Q$.
% However, if the corridor is not narrow, 
% this reallocation may split an existing piece,
% potentially increasing the total number of partitions. 
% Moreover, the merged piece might violate unit-width constraint $W$.
% Recall that the set $R_i = \{C_1,\ldots, C_t\}$ consists of the closures of 
% the connected components of $\myint{P_i} \cap \myint{Q}$, 
% where each $C_j$ intersects the boundary of $Q$.
% For $t>1$, consider the edges of each $C_j \in R_i$ 
% that have one endpoint on $\mybd{Q}$ and intersect $\myint{Q}$. 
% Let $\bdeset_{ij}$ denote the set of such edges of $C_j$. 
% We define
% \smalleq{
% \bdeset_i \coloneqq \bigcup_{j = 1}^{t} \bdeset_{ij} \quad \text{and} \quad \bdeset_Q \coloneqq \bigcup_{i = 1}^m \bdeset_i.
% }
The width of this corridor is carefully chosen to ensure that 
no reallocation violates the width constraint $W$.

For each $C_j \in R_i$, 
define $\bdeset_{ij}$ as the set of edges of $C_j$ 
that both intersect $\myint{Q}$ and have an endpoint on $\mybd{Q}$. We then set
\[
\bdeset_i := \bigcup_{j=1}^{t} \bdeset_{ij}, \qquad \bdeset_Q := \bigcup_{i=1}^{m} \bdeset_i.
\]

% Consider the edges of each $C_j \in R_i$ that have one endpoint on
% $\mybd{Q}$ and intersect $\myint{Q}$.  Let $\bdeset_{ij}$ denote the
% set of such edges of $C_j$.  We define
% $ \bdeset_i \coloneqq \bigcup_{j = 1}^{t} \bdeset_{ij}$ and
% $\bdeset_Q \coloneqq \bigcup_{i = 1}^m \bdeset_i$.  

As $\delta$ increases from 0, 
the offset boundary $\mybd{Q^\delta}$ shrinks inward from $\mybd{Q}$. 
By the combinatorial structure of $\mybd{Q^\delta} \cap \mybd{C_j}$,
we mean the set of 
incidence pairs 
specifying which edge or vertex of $Q^\delta$ meets which edge or vertex of $C_j$.
This structure changes only at discrete critical values when 
(1) an edge of $Q^\delta$ contains a vertex of $C_j$, or 
(2) a vertex of $Q^\delta$ lies on the interior of an edge of $C_j$.

We define $\phi_{ij}>0$ as 
the smallest value of $\delta$ 
at which the combinatorial structure of $\mybd{Q^\delta} \cap \mybd{C_j}$ changes. 
We then set $\phi_i \coloneqq \min_{j \in [t]}\phi_{ij}$ and $\phi \coloneqq \min_{i \in [m]}\phi_i$.
See Figure~\ref{fig:offset.polygon}(c).

Finally, we define the width of the corridor to be $\phi$, and denote this corridor by $\corridor$.
By construction, $\bdeset_Q$ consists of those edges intersecting $\myint{\corridor}$.
%  for $i \in [m]$, an edge $e$ of a component $C \in R_i$ intersects 
% $\myint{\corridor}$ 
% if and only if $e \in \bdeset_Q$.

% $\mybd{Q^\delta}$ intersects only the edges in $\bdeset_Q$, not other edges of $C_j$, $j \in [t]$. 
% Let $\phi_{ij}$ be the smallest value such that 
% $\mybd{Q^\delta}$ intersects a vertex of $C_j$ or  

% When $\delta = 0$, the $\delta$-corridor coincides with $\mybd{Q}$. 
% As $\delta$ increases from 0, there exists the smallest real value $\phi_{ij} > 0$ at which 
% an edge of $Q^\delta$ encounters a vertex of $C_j$, or 
% a vertex of $Q^\delta$ touches an edge of $C_j$ in the interior of $Q$. 
% See Figure~\ref{fig:offset.polygon}(c).
% We define $\phi_i \coloneqq \min_{j \in [t]}\phi_{ij}$ and $\phi \coloneqq \min_{i \in [m]}\phi_i$.
% We set the width of the corridor of $Q$ to $\phi$ 
% and refer to this corridor as $\corridor$. 
% Then, for $i \in [m]$, an edge $e$ of a component $C \in R_i$ intersects 
% $\myint{\corridor}$ 
% if and only if $e \in \bdeset_Q$.

\begin{figure}[t!]
  \centering
  \includegraphics[width=0.85\textwidth]{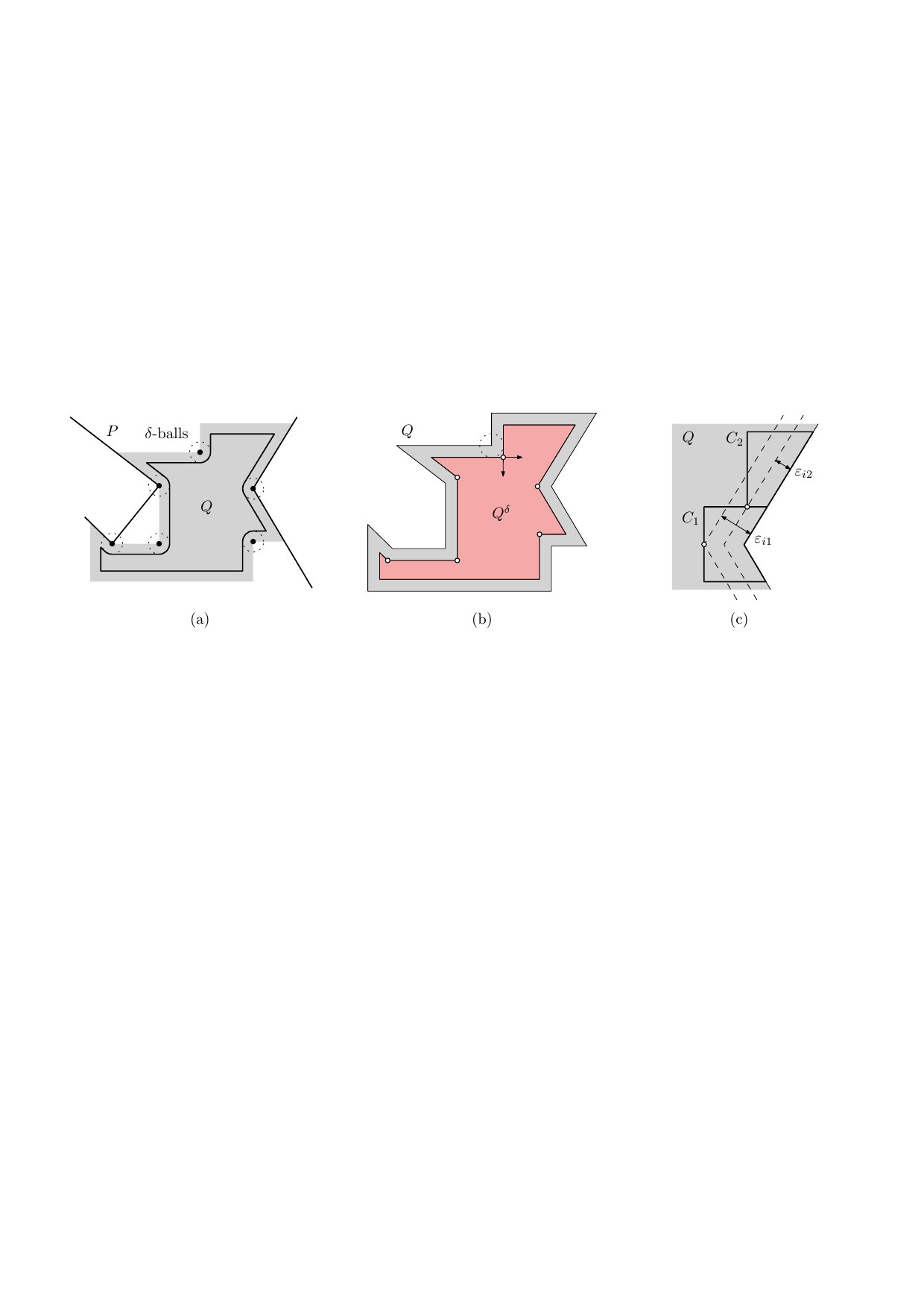}
  \caption{\small (a) The inner $\delta$-annulus of $Q$ is an annular region in $Q$ 
  at distance at most $\delta$ from $\mybd{Q}$. (b) The red colored region $Q^\delta$ is the inner $\delta$-offset polygon of $Q$. 
  (c) For each $C_j\in R_i$, 
  $\phi_{ij}$ is defined by the smallest value of $\delta$ 
  at which $\mybd{Q^\delta}$ touches a new vertex or edge of $C_j$ as $\delta$ increases from 0.}
  \label{fig:offset.polygon}
\end{figure}

\subparagraph{Construction of Layers Within $\corridor$.}
The corridor
$\corridor$ is subdivided into a sequence of nested subregions, which
we refer to as \emph{layers}.  
Let $\nlayers \in \mathbb{Z}_{>0}$
denote the number of layers which will be determined in
Section~\ref{subsubsec.layer.assignment}.  
% For each
% $k \in [\nlayers]$, we construct a simple closed curve $\Gamma_k$
% lying entirely within $\myint{\corridor}$, such that the curves
% $\Gamma_1, \ldots, \Gamma_{\nlayers}$ are pairwise disjoint and
% arranged sequentially inward from $\mybd{Q}$.  Each $\Gamma_k$ follows
% a zigzag pattern using the two directions in $U$, and intersects every
% edge in $\bdeset_Q$ once or twice, depending on whether the edge is a
% guillotine cut in $Q$.  Details on the construction of the corridor
% $\corridor$ and simple curves are provided in the full
% version. 

For an integer $k$ with $0\le k \le \nlayers$, 
let $B_k$ denote the boundary of
the inner $\delta_k$-offset polygon of $Q$, 
where $\delta_k= \frac{\phi \cdot k}{\nlayers}$. 
The boundaries $B_0, \ldots, B_{h}$ are simple closed curves that are pairwise disjoint. 
Recall that $U$ contains two distinct vectors. %non-parallel directions. 
Using these two directions, for each $k = 1, \ldots, \nlayers$,
we construct a simple closed curve $\Gamma_k$ in a zigzag pattern within 
the annular region between $B_{k-1}$ and $B_{k}$. 
This curve $\Gamma_k$ is designed to avoid touching $B_{k-1}$ and $B_{k}$ 
while intersecting each edge in $\bdeset_Q$. 

Let $\delta_{k-1}'$ and $\delta_{k}'$ be two values in the open interval 
$(\delta_{k-1}, \delta_{k})$ satisfying $\delta_{k-1}' < \delta_{k}'$.
Let $B_{k-1}'$ and $B_{k}'$ denote the boundaries of 
$\delta_{k-1}'$-offset polygon 
and $\delta_{k}'$-offset polygon of $Q$, respectively.
The boundaries $B_{k-1}, B'_{k-1}, B'_{k}$ and $B_{k}$ are pairwise disjoint and 
arranged sequentially inward from $\mybd{Q}$. %, starting from the boundary of $Q$.

By the definition of $\bdeset_Q$, 
every edge $e\in \bdeset_Q$ is a line segment 
that touches a point on $\mybd{Q}$ and 
intersects the inner boundary of $\corridor$, 
as $\phi$ is set to be sufficiently small.
Then the annular region between $B'_{k-1}$ and $B'_{k}$ is divided into multiple subpolygons by the edges in $\bdeset_Q$. 
Each subpolygon is simple and has positive area. 
Exactly two of its edges are subsegments of edges in $\bdeset_Q$; the others lie along the boundaries $B'_{k-1}$ and $B'_{k}$.
We triangulate each subpolygon 
by adding new edges that connect $B'_{k-1}$ to $B'_{k}$; 
the union of these forms a triangulation of the entire annular region.
For each edge in the triangulation 
that connects $B'_{k-1}$ to $B'_{k}$, 
we mark a point on its interior. 
The curve $\Gamma_k$ is then constructed to pass through these marked points. 
Figure~\ref{fig:construction.corridor}(a) shows the triangulation of the region restricted to $C_j$
with its marked points.

\begin{figure}[t!]
  \centering
  \includegraphics[width=0.85\textwidth]{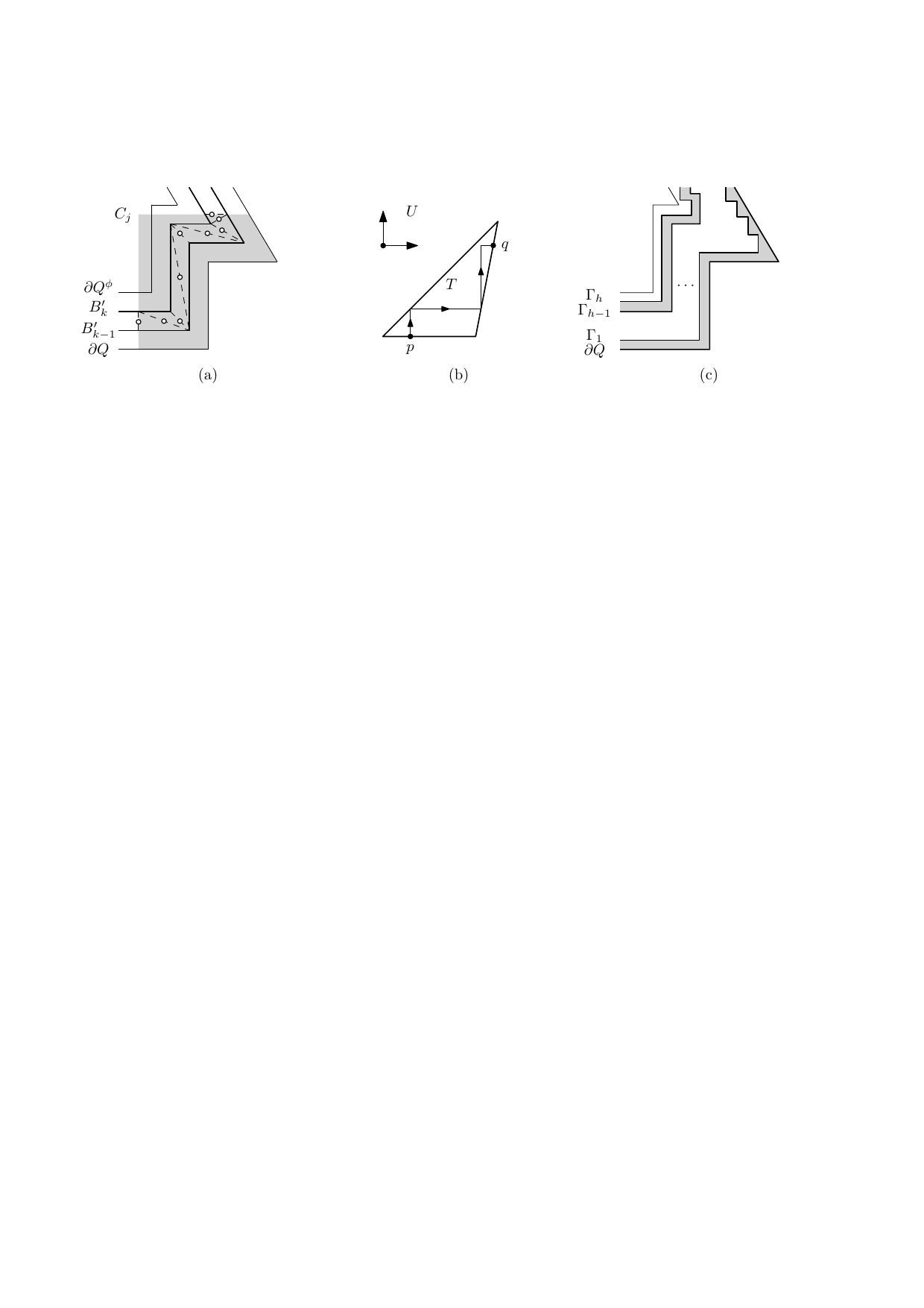}
  \caption{\small (a) The intersection of $C_j$ with the region between $B'_{k-1}$ and $B'_{k}$ 
  is triangulated by adding the edges connecting $B'_{k-1}$ to $B'_{k}$. 
  Each edge connecting $B'_{k-1}$ to $B'_{k}$ is marked at its interior point.
  (b) The simple path from $p$ to $q$ lies in $T$, 
  and each segment is aligned with a direction in $U$. 
  (c) The region between $\mybd{Q}$ and $\Gamma_1$ defines the first layer,  
  and the region between $\Gamma_{\nlayers-1}$ 
  and $\Gamma_{\nlayers}$ defines the $\nlayers$-th layer.
  }
  \label{fig:construction.corridor}
\end{figure}
\begin{lemma}\label{lem:exits.curve.quadrilateral}
  Let $T$ be a triangle with a positive area, and 
  let $p$ and $q$ be two points lying on the interiors of edges of $T$. 
  Assume that $U$ contains two distinct vectors.
  Then, there exists a simple path within $T$ from $p$ to $q$, 
  consisting of line segments 
  aligned with the directions in $U$.
\end{lemma}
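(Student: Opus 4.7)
The plan is to reduce to the axis-aligned case and then build a staircase path inside the triangle. Choose two distinct vectors $\vv{u}_1, \vv{u}_2 \in U$; since they live in $\usetp$ and are distinct, they are linearly independent. I would apply the affine change of coordinates that sends $\vv{u}_1 \mapsto (1,0)$ and $\vv{u}_2 \mapsto (0,1)$. This carries $T$ to a triangle $T'$ of positive area, sends $p, q$ to points $p', q'$ on the interiors of edges of $T'$, and turns segments aligned with $\vv{u}_1$ or $\vv{u}_2$ into axis-aligned segments. It therefore suffices to construct a simple rectilinear path from $p'$ to $q'$ inside $T'$, since pulling it back through the inverse affine map gives the desired path in $T$.

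Next I would push $p'$ and $q'$ a short distance into the open interior $\myint{T'}$ along an axis-aligned direction. Because $T'$ lies on one side of the supporting line of the edge $e_p$ containing $p'$, and since $(1,0)$ and $(0,1)$ span $\mathbb{R}^2$, at least one of the four axis directions enters $\myint{T'}$ from $p'$, by a short case analysis on whether $e_p$ is horizontal, vertical, or neither. A sufficiently short segment in that direction terminates at a point $p^* \in \myint{T'}$ with $\overline{p'p^*} \subseteq T'$; define $q^*$ analogously from $q'$. Since $\myint{T'}$ is open and convex, the chord $\overline{p^*q^*}$ lies in $\myint{T'}$, and by compactness there exists $\delta > 0$ such that its $\delta$-neighborhood is contained in $\myint{T'}$. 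A monotone axis-aligned staircase from $p^*$ to $q^*$ with step size below $\delta/\sqrt{2}$ then stays inside this $\delta$-neighborhood, and hence inside $T'$.

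Concatenating $\overline{p'p^*}$, the staircase, and $\overline{q^*q'}$ produces a rectilinear path inside $T'$ from $p'$ to $q'$. The main obstacle is to guarantee that this path is simple. I would choose the staircase to be coordinate-monotone, which rules out self-intersection within the staircase itself, and orient its first and last steps to be perpendicular to $\overline{p'p^*}$ and $\overline{q^*q'}$ respectively, so that the short entry and exit segments do not coincide with staircase steps. Taking the step size and the lengths of the entry and exit segments sufficiently small also keeps these short segments away from the remainder of the staircase, completing the construction of a simple rectilinear path in $T'$ that pulls back to the desired path in $T$.
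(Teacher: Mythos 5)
Your reduction step is exactly the paper's: both proofs first apply the (invertible) linear map sending the two directions of $U$ to $(1,0)$ and $(0,1)$, so that it suffices to build a simple rectilinear path in the image triangle. Where you diverge is in how that rectilinear path is built. The paper keeps the endpoints on the boundary and constructs a boundary-hugging path that alternately moves rightward and upward until it hits an edge, arguing that this bouncing process reaches the height of $q$ after finitely many steps (using that the bounces converge to the top vertex) and then finishing with one horizontal segment; monotonicity of the bounce path gives simplicity essentially for free. You instead push $p'$ and $q'$ slightly into the open interior along axis directions, take the chord between the pushed-in points, and approximate it by a fine monotone staircase inside a $\delta$-neighborhood of the chord obtained by compactness. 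Your route avoids the paper's slightly delicate convergence-to-the-apex argument, at the cost of having to argue simplicity at the two junctions by hand. That part of your write-up is the only soft spot: choosing the first/last staircase steps perpendicular to the entry/exit segments is not always possible (e.g.\ when $y(p^\ast)=y(q^\ast)$ the monotone staircase is a single horizontal segment), and if the chord happens to be axis-parallel and anti-parallel to an entry segment the naive concatenation can backtrack over itself. These degeneracies are easily killed by shrinking or generically perturbing the entry/exit lengths (or by observing that in the fully collinear case the whole path collapses to one straight segment, which lies in the triangle by convexity), so the argument is correct, but you should state that perturbation explicitly rather than leaving it at ``sufficiently small.''
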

\begin{proof}
  Let $\vv{u}$ and $\vv{v}$ be two distinct vectors in $U$. % that are not parallel to each other. 
  Since they are not parallel, 
  there is an invertible linear transformation $f\colon \mathbb{R}^2 \to \mathbb{R}^2$ such that 
  $f(\vv{u})= (1,0)$ and $f(\vv{v}) = (0,1)$. 
  Note that $f$ transforms $T$ into a triangle $T'$ with a positive area in $\mathbb{R}^2$, 
  and $f(p)$ and $f(q)$ are points on the interiors of edges of $T'$. 
  Thus, $f$ provides a one-to-one correspondence between simple paths in $T$ aligned with $\{\vv{u},\vv{v}\}$ and 
  rectilinear simple paths in $T'$.
  We assume without loss of generality that $\{(1,0),(0,1)\} \subseteq U$, and 
  we find the rectilinear simple path within $T$ connecting $p$ to $q$.

  Consider the case that $x(p) \le x(q)$ and $y(p)\le y(q)$. 
  Starting from $p$, the rectilinear simple path
  alternates between moving rightward and upward. 
  It proceeds horizontally to the right until it meets an edge of $T$, 
  then turns upward and continues vertically 
  until it encounters another edge. 
  This alternating process is repeated.
  If this process continues indefinitely,
  % continues until it has nowhere to go, 
  the path either reaches the uppermost vertex of $T$ in finite steps,
  or asymptotically converges to it without ever reaching it.
  Since $q$ is lower than the uppermost vertex, 
  there is a positive integer $t$ such that after $t$ iterations of this process,
  the simple path reaches $z \in \myint{T}$ with $y(z) = y(q)$. 
  By moving horizontally at $z$, the path finally reaches $q$. 
  See Figure~\ref{fig:construction.corridor}(b).
\end{proof}

By Lemma~\ref{lem:exits.curve.quadrilateral}, 
we construct a simple path within each triangle 
that passes the marked points on its edges. 
The curve $\Gamma_k$ is formed by joining these simple paths into a closed simple loop.
This loop consists of line segments aligned with the directions in $U$
and does not touch either $B_{k-1}$ or $B_{k}$, since it is fully contained in the annular region 
between $B_{k-1}'$ and $B_{k}'$.

The curves $\Gamma_1, \ldots, \Gamma_\nlayers$ subdivide $\corridor$ into $\nlayers$ nested
subregions, called \emph{layers}: for each $k \in [\nlayers]$, 
we denote by $L_k$ 
the $k$-th layer, i.e., the annular region bounded by
$\Gamma_{k-1}$ and $\Gamma_k$, where we define $\Gamma_0 = \mybd{Q}$. 
See Figure~\ref{fig:construction.corridor}(c).

\begin{figure}[t!]
  \centering
  \includegraphics[width=0.9\textwidth]{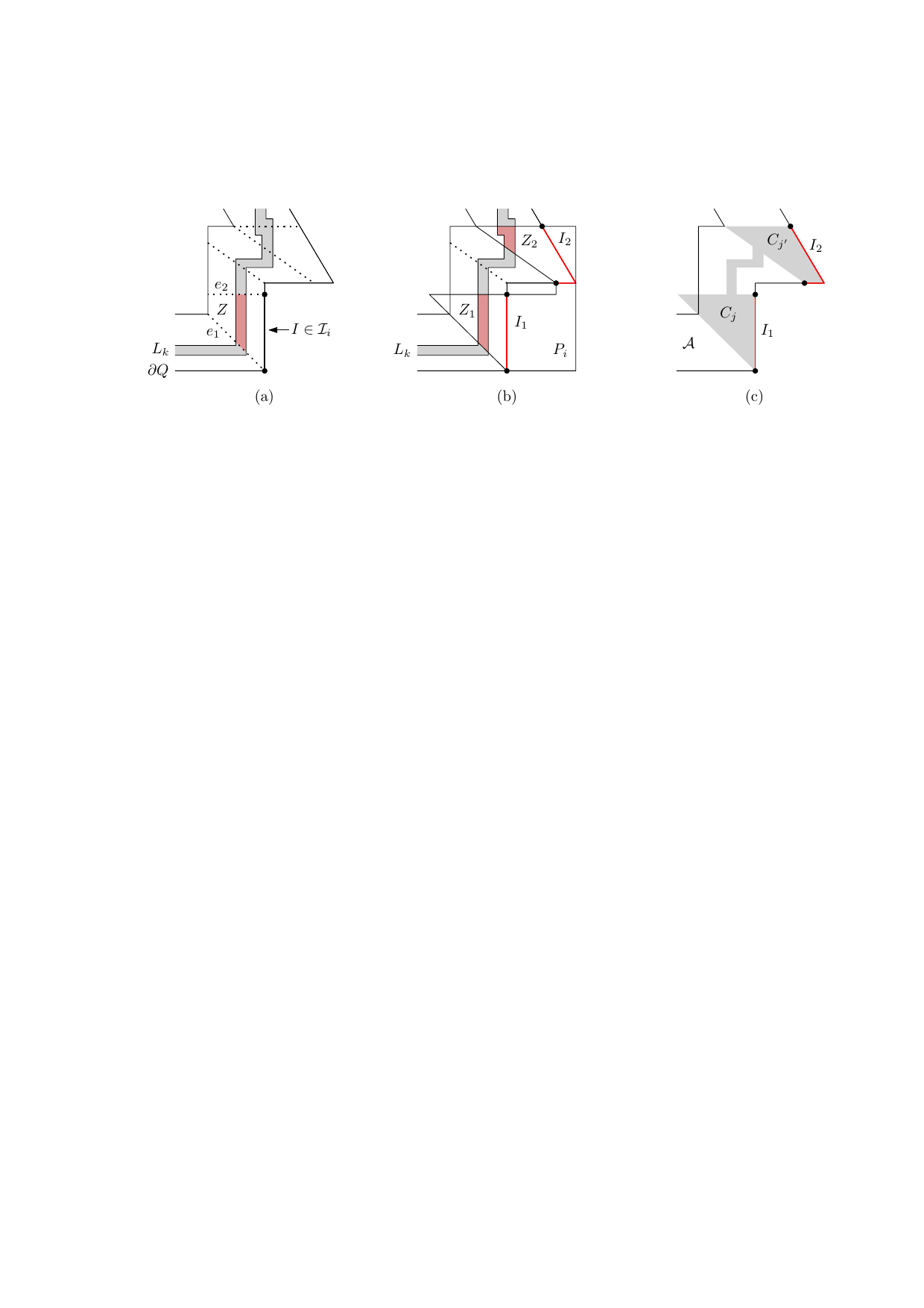}
  \caption{\small (a) A layer segment $Z$ of $R_i$ is bounded by
    $e_1,e_2 \in \bdeset_i$, and it corresponds to
    $I \in \mathcal{I}_i$.  (b) A link instruction
    $\lambda = (I_1,I_2,k)$ reallocates the layer segments in
    $L_k[Z_1,Z_2]$ to $P_i$.  (c) Applying $\lambda$ merges $C_{j}$
    and $C_{j'}$ into a single piece through layer segments in
      $L_k$, where $I_1 \in \mathcal{I}_{ij}$ and
    $I_2 \in \mathcal{I}_{ij'}$ for some $j,j' \in [t]$.}
  \label{fig:reallocation}
\end{figure}

\subsection{Link instructions with circular intervals and layers}
For each edge $e \in \bdeset_Q$, 
the intersection $\corridor \cap e$ consists of one or two line segments, 
depending on whether $e$ is a guillotine cut in $Q$. 
In the construction of layers, we place marked points on each of these segments, 
and each curve $\Gamma_k$ crosses each segment exactly once, at one of the marked points.  
These crossings subdivide each layer into smaller regions, 
each corresponding to a contiguous portion of $\mybd{Q}$, 
later formalized as a circular interval.
% which we formalize as follows.

Each connected component of $L_k \cap C_j$ for $C_j \in R_i$ is a subregion of $Q$ 
referred to as a \emph{layer segment} of $C_j$ (and of $R_i$) in $L_k$.
% Let $C_{j}$ and $C_{j'}$ be two elements in $R_i$ with $j\neq j'$.
% For $k\in [\nlayers]$, let $L_k$ denote the $k$-th layer. Each
% connected component of $L_k \cap C_j$ for $C_j \in R_i$ is a subregion
% of $Q$ referred to as a \emph{layer segment} of $C_j$ (and of $R_i$)
% in $L_k$. 
Each layer segment is bounded by four parts: continuous
portions of the inner and outer boundaries of $L_k$, and two edges
from $\bdeset_{ij}$.  Removing any layer segment from $L_k$ alters its
topological structure from an annulus to a weakly simple polygon.  All
layer segments in $L_k$ can be arranged in cyclic order along $L_k$.
Figure~\ref{fig:reallocation}(a) illustrates a layer segment of $R_i$
in $L_k$.

We use an interval notation over layer segments that are cyclically
ordered.  For two layer segments $Z_1, Z_2$ in $L_k$, we denote by
$L_k[Z_1,Z_2]$ the set of layer segments encountered in
counterclockwise traversal from $Z_1$ to $Z_2$ in $L_k$, where
$Z_1, Z_2 \in L_k[Z_1,Z_2]$.  Similarly, $L_k(Z_1, Z_2)$,
$L_k[Z_1, Z_2)$, and $L_k(Z_1, Z_2]$ denote the open and half-open
intervals over layer segments between $Z_1$ and $Z_2$ in $L_k$.

Assume that $Z_1 \subseteq C_{j}$ and $Z_2 \subseteq C_{j'}$ are layer
segments of $R_i$ in $L_k$, for $j,j' \in [t]$ with $j \neq j'$.  The
reallocation of $L_k[Z_1, Z_2]$ to $P_i$ merges $C_{j}$ and $C_{j'}$
into a single piece within $L_k$.  We refer to such an ordered pair
$(Z_1,Z_2)$ as a \emph{link instruction} of $R_i$.  See
Figure~\ref{fig:reallocation}(b--c).

\subparagraph{Decomposition of $\mybd{Q}$ into circular intervals.}
For $C_j \in R_i$, $C_j$ is a subpolygon of $Q$ that touches
$\mybd{Q}$.  The intersection $\mybd{C_j} \cap \mybd{Q}$ consists of
continuous paths on $\mybd{Q}$.  Specifically, it can be the loop
$\mybd{Q}$ itself if $\mybd{Q} \subseteq \mybd{C_j}$, implying that
$Q$ satisfies unit-width constraint $W$.  However, assuming
$\nopt{Q} > 1$, this case does not occur.  Thus, every connected
component of $\mybd{C_j} \cap \mybd{Q}$ must be a continuous path on
$\mybd{Q}$ that is not a loop.  We define $\mathcal{I}_{ij}$ as
\[\mathcal{I}_{ij} \coloneqq \left\{ I \subseteq \mybd{Q}
    \;\middle|\; I \text{ is a connected component of } \mybd{C_j}
    \cap \mybd{Q} \right\}.\]
We then define the aggregate sets
$\mathcal{I}_i \coloneqq \bigcup_{j=1}^t \mathcal{I}_{ij}$ and
$\mathcal{I}_{Q} \coloneqq \bigcup_{i=1}^m \mathcal{I}_i$.

Let $I \in \mathcal{I}_{ij}$ be a continuous path along $\mybd{Q}$
with endpoints $p$ and $q$ such that $I$ corresponds to the portion of
$\mybd{Q}$ from $p$ to $q$ in counterclockwise order.  Since
$\mybd{Q}$ is a simple closed curve, topologically equivalent to a
circle, we regard each path $I$ as a circular interval on $\mybd{Q}$.
As $\mathcal{I}_{ij}$ contains no loops, the case $p=q$ occurs only
when $\mybd{C_j}$ touches $\mybd{Q}$ at a single point $p$, and no
other portion of $\mybd{C_j}$ near $p$ intersects $\mybd{Q}$.  Such
intervals are called \emph{degenerate} intervals.

For $i \in [m]$, $\mathcal{I}_i$ may contain both degenerate and
non-degenerate intervals.  Since $P_i$ is simple, all intervals in
$\mathcal{I}_i$ are pairwise interior-disjoint.
Figure~\ref{fig:traversal.pieces}(a) shows a piece $P_i$ and the
subpolygon $Q$, from which the circular intervals in $\mathcal{I}_i$
are defined.  These intervals are illustrated in
Figure~\ref{fig:traversal.pieces}(b).

Note that a degenerate interval at some point $p\in \mybd{Q}$ may
appear in multiple $\mathcal{I}_i$'s.  In such cases, each occurrence
is treated as a distinct element in $\mathcal{I}_{Q}$.  Since
$\Pi = \{P_1,\ldots,P_m\}$ is a partition of $P$ and $Q\subseteq P$,
every point on $\mybd{Q}$ belongs to some interval of
$\mathcal{I}_{Q}$, and no two intervals in $\mathcal{I}_Q$ intersect
each other in their interiors.  Thus, $\mathcal{I}_{Q}$ forms a
decomposition of $\mybd{Q}$.

\begin{figure}[t]
  \centering
  \includegraphics[width=0.85\textwidth]{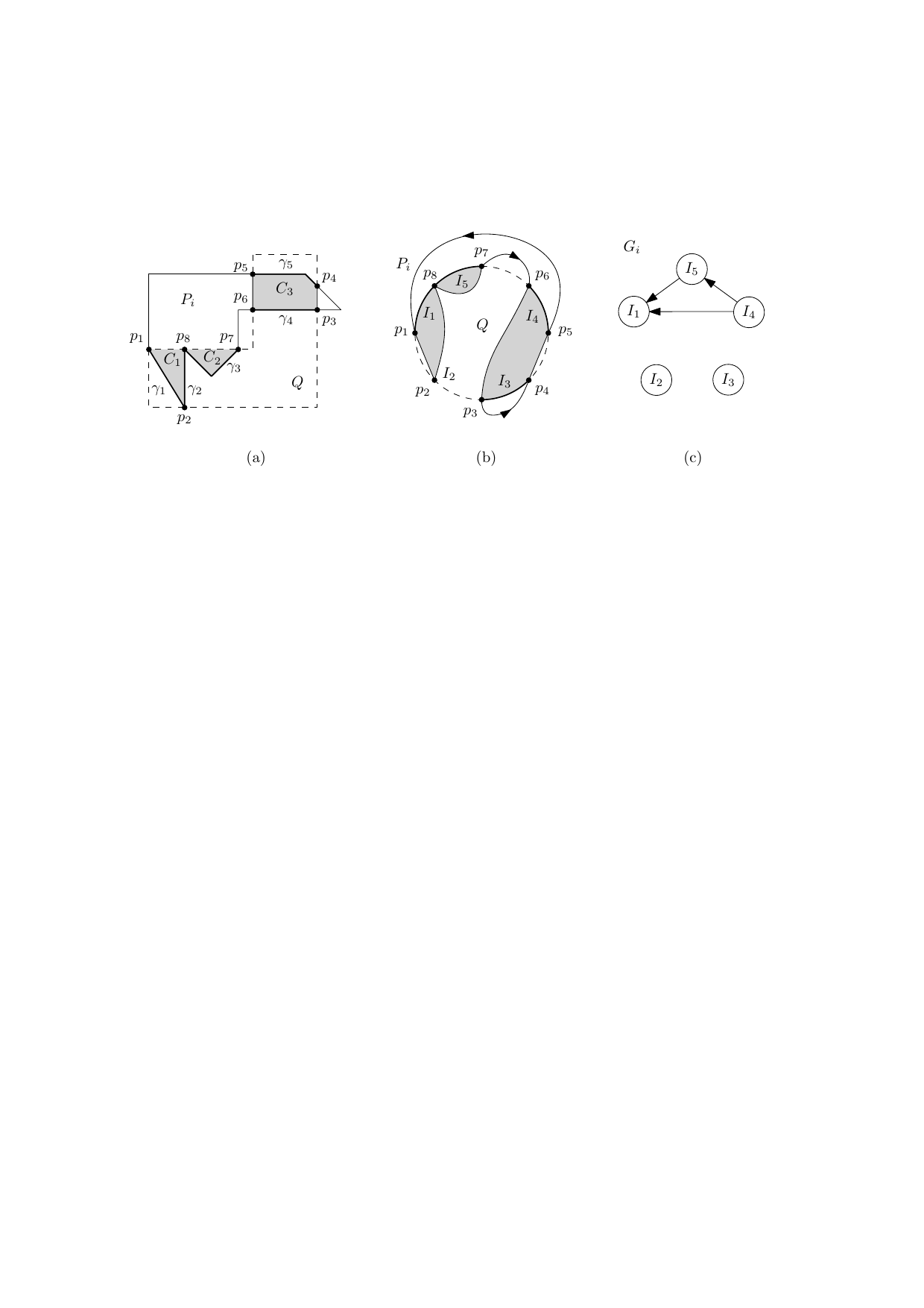}
  \caption{\small (a) $\myint{P_i}\cap \myint{Q}$ induces
    $R_i = \{C_1,C_2,C_3\}$, and each $\mybd{C_j} \cap \myint{Q}$
    induces simple paths $\{\gamma_1,\ldots,\gamma_5\}$.  (b) $p_8$
    serves as both an entry and exit point for the transition
    $C_1 \rightarrow C_2$.  The simple paths outside $\myint{Q}$, from
    $p_7$ to $p_6$ and from $p_5$ to $p_1$, represents the transitions
    $C_2\rightarrow C_3$ and $C_3\rightarrow C_1$, respectively.  (c)
    The directed graph $G_i$, constructed based on $P_i$ and $Q$ shown
    in (a).  }
  \label{fig:traversal.pieces}
\end{figure}

Finally, we relate these circular intervals to layer segments so as to
define a link instruction formally.  For any $j \in [t]$ and any
$k \in [\nlayers]$, consider the component $C_j \in R_i$ and the layer
$L_k$.  Each connected component of $\corridor \cap C_j$ (or
$L_k \cap C_j$) is bounded by four parts: continuous portions of the
inner and outer boundaries of $\corridor$ (or $L_k$), and two edges
from $\bdeset_{ij}$.  Each connected component of $L_k \cap C_j$ is
entirely contained within a unique connected component of
$\corridor \cap C_j$.  Moreover, the portion of $\mybd{Q}$ that bounds
each connected component of $\corridor \cap C_j$ corresponds to a
circular interval in $\mathcal{I}_{ij}$.  Thus, each layer segment of
$C_j$ in $L_k$ is uniquely associated with a circular interval in
$\mathcal{I}_{ij}$, in a one-to-one correspondence.  The number of
layer segments in $\corridor \cap C_j$ (or $L_k \cap C_j$) is
$|\mathcal{I}_{ij}|\cdot h$ (or $|\mathcal{I}_{ij}|$).

We revisit the link instruction that merges two layer segments
$Z_1\subseteq L_k \cap C_{j}$ and $Z_2\subseteq L_k \cap C_{j'}$ for
$j, j' \in [t]$ along layer $L_k$.  The link instruction reallocates
the region in $L_k$ spanned counterclockwise from $Z_1$ to $Z_2$.
Since $Z_1$ and $Z_2$ correspond to circular intervals in
$\mathcal{I}_{ij}$ and $\mathcal{I}_{ij'}$, respectively, each link
instruction can be represented by a triple $(I_1,I_2,k)$, where
$I_1,I_2 \in \mathcal{I}_{Q}$ and $k \in [\nlayers]$ (See
Figure~\ref{fig:reallocation}).  Note that
$(I_1,I_2,k)\neq (I_2,I_1,k)$, as the counterclockwise span from $Z_1$
to $Z_2$ differs from that in the reverse order.

\subsection{Graph for encoding link instructions}\label{subsubsec.graph.link.instruction}
Recall that $R_i = \{C_1,C_2,\ldots, C_t\}$ is the set of closures of
connected pieces of $\myint{P_i} \cap \myint{Q}$ such that each $C_j$
has a positive area and touches $\mybd{Q}$.  We construct a graph for
each $R_i$, which specifies how the elements of $R_i$ are to be
connected into a single piece within $Q$.

For each $j\in [t]$, $\mathcal{I}_{ij}$ consists of circular
intervals, each corresponding to a continuous path in
$\mybd{C_j} \cap \mybd{Q}$ along $\mybd{Q}$.  Let $\mathcal{I}_{ij}^+$
denote the subset of $\mathcal{I}_{ij}$ consisting of only those
intervals with positive length.  We define
$\mathcal{I}_{i}^+ = \bigcup_{j=1}^t \mathcal{I}_{ij}^+$ and
$\mathcal{I}^+_{Q} = \bigcup_{i=1}^m \mathcal{I}_{i}^+$.  Unlike
$\mathcal{I}_{Q}$, the set $\mathcal{I}^+_{Q}$ contains only
non-degenerate intervals, which are pairwise interior-disjoint.  Note
that $\{\mathcal{I}_{i}^+\}_{i=1,\ldots,m}$ forms a partition of
$\mybd{Q}$.

\subparagraph{Sequencing subpaths of $\mybd{P_i}$.}  For each
$C_j\in R_i$, $\mybd{C_j} \cap \myint{Q}$ consists of connected
components, each forming a continuous path connecting two points on
$\mybd{Q}$.  By the definition of $\mathcal{I}_{ij}$, the endpoints of
these paths correspond to the endpoints of the circular intervals in
$\mathcal{I}_{ij}$.  The total number of such paths in
$\mybd{C_j} \cap \myint{Q}$ is $|\mathcal{I}_{ij}|$. See
Figure~\ref{fig:traversal.pieces}(a).

We traverse $\mybd{P_i}$ counterclockwise, starting at any point on
$\mybd{C_1} \cap \myint{Q}$ and completing a full circuit.  During the
traversal, each path in $\mybd{C_j} \cap \myint{Q}$ is visited exactly
once for each $j \in [t]$, except for the path containing the starting
point.  The order in which the continuous paths are visited is denoted
by
$\gamma_1 \rightarrow \gamma_2 \rightarrow \cdots \rightarrow
\gamma_l$, where $\gamma_1 = \gamma_l$ is the path in
$\mybd{C_1}\cap \myint{Q}$ containing the starting point, and
$l = \sum_{j=1}^t |\mathcal{I}_{ij}| +1 = |\mathcal{I}_i|+1$.

Consider two consecutive paths $\gamma_k$ and $\gamma_{k+1}$ for any
$k \in [l-1]$.  Each path is derived from some piece in $R_i$; there
exist $j, j' \in [t]$ such that
$\gamma_k \subseteq \mybd{C_{j}} \cap \myint{Q}$ and
$\gamma_{k+1} \subseteq \mybd{C_{j'}} \cap \myint{Q}$.  As we traverse
from $\gamma_k$ to $\gamma_{k+1}$, we exit $\myint{Q}$ through an
endpoint of $\gamma_k$ and re-enter $\myint{Q}$ through an endpoint of
$\gamma_{k+1}$.

\begin{lemma}\label{lem:dis.interval.pos.length}
  Let $(\gamma_k, \gamma_{k+1})$ be a pair of consecutive paths for
  $k \in [l-1]$, where
  $\gamma_k \subseteq \mybd{C_{j}} \cap \myint{Q}$ and
  $\gamma_{k+1} \subseteq \mybd{C_{j'}} \cap \myint{Q}$ for some
  $j,j'\in[t]$.  The traversal from $\gamma_k$ to $\gamma_{k+1}$ exits
  and re-enters $\myint{Q}$ through intervals
  $I_1 \in \mathcal{I}_{ij}$ and $I_2 \in \mathcal{I}_{ij'}$,
  respectively.  If $j = j'$, then $I_1 = I_2$. Otherwise, $I_1$ and
  $I_2$ are interior-disjoint, non-degenerate intervals.
\end{lemma}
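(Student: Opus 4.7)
I plan to prove Lemma~\ref{lem:dis.interval.pos.length} by case analysis on whether $j=j'$, analyzing the local structure of $\mybd{P_i}$ at the exit point $p=\text{end}(\gamma_k)$ and the entry point $q=\text{start}(\gamma_{k+1})$.

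For the case $j=j'$, both $\gamma_k$ and $\gamma_{k+1}$ lie on $\mybd{C_j}$. Since $C_j$ is simply connected, $\mybd{C_j}$ is a simple closed curve whose arcs alternate in counterclockwise order between intervals of $\mathcal{I}_{ij}$ and arcs in $\mybd{C_j}\cap\myint{Q}$; I would label them $I_r$ and $\gamma^{(j)}_r$ so that $\gamma_k=\gamma^{(j)}_r$ is ccw-followed on $\mybd{C_j}$ by the interval $I_{r+1}$. Because $C_j\subseteq P_i$, the ccw orientations of $\mybd{C_j}$ and $\mybd{P_i}$ agree on $\gamma_k$, so the exit $p$ is the ccw-start endpoint of $I_{r+1}$. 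The core claim is that $\gamma_{k+1}=\gamma^{(j)}_{r+1}$, which would give $q$ as the ccw-end of $I_{r+1}$ and hence $I_1=I_2=I_{r+1}$. To establish this, I would argue that the cyclic order of the arcs $\gamma^{(j)}_1,\ldots,\gamma^{(j)}_{n_j}$ on $\mybd{P_i}$ matches their cyclic order on $\mybd{C_j}$, via a planar-topology argument: $\mybd{P_i}$ is a simple closed curve in $\mathbb{R}^2\setminus\myint{C_j}$ winding once around $\myint{C_j}$ (since $\myint{C_j}\subseteq\myint{P_i}$), exactly as $\mybd{C_j}$ does, and together with the matching orientations on shared arcs, planarity and simplicity force the two cyclic orders to coincide.

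For the case $j\neq j'$, the interior-disjointness of $I_1$ and $I_2$ is immediate: both lie in $\mathcal{I}_i$, whose elements the paper has already observed to be pairwise interior-disjoint, and $I_1\neq I_2$ since they come from the disjoint collections $\mathcal{I}_{ij}$ and $\mathcal{I}_{ij'}$. For non-degeneracy, I would argue by contradiction. Suppose $I_1=\{p\}$ is degenerate; then $\mybd{C_j}$ touches $\mybd{Q}$ at $p$ only as an isolated tangent point, and two distinct arcs of $\mybd{C_j}\cap\myint{Q}$ meet at $p$: the incoming $\gamma_k=\gamma^{(j)}_r$ and an outgoing $\gamma^{(j)}_{r+1}$. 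Since $\mybd{P_i}$ is a simple closed curve, it has exactly two tangent directions at $p$; one matches $\gamma_k$'s incoming direction, so the other is its outgoing direction. The arc $\gamma^{(j)}_{r+1}\subseteq\mybd{C_j}\cap\myint{Q}\subseteq\mybd{P_i}$ emanates from $p$ in a direction distinct from $\gamma_k$'s (as they are distinct arcs of $\mybd{C_j}$ meeting at $p$), so its tangent must coincide with the outgoing direction of $\mybd{P_i}$ at $p$. Therefore the arc of $\mybd{P_i}$ immediately following $\gamma_k$ is $\gamma^{(j)}_{r+1}$, forcing $\gamma_{k+1}=\gamma^{(j)}_{r+1}\subseteq\mybd{C_j}$ and $j=j'$, contradicting the assumption. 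A symmetric argument applied at $q$ yields the non-degeneracy of $I_2$.

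The main obstacle is the planar-topology argument in Case 1. Making it fully rigorous requires either a careful Jordan-curve or winding-number argument in $\mathbb{R}^2\setminus\myint{C_j}$, or the following combinatorial reformulation: if the $\gamma^{(j)}$-arcs were visited on $\mybd{P_i}$ in a cyclic order different from that on $\mybd{C_j}$, then the detour $\alpha_k$ between $\gamma_k$ and $\gamma_{k+1}$ would have to ``skip over'' some intermediate $\gamma^{(j)}_{r''}$ which itself lies on $\mybd{P_i}$ and would have to be crossed or enclosed---contradicting the simplicity of $\mybd{P_i}$ as a closed curve in the plane.
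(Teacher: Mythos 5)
Your Case-2 argument (non-degeneracy when $j\neq j'$) is essentially the paper's: the paper phrases it in terms of the two edges of $C_j$ incident to the degenerate touching point $p$ being subsegments of the only two edges of $P_i$ at $p$, forcing the traversal to stay on $\mybd{C_j}$; your ``two tangent directions at $p$'' formulation says the same thing. The interior-disjointness observation is also the same.

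Your Case-1 argument ($j=j'$) is genuinely different and is the more interesting deviation. The paper does \emph{not} argue by matching cyclic orders of the shared arcs on $\mybd{P_i}$ and $\mybd{C_j}$. Instead, for $p\neq q$ it concatenates two paths from $p$ to $q$ --- the detour along $\mybd{P_i}$ outside $\myint{Q}$, and a path through $\myint{C_j}$ --- into a simple closed loop in $P_i$, then invokes the simple connectedness of $P_i$ to conclude that the enclosed arc of $\mybd{Q}$ lies in $C_j$, so $p$ and $q$ lie on the same component of $\mybd{C_j}\cap\mybd{Q}$ (the $p=q$ case is handled separately as the degenerate interval). This avoids entirely the cyclic-order/non-crossing-chords lemma that you correctly flag as the main obstacle of your approach. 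Your argument is conceptually sound --- the non-crossing-chords fact you need (two Jordan curves sharing a family of disjoint arcs, one enclosed by the other, see those arcs in the same cyclic order with matching orientations) is true and can be made rigorous --- but it is a separate topological lemma that you would need to prove, and the gap is real as stated. If you want to close it with minimal machinery, the paper's loop-plus-simple-connectedness argument is one clean way to do so, and it even subsumes the cyclic-order claim: applied to any two $\gamma^{(j)}$-arcs that are consecutive on $\mybd{P_i}$, it directly shows the intervening interval of $\mybd{Q}$ is a single interval of $\mathcal{I}_{ij}$, which is all you need. On balance, your route is a bit more structural (it explains \emph{why} the arcs line up), while the paper's is more economical and self-contained given the machinery already in play ($P_i$ simply connected).
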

\begin{proof}
  The endpoint of $\gamma_k$ is an endpoint of an interval in $\mathcal{I}_{ij}$. 
  Since $C_{j}$ is a simple polygon in $Q$, the circular intervals in $\mathcal{I}_{ij}$ are pairwise disjoint. 
  Thus, there exists a unique interval $I_1 \in \mathcal{I}_{ij}$ 
  that contains the exit point.
  Similarly, 
  there exists a unique interval $I_2 \in \mathcal{I}_{ij'}$ that contains the entry point.
  
  Let $p$ and $q$ denote these exit and entry points on $\mybd{Q}$, respectively.
  Assume that $j = j'$, and that the traversal exits $\myint{Q}$ at $p$ and re-enters at $q$ without visiting any other pieces in $R_i$. 
  Since $\mybd{P_i}$ is traversed in counterclockwise direction, 
  the interior of $P_i$ lies to the left of an observer walking along $\mybd{P_i}$. 
  When $p=q$, the traversal touches $\mybd{Q}$ at $p$ and immediately re-enters its interior. 
  It follows that $I_1$ and $I_2$ are the same degenerate interval in $\mathcal{I}_{ij}$.
  
  Consider the case that $p \neq q$ and $j = j'$. 
  Let $I_{p,q}$ denote the portion of $\mybd{Q}$ from $p$ to $q$ in counterclockwise order.
  We construct two paths in $P_i$ from $p$ to $q$ in order to show that $I_1 = I_2$. 
  The first path is the portion of $\mybd{P_i}$ from $p$ to $q$ in counterclockwise order. 
  Since $j = j'$, this path lies in $P \setminus \myint{Q}$. 
  The second path connects $q$ to $p$ that lies in $\myint{C_{j}}$ except for $p$ and $q$.
  As $C_{j}$ is a simple polygon with a positive area,
  this path always exists. 

  By concatenating these two paths, 
  we obtain a simple closed curve in $P_i$ that encloses $I_{p,q}$.
  Note that $P_i$ is simply connected, which means 
  the region bounded by any simple closed curve in $P_i$ lies entirely within $P_i$.
  Thus, $I_{p,q}\subseteq C_{j}$, 
  and both $p$ and $q$ are contained in the same interval in $\mathcal{I}_{ij}$, i.e., $I_1 = I_2$. 

  We now turn to the case that $j \neq j'$.
  Suppose that $I_1$ is degenerate, which is a single point $p \in \mybd{Q}$. 
  Then $p$ must be a vertex of $C_{j}$, and 
  there are two edges $e_1'$ and $e_2'$ of $C_{j}$ that are incident to $p$.   
  Let $e_1$ and $e_2$ 
  be the corresponding edges in $P_i$ that contain $e_1'$ and $e_2'$, respectively. 
  % such that $e_1' \subseteq e_1$ and $e_2' \subseteq e_2$. 
  Since $P_i$ is simple, only two edges of $P_i$ are incident to $p$, namely $e_1$ and $e_2$. 
  Thus, traversal from $\gamma_k$ to $\gamma_{k+1}$ remains on the boundary of a single piece $C_{j}$, a contradiction.
  It follows that $I_1$ is non-degenerate, and similarly, so is $I_2$.
  As $\{\mathcal{I}_{1}^+ ,\ldots \mathcal{I}_{m}^+\}$ forms a partition of $\mybd{Q}$, 
  $I_1$ and $I_2$ are interior-disjoint, non-degenerate intervals.
\end{proof}

By Lemma~\ref{lem:dis.interval.pos.length}, a pair of distinct
intervals in $\mathcal{I}_{i}^+$ provides an exit-entry pair for each
transition between distinct pieces in $R_i$ in the counterclockwise
traversal of $\mybd{P_i}$.

\subparagraph{Construction of $G_i$.}  Based on the counterclockwise
traversal of $\mybd{P_i}$, we construct a directed graph
$G_i = (V_i,E_i)$ as follows.  Each vertex $v\in V_i$ corresponds to
an interval in $\mathcal{I}_{i}^+$, so $|V_i| = |\mathcal{I}_i^+|$.
Let $\gamma_1$ and $\gamma_2$ be consecutive paths in the traversal
where $\gamma_1 \subseteq \mybd{C_{j}}\cap \myint{Q}$ and
$\gamma_2 \subseteq\mybd{C_{j'}}\cap \myint{Q}$ with $j \neq j'$.  The
traversal from $\gamma_1$ to $\gamma_2$ encounters intervals $I_1$ and
$I_2$ in $\mathcal{I}^+_i$ that contain the exit and entry points,
respectively.  By Lemma~\ref{lem:dis.interval.pos.length}, these
intervals $I_1$ and $I_2$ are uniquely determined.  We add a directed
edge $e$ between vertices $v_1$ and $v_2$ corresponding to $I_1$ and
$I_2$, respectively.

Let $p$ and $q$ denote the exit and entry points of the traversal,
respectively; $p$ is the endpoint of $I_1$ and $q$ is the endpoint of
$I_2$.  The counterclockwise traversal on $\mybd{P_i}$ between
$\gamma_1$ and $\gamma_2$ follows a simple path along $\mybd{P_i}$
outside $Q$ that starts from $p$ and ends at $q$.  Let
$\gamma_{pq} \subseteq \mybd{P_i} \setminus \myint{Q}$ denote this
path from $p$ to $q$.  Observe that the exit and entry points may
coincide, i.e., $p = q$, if $I_1$ and $I_2$ are adjacent at $p$ and
$I_1$ lies counterclockwise from $I_2$ along $\mybd{Q}$.  In this
case, the direction of $e$ is assigned from $v_2$ to $v_1$.

Given that $p\neq q$, the simple path $\gamma_{pq}$ can be viewed as a
simple path connecting two distinct points on the boundary of the
circle and lying outside the circle.  The path $\gamma_{pq}$ can be
classified into one of two types, depending on how it winds around the
circle.  Formally, $\gamma_{pq}$ is homotopic to a directed path in
$\mathbb{R}^2 \setminus \myint{Q}$ that winds around $\mybd{Q}$ in
either counterclockwise or clockwise direction.  Note that it cannot
wind around the boundary more than once since $\gamma_{pq}$ is simple.
We assign the direction of $e$ from $v_1$ to $v_2$ if the path is of
the counterclockwise type, and from $v_2$ to $v_1$, otherwise.
Figure~\ref{fig:traversal.pieces}(b) illustrates $\gamma_{pq}$ for
both cases where $p=q$ and $p\neq q$.  The same rule is applied to
assign directions to all other edges in $G_i$.

In summary, we construct the directed graph $G_i$ for $i\in[m]$ to
represent how the disjoint components of $R_i$ are to be connected
into a single piece within $Q$.  Each vertex corresponds to an
interval in $\mathcal{I}_{i}^+$.  Each directed edge $e = (v_1,v_2)$
of $G_i$ represents a link instruction $(I_1, I_2, k)$, but $k$ is not
specified yet.  See Figure~\ref{fig:traversal.pieces}(c) for an
illustration of the directed graph $G_i$ that defines three link
instructions, where $P_i$ and $Q$ are as shown in
Figure~\ref{fig:traversal.pieces}(a).

\subsection{Layer assignments for link instructions}\label{subsubsec.layer.assignment}
For $i\in [m]$, let $G_i$ be the directed graph for
$R_i = \{C_1,C_2,\ldots ,C_t\}$.  Each edge in $G_i$ represents a link
instruction that reallocates layer segments within a specific layer.
The goal is to assign layers to link instructions so that no redundant
reallocation of layer segments is allowed and all elements in $R_i$
are eventually merged into a single piece for every $i =1,\ldots,m$.

The underlying graph of a digraph is its undirected version, obtained
by ignoring the directions of all edges.  Consider a connected
component $T$ of the underlying graph of $G_i$.  We assume that $T$
contains more than one vertex, as a component of size 1 does not
indicate any link instruction.  Each vertex of $T$ corresponds to a
circular interval in $\mathcal{I}^+_i$.  Let $\mathcal{I}_T$ be the
subset of $\mathcal{I}^+_i$ such that
$\mathcal{I}_T = \{I \in \mathcal{I}^+_i \mid I \text{ corresponds to
  a vertex in } T\}$.

\begin{figure}[!b]
  \centering
  \includegraphics[width=0.65\textwidth]{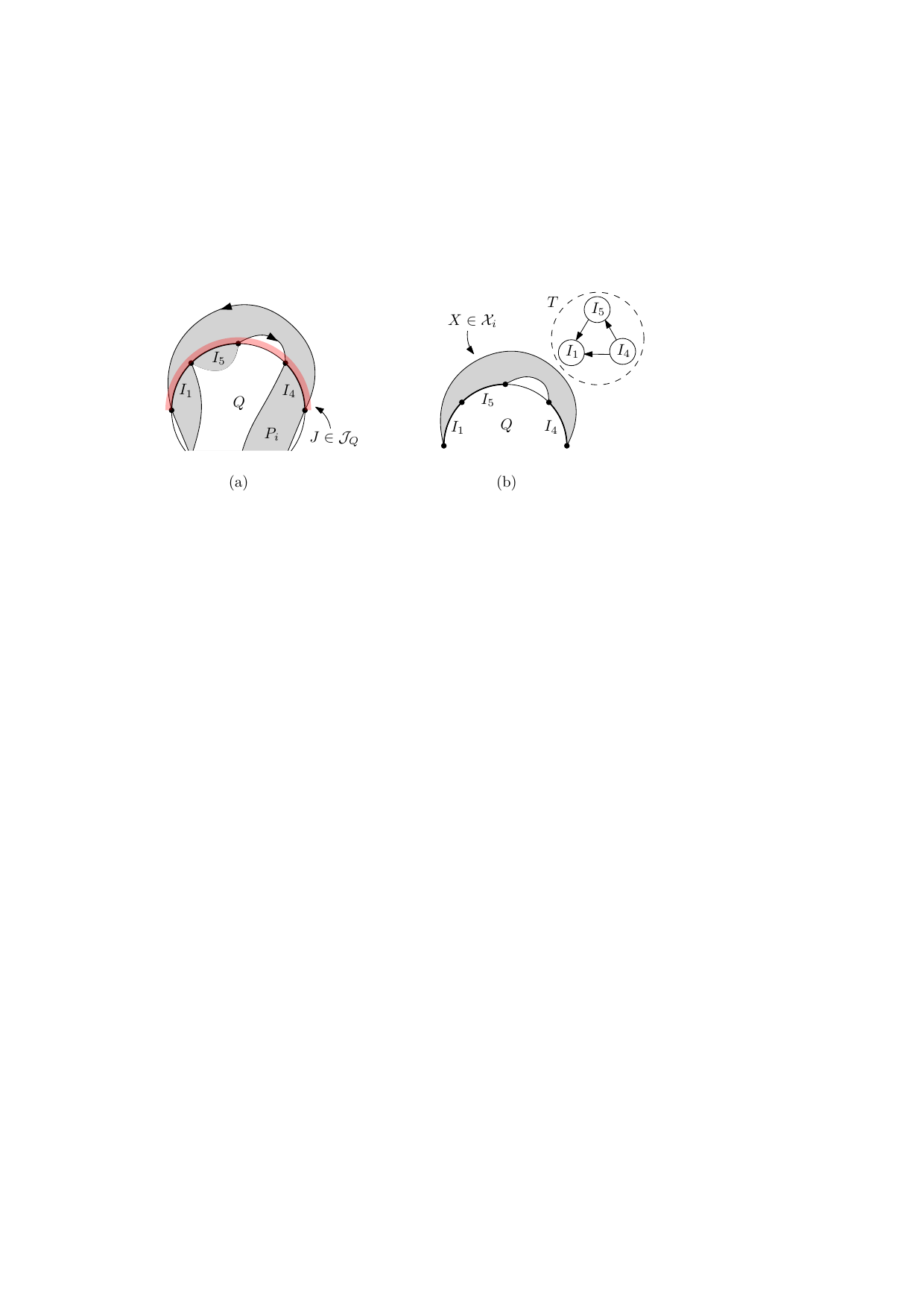}
  \caption{\small (a) The combine step merges $I_1$, $I_5$, and $I_4$
    to a circular interval $J\in \mathcal{J}_Q$.  (b) The interval $J$
    is derived from a connected component $T$ of the underlying graph
    of $G_i$.  This corresponds to a connected region
    $X \in \mathcal{X}_i$ that is bounded by the circular intervals
    associated with $T$.  }
  \label{fig:path.homotopy}
\end{figure}

We iteratively perform a combine step to merge all circular intervals
in $\mathcal{I}_T$ into a single circular interval on $\mybd{Q}$.  The
process begins with an initial set $S = \{u\}$ and a circular interval
$J_S$, where $u$ is any vertex of $T$ and $J_S$ is the circular
interval in $\mathcal{I}_T$ corresponding to $u$.  At each step, any
vertex $v \in T \setminus S$ is chosen if there exists an edge between
$v$ and some vertex $w \in S$.  Let $I_v \in \mathcal{I}_T$ be the
circular interval corresponding to $v$.  We add $v$ to $S$ and update
$J_S$ to be the smallest circular interval on $\mybd{Q}$ that spans
from $I_v$ to $J_S$ in counterclockwise direction if the edge is
directed from $v$ to $w$, or in clockwise direction otherwise.  Once
$S = T$, the process returns a single circular interval on $\mybd{Q}$,
denoted by $J_T$, corresponding to the connected component $T$ of
$G_i$.  See Figure~\ref{fig:path.homotopy}(a) for an illustration of
this process.

\[\mathcal{J}_Q\coloneqq\{J_T \mid T\text{ is a connected
    component of }G_i \text{ with } |T|>1, \text{ for } i=1,2,\ldots,
  m\}.\] 
  Throughout, we use $I$ to denote intervals in
$\mathcal{I}_{Q}$ and $J$ for those in $\mathcal{J}_Q$ to emphasize
their distinct roles.  Each $J \in \mathcal{J}_Q$ is non-degenerate.
For any $J_1,J_2 \in \mathcal{J}_Q$ with
$\myint{J_1} \cap \myint{J_2} \neq \emptyset$, there exists a proper
containment between $J_1$ and $J_2$.  To prove this, we present a
lemma relating connectivity in the underlying graph of $G_i$ to the
existence of a path contained in a connected component of
$P_i \setminus \myint{Q}$, as illustrated in
Figure~\ref{fig:path.homotopy}(b).

\begin{lemma}\label{lem:path.connected}
  The vertices corresponding to $I_1$ and $I_2$ in $\mathcal{I}^+_i$
  are connected in the underlying graph of $G_i$ if and only if there
  is a simple path connecting $p\in I_1$ to $q\in I_2$ contained in
  $P_i \setminus \myint{Q}$.
\end{lemma}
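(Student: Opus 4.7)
The plan is to prove the two directions separately.

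For the forward direction, I would take a connecting walk $v_1 = u_0, u_1, \ldots, u_r = v_2$ in the underlying graph of $G_i$. By construction, each edge $(u_{s-1}, u_s)$ corresponds to a sub-arc of $\mybd{P_i}$ lying in $\mybd{P_i} \setminus \myint{Q}$ and joining endpoints of the intervals of $u_{s-1}$ and $u_s$. I would concatenate these sub-arcs, bridging consecutive endpoints along portions of the intervals themselves (which lie on $\mybd{Q} \cap P_i \subseteq P_i \setminus \myint{Q}$), producing a walk in $P_i \setminus \myint{Q}$ from a point on $I_1$ to a point on $I_2$. Since $P_i \setminus \myint{Q}$ is a closed $2$-dimensional region, this walk can be pruned to a simple path by excising sub-loops at each self-intersection.

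For the backward direction, given a simple path $\alpha \subseteq P_i \setminus \myint{Q}$ from $p \in I_1$ to $q \in I_2$, the approach is to continuously deform $\alpha$ within its connected component $Y$ of $P_i \setminus \myint{Q}$ until it lies entirely on $\mybd{Y}$, as an alternating concatenation of sub-arcs of $\mybd{P_i} \setminus \myint{Q}$ and intervals of $\mathcal{I}_i^+$. This deformed path translates directly into a walk in $G_i$---sub-arcs correspond to edges and intervals correspond to vertices---so $v_1$ and $v_2$ must lie in a common connected component of the underlying graph.

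The main obstacle is that this deformation relies on $Y$ being simply connected; otherwise $\mybd{Y}$ could split into multiple boundary curves with no edges of $G_i$ directly bridging them. I intend to resolve this by ruling out holes in $Y$. Suppose for contradiction that $Y$ surrounds a bounded region $H \subseteq P_i \setminus Y$. Then $H$ must be a union of components $C_{j'} \in R_i$ that are disjoint from $\mybd{P_i}$. For each such $C_{j'}$, we have $C_{j'} \cap \mybd{P_i} = \emptyset$, hence $\mybd{C_{j'}} \cap \mybd{P_i} = \emptyset$, and since $\mybd{C_{j'}} \cap \myint{Q} \subseteq \mybd{P_i}$ (because $\mybd{C_{j'}} \subseteq \mybd{P_i} \cup \mybd{Q}$), we get $\mybd{C_{j'}} \subseteq \mybd{Q}$. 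As $\mybd{Q}$ is a simple closed curve whose only closed connected subset forming a loop is $\mybd{Q}$ itself, this forces $\mybd{C_{j'}} = \mybd{Q}$ and $C_{j'} = Q$. But then $Q \subseteq P_i$, yielding $\dwidth{\vv{v}}{Q} \le \dwidth{\vv{v}}{P_i} \le 1$ for some $\vv{v} \in W$, whence $\nopt{Q} = 1$, contradicting the standing assumption $\nopt{Q} > 1$. Therefore $Y$ is simply connected, the deformation is valid, and $v_1, v_2$ are connected in the underlying graph of $G_i$.
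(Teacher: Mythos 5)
Your overall route is the same as the paper's: the forward direction concatenates the boundary transition arcs associated with the edges of a connecting path in $G_i$ (bridging along the intervals and pruning to a simple path), and the backward direction passes to the connected component of $P_i \setminus \myint{Q}$ containing the given path and converts the alternating structure of its boundary (intervals of $\mathcal{I}^+_i$ interleaved with sub-arcs of $\mybd{P_i}$) into connectivity in $G_i$. Your discussion ruling out holes in that component is a reasonable supplement: the paper only asserts, after the lemma, that each component of $P_i\setminus\myint{Q}$ is a weakly simple polygon, and your contradiction via $Q\subseteq P_i$ and the standing assumption $\nopt{Q}>1$ is in the same spirit as arguments the paper uses elsewhere (e.g., in Corollary~\ref{cor:unique.path.structure}). (Minor imprecision there: a hole of $Y$ could a priori also contain other components of $P_i\setminus\myint{Q}$, not only components of $\myint{P_i}\cap\myint{Q}$; that case has to be dismissed separately, though it dies quickly since the hole's boundary would then lie in two disjoint closed components.)

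The genuine gap is the sentence ``sub-arcs correspond to edges,'' which is precisely the point the paper's proof labors over and which you assert without justification. Edges of $G_i$ are created only for transitions between \emph{distinct} elements of $R_i$ (the construction adds an edge only when $j\neq j'$). So a sub-arc of $\mybd{P_i}$ along $\mybd{Y}$ joining two distinct intervals $I_k\neq I_{k+1}$ yields an edge of $G_i$ only after one shows that $I_k$ and $I_{k+1}$ are derived from distinct components $C_j\neq C_{j'}$; if both belonged to the same $C_j$, your boundary walk would have no edge available for that step and the translation into a walk in $G_i$ breaks down. The paper proves exactly this claim: assuming $I_k,I_{k+1}\in\mathcal{I}^+_{ij}$, it concatenates the outer sub-arc of the component boundary with an inner path through $C_j$ to form a loop, uses simple connectivity of $P_i$ to conclude the enclosed portion of $\mybd{Q}$ lies in $\mybd{C_j}$, and contradicts the disjointness of the intervals of $\mathcal{I}_{ij}$. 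Alternatively, you could argue that such a sub-arc is a single transition arc of the counterclockwise traversal of $\mybd{P_i}$ whose exit and entry intervals are $I_k$ and $I_{k+1}$, and then invoke Lemma~\ref{lem:dis.interval.pos.length} (same component would force $I_k=I_{k+1}$); but some such argument must be supplied, and your proposal contains neither.
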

\begin{proof}
  Let $v_1$ and $v_2$ be vertices of $G_i$ that correspond to $I_1$ and $I_2$, respectively.
  Assume that $v_1$ and $v_2$ are adjacent in $G_i$. 
  Then, $I_1$ and $I_2$ provide an exit-entry pair 
  for a transition between distinct pieces in $R_i$ while traversing $\mybd{P_i}$.
  There is a path $\gamma$ along $\mybd{P_i}$
  that starts from the endpoint of $I_1$ and terminates at the endpoint of $I_2$.
  % The path $\gamma$ lies outside $\myint{Q}$. % except at its endpoints.
  % Since $P_i$ is simple, we can perturb the path $\gamma$ slightly to obtain a simple path that does not intersect 
  % $Q$, except at its endpoints.
  We add the portions of $I_1$ and $I_2$ to $\gamma$ so that 
  the extended path connects $p$ and $q$. 
  Since $\gamma$ does not enter the interior of $Q$, 
  the extended path lies in $P_i \setminus \myint{Q}$.
  % Excluding $Q$ from $P_i$ decomposes $P_i \setminus Q$ into
  % multiple connected components in $\mathcal{X}_i$, each being path-connected.  
  % There exists a component $X \in \mathcal{X}_i$ that all points in $\gamma'$ lies in $\mycl{X}$.

  Consider the case that $v_1$ and $v_2$ are not adjacent, 
  but are connected in the underlying graph of $G_i$.
  In that case, 
  there exists an indirect path between $v_1$ and $v_2$ in the underlying graph: 
  $v_1= u_1 - u_2 - \cdots - u_{r+1}= v_2 $ 
  for $\{u_1,\ldots, u_{r+1}\} \subseteq V_i$. 
  For each consecutive pair $(u_k, u_{k+1})$, 
  we obtain a simple path $\gamma'_k \subseteq P_i \setminus \myint{Q}$  
  % $\mycl{X_k}$ for $X_k \in \mathcal{X}_i$ 
  that connects two points on their corresponding intervals in $\mathcal{I}_i^+$. 
  By choosing points appropriately, 
  we have simple paths $\{\gamma'_1, \ldots, \gamma'_r\}$ such that
  $\gamma'_1$ starts at $p$, $\gamma'_r$ terminates at $q$, and 
  each consecutive pair intersects at a common point.
  By concatenating these paths, 
  we obtain a new path in $P_i \setminus \myint{Q}$ that connects $p$ to $q$. 
  % Since each consecutive pair $\gamma'_k$ and $\gamma'_{k+1}$ intersect at a common interior point, 
  % we have $X_1 = X_2 = \cdots = X_r$. Thus, the new path lies in the closure of some connected component in $\mathcal{X}_i$.

  Now assume that there is a simple path $\gamma \subseteq P_i \setminus \myint{Q}$ 
  % $\gamma \subseteq \mycl{X}$ for some $X \in \mathcal{X}_i$
  that connects $p \in I_1$ to $q \in I_2$. 
  Since $\gamma$ is connected, 
  there exists a connected component $X$ of $P_i \setminus \myint{Q}$ that contains $\gamma$. 
  The boundary of $X$
  % The closure $\mycl{X}$ is a simple polygon 
  % whose boundary 
  consists of circular intervals in $\mathcal{I}^+_i$ and connected subpaths of $\mybd{P_i}$. 
  
  Let $\mathcal{I}_X \subseteq \mathcal{I}_i^+$ 
  be the set of circular intervals in $\mybd{X} \cap \mybd{Q}$.   
  Observe that $I_1$ and $I_2$ are contained in $\mathcal{I}_X$.
  When traversing $\mybd{X}$ in counterclockwise order,  
  every interval in $\mathcal{I}_X$ is encountered at least once.  
  Moreover, every transition between distinct intervals in $\mathcal{I}_X$ requires 
  traversing a portion of $\mybd{P_i}$ that connects them and lies outside $\myint{Q}$.
  Let $I_1 \rightarrow \cdots \rightarrow I_r$ be the sequence of intervals in $\mathcal{I}_X$
  in the order they are visited during the traversal. 
  We claim that $I_k$ and $I_{k+1}$ are derived from distinct elements in $R_i$ for every $k =1,\ldots, r-1$.
  
  Suppose that 
  both $I_k$ and $I_{k+1}$ are contained in $\mathcal{I}^+_{ij}$ for some $j\in [t]$.
  They are not adjacent along $\mybd{Q}$, 
  as $C_j$ is a simple polygon and its intersection with $\mybd{Q}$ is composed of pairwise disjoint intervals. 
  We have a simple path $\gamma^{\mathrm{out}} \subseteq \mybd{X}$ 
  that connects the endpoints of $I_k$ and $I_{k+1}$ and lies outside $\myint{Q}$. % except at its endpoints. 
  Moreover, we have another simple path $\gamma^{\mathrm{in}} \subseteq C_j$ that connects the same endpoints.
  Then, we construct a loop by concatenating $\gamma^{\mathrm{out}}$ and $\gamma^{\mathrm{in}}$ 
  at their shared endpoints. 
  
  Except at their shared endpoints, the two paths lie in $P_i \setminus Q$ and $P_i \cap \myint{Q}$, respectively.
  Since $P_i$ is simply connected and the loop lies in $P_i$, 
  the interior region of the loop is also contained in $P_i$. 
  The portion of $\mybd{Q}$ between $I_k$ and $I_{k+1}$ is contained in the loop, 
  and hence the portion must be contained in $\mybd{C_j}$. 
  This contradicts that 
  $I_k$ and $I_{k+1}$ are disjoint intervals in $\mathcal{I}_{ij}^+$.
  % This forces the portion of $\mybd{Q}$ 
  % between $I_k$ and $I_{k+1}$ to be contained in $\mybd{C_j}$, a contradiction. 
  Thus, each transition of intervals in $\mathcal{I}_X$ 
  induces an edge in $G_i$ between corresponding vertices, 
  and thus their corresponding vertices are connected in the underlying graph of $G_i$.
\end{proof}

Let $\mathcal{X}_i$ be the set of connected components in
$P_i \setminus \myint{Q}$.  Then each element in $\mathcal{X}_i$ is a
closed connected set.  Moreover, it is a (weakly) simple polygon.

For $p, q \in \mybd{Q}$, let $\paths(p, q)$ denote the set of all
simple paths from $p$ to $q$ contained in $P_i \setminus \myint{Q}$.
For $I_1,I_2 \in \mathcal{I}^+_i$, we define
$\paths(I_1,I_2) = \bigcup_{p \in I_1, q\in I_2} \paths(p,q)$.
Lemma~\ref{lem:path.connected} guarantees the existence of a path
between $I_1$ and $I_2$ whenever their corresponding vertices are
connected in the underlying graph of $G_i$.  Furthermore, all such
paths share a consistent topological behavior, such as winding around
$\mybd{Q}$ in the same direction.

\begin{corollary}\label{cor:unique.path.structure}
  Let $I_1,I_2 \in \mathcal{I}^+_i$ with $I_1 \neq I_2$. If
    $\paths(I_1,I_2) \neq \emptyset$, then all paths in
    $\paths(I_1,I_2)$ wind around $\mybd{Q}$ in the same direction,
  and there is a unique component $X \in \mathcal{X}_i$ that
    contains all such paths.
\end{corollary}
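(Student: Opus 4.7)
The plan is to prove the two assertions separately, using Lemma~\ref{lem:path.connected} and the fact, noted just before that lemma, that each component $X \in \mathcal{X}_i$ is a weakly simple polygon, hence simply connected.

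For the uniqueness of $X$, I would first pick any $\gamma \in \paths(I_1, I_2)$; since $\gamma$ is connected and lies in $P_i \setminus \myint{Q}$, it is contained in a single component $X \in \mathcal{X}_i$. The key step is to show that $X$ depends only on $I_1$ and $I_2$, not on the chosen $\gamma$. For this I would argue that every interval $I \in \mathcal{I}^+_i$ is contained in the closure of a unique component of $P_i \setminus \myint{Q}$: at an interior point $p$ of $I$, the point $p$ lies on $\mybd{Q}$ but not on any cut of $\Pi$, so a small neighborhood of $p$ is entirely in $P_i$ and its half outside $\myint{Q}$ locally touches only one component; continuity along the connected arc $I$ then forces all interior points of $I$ to border the same component. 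Applying this to both $I_1$ and $I_2$, and using that the two endpoints of $\gamma$ lie one in each, forces the two corresponding components to coincide with $X$, so every path in $\paths(I_1, I_2)$ lies in the same $X$.

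For the common winding direction, I would take two arbitrary paths $\gamma_1, \gamma_2 \in \paths(I_1, I_2)$ with endpoints $p_1, p_2 \in I_1$ and $q_1, q_2 \in I_2$, and close them into a single loop $\sigma$ by appending the sub-arc of $I_1$ from $p_1$ to $p_2$ and the sub-arc of $I_2$ from $q_2$ to $q_1$. Since $I_1, I_2 \subseteq X$ and both paths lie in $X$, the entire loop $\sigma$ is contained in $X$. Simple connectedness of $X$ implies that $\sigma$ is null-homotopic in $X \subseteq \mathbb{R}^2 \setminus \myint{Q}$, and because the closing sub-arcs lie on $\mybd{Q}$ itself and thus do not wind around $\myint{Q}$, this null-homotopy forces $\gamma_1$ and $\gamma_2$ to wind around $\mybd{Q}$ in the same direction.

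The main obstacle I expect is making the winding-direction claim topologically precise: winding numbers are naturally associated with closed curves, while $\gamma_1$ and $\gamma_2$ are arcs with possibly different endpoints. So one needs a clean definition of the winding direction of a path in the annulus-like region $\mathbb{R}^2 \setminus \myint{Q}$ (for instance via its homotopy class relative to endpoint motion along $\mybd{Q}$) and must verify that the closing sub-arcs really contribute no winding. The uniqueness claim, by contrast, reduces to a routine local continuity argument once the degenerate-interval case ruled out by $\mathcal{I}^+_i$ is handled.
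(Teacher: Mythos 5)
Your proposal is correct and follows the same core strategy as the paper: close $\gamma_1$ and $\gamma_2$ into a loop $\sigma$ using sub-arcs of $I_1$ and $I_2$, and derive the contradiction from simple connectedness. The minor differences are worth noting. For uniqueness of $X$, the paper simply observes that the loop $\sigma$ is connected and lies in $P_i \setminus \myint{Q}$, so it is contained in a single component, which must equal both $X_1$ and $X_2$; your route via ``each interval borders a unique component'' is fine but a bit over-engineered --- the shortest version is that $I_1$ (resp.\ $I_2$) is a connected subset of $P_i \setminus \myint{Q}$ and hence lies in a unique component, and the endpoints of any $\gamma$ inherit that component. For the winding direction, the paper appeals to simple connectedness of $P_i$ and derives $Q \subseteq P_i$, concluding $I_1 = I_2 = \mybd{Q}$ as the contradiction; you instead use simple connectedness of $X$ itself, which is cleaner because $X \subseteq \mathbb{R}^2 \setminus \myint{Q}$ immediately forces winding number zero around $\myint{Q}$, avoiding the containment detour. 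Your flagged obstacle (making ``winds in the same direction'' precise and verifying the closing arcs contribute no integer winding) is exactly the right thing to formalize; the winding number of $\sigma$ around a fixed point of $\myint{Q}$ is $0$ if the two paths agree in direction and $\pm 1$ if they disagree, and the fractional contributions of the closing arcs are absorbed because the total must be an integer.
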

\begin{proof}
  Assume that $\paths(I_1,I_2) \neq \emptyset$, since the statement is trivially true otherwise.
  Let $p_1, p_2 \in I_1$ and $q_1, q_2 \in I_2$ be points on $\mybd{Q}$. 
  By Lemma~\ref{lem:path.connected}, there exists $\gamma_1 \in \paths(I_1,I_2)$ and $\gamma_2\in \paths(I_1,I_2)$ that 
  connects $p_1$ to $q_1$ and $p_2$ to $q_2$, respectively.
  Moreover, $\gamma_1 \subseteq X_1$ and $\gamma_2 \subseteq X_2$ for some $X_1,X_2\in \mathcal{X}_i$.
  
  We obtain a closed loop $\ell$ in $P_i\setminus \myint{Q}$ 
  by connecting the endpoints of $\gamma_1$ and $\gamma_2$ through subpaths of $I_1$ and $I_2$. 
  Since $\ell$ intersects both $X_1$ and $X_2$, and each $X \in \mathcal{X}_i$ is a connected component, 
  we have $X_1=X_2$.
  % \complain{First, we don't know $v(I_1)$ and $v(I_2)$ are connected in the underlying graph.}

  Now suppose that $\gamma_1$ and $\gamma_2$ wind around $\mybd{Q}$ in different directions.
  That is, $\gamma_1$ is homotopic to a path along $\mybd{Q}$ 
  that winds around $\mybd{Q}$ in counterclockwise direction, 
  while $\gamma_2$ is homotopic to one that winds around it in clockwise direction.
  % By connecting the endpoints of $\gamma_1$ and $\gamma_2$ through subpaths of $I_1$ and $I_2$,  
  % we obtain a closed loop in $P_i$ that encloses $Q$ exactly once.
  Then the loop $\ell$ encircles $Q$ exactly once.
  Since $P_i$ is simply connected, this loop must be null-homotopic in $P_i$.
  This implies $Q \subseteq P_i$, and then, $I_1 =I_2 = \mybd{Q}$. 
  It is a contradiction since the set $\mathcal{I}_{Q}$ does not contain a loop.
  % Now $\gamma_1$ and $\gamma_2$ wind around $\mybd{Q}$ in the same direction.
  % If $\gamma_1$ and $\gamma_2$ intersect at a common interior point, 
  % we have $X_1 = X_2$ as shown in the proof of Lemma~\ref{lem:path.connected}. 
  % Otherwise, we can construct a simple closed loop in $P_i$ 
  % by concatenating $\gamma_1$ and $\gamma_2$ 
  % with subpaths of $I_1$ and $I_2$ that connect their endpoints. 
  % This loop encloses a region of positive area, denoted by $A$.
  % Since $\myint{A}$ lies in $P_i \setminus Q$, 
  % $A$ must be contained in the closure of some component in $\mathcal{X}_i$. 
  % Thus, $\gamma_1$ and $\gamma_2$ also lie in the closure of the same component, implying $X_1 = X_2$.  
\end{proof}

Using Lemma~\ref{lem:path.connected} and
Corollary~\ref{cor:unique.path.structure}, we prove a proper
containment relation.

\begin{lemma}\label{lem:proper.containment.intervals}
  For any distinct $J_1,J_2\in\mathcal{J}_Q$ with
  $\myint{J_1} \cap \myint{J_2} \neq \emptyset$, $J_1 \subsetneq J_2$
  or $J_2 \subsetneq J_1$.
\end{lemma}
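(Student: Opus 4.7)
The plan is to argue by contradiction via a topological obstruction in the disk obtained by compactifying the complement of $\myint{Q}$. Specifically, I would show that if neither $J_1 \subsetneq J_2$ nor $J_2 \subsetneq J_1$, then the arcs $J_1$ and $J_2$ cross on $\mybd{Q}$ with interleaving endpoints, which forces two disjoint simple arcs in a disk to have interleaved boundary endpoints---impossible by the Jordan curve theorem.

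First, I would identify the geometric regions associated to $J_1, J_2$. Each $J_k$ arises from a connected component $T_k$ of the underlying graph of some $G_{i_k}$, and by Corollary~\ref{cor:unique.path.structure} all paths realizing connectivity within $T_k$ lie in a unique component $X_k \in \mathcal{X}_{i_k}$. I would verify $X_1 \neq X_2$: if they coincided, Lemma~\ref{lem:path.connected} would force all vertices corresponding to $\mathcal{I}_{T_1} \cup \mathcal{I}_{T_2}$ into a single component of the underlying graph of $G_{i_1}$, contradicting $T_1 \neq T_2$. In either case ($i_1 = i_2$ or not), it then follows that $\myint{X_1} \cap \myint{X_2} = \emptyset$, since distinct connected components of $P_i \setminus \myint{Q}$ are disjoint, and different pieces $P_i$ have pairwise disjoint interiors.

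Next, assume for contradiction that $J_1, J_2$ are distinct, have intersecting interiors, and neither is properly contained in the other. Then their four endpoints interleave on $\mybd{Q}$; denote them $a_1, a_2, b_1, b_2$ in cyclic counterclockwise order, where $a_k, b_k$ are the endpoints of $J_k$. Inspecting the combine step, each endpoint of $J_k$ coincides with an endpoint of some interval $I_{a_k}, I_{b_k} \in \mathcal{I}_{T_k}$ that lies inside $J_k$. I would pick $p_k$ slightly counterclockwise of $a_k$ inside $I_{a_k}$ and $q_k$ slightly clockwise of $b_k$ inside $I_{b_k}$, so that $p_1, p_2, q_1, q_2$ still interleave on $\mybd{Q}$. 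By Lemma~\ref{lem:path.connected} there is a simple path $\gamma_k \subseteq P_{i_k} \setminus \myint{Q}$ from $p_k$ to $q_k$, and by Corollary~\ref{cor:unique.path.structure} we may take $\gamma_k \subseteq X_k$. A standard planar perturbation places $\gamma_k$ in $\myint{X_k}$ except at its two endpoints.

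Finally, I would derive the contradiction. Since $\myint{X_1} \cap \myint{X_2} = \emptyset$, the endpoints $p_k, q_k$ lie in pairwise interior-disjoint intervals of $\mathcal{I}^+_Q$, and no interior point of $\gamma_k$ lies on $\mybd{Q}$, we obtain $\gamma_1 \cap \gamma_2 = \emptyset$. Now consider the one-point compactification $D = (\mathbb{R}^2 \setminus \myint{Q}) \cup \{\infty\}$, a closed topological disk whose boundary is $\mybd{Q}$. Since each $X_k$ is bounded (contained in $P$), we have $\gamma_k \subseteq D \setminus \{\infty\}$, so $\gamma_1$ and $\gamma_2$ are disjoint simple arcs in the disk $D$ with endpoints on $\mybd{D}$ interleaved in the cyclic order $p_1, p_2, q_1, q_2$, contradicting the standard Jordan-curve fact that two such arcs in a disk must cross. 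The expected main obstacle is the careful reading of the combine step---in particular verifying that the endpoints of $J_T$ are endpoints of intervals of $\mathcal{I}_T$ contained in $J_T$, and handling mildly degenerate configurations where some endpoint pairs coincide (for instance $a_1 = b_2$), for which the interleaving of the nearby probe points $p_k, q_k$ must be verified directly.
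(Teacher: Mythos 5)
Your proof is correct and reaches the lemma by a closely related but reorganized route. Both arguments share the key ingredients: the interleaving of the endpoints of $J_1,J_2$, the paths supplied by Lemma~\ref{lem:path.connected}, the uniqueness of the hosting components $X_k$ from Corollary~\ref{cor:unique.path.structure}, and the disk topology of $\mathbb{R}^2\setminus\myint{Q}$. The paper first uses the interleaving and disk topology to conclude that $\gamma_1$ and $\gamma_2$ \emph{must} cross, and then derives the contradiction by analyzing the crossing in two cases: when $i=i'$ the crossing splices together a new path joining $I_1$ to $I_1'$, so by Lemma~\ref{lem:path.connected} the two components $T_1,T_2$ coincide; when $i\neq i'$ a small ball around the crossing point forces $\gamma_2$ into $\myint{P_i}$, violating interior-disjointness of the partition. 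You invert this: first establish that $\gamma_1$ and $\gamma_2$ can be taken \emph{disjoint} (trivially when $i_1=i_2$, since then $X_1\neq X_2$ are disjoint closed components; after perturbation into $\myint{X_1},\myint{X_2}$ when $i_1\neq i_2$), then invoke the Jordan-curve fact on the one-point compactification $D$ to contradict the interleaving. Your version is cleaner in that it merges the paper's two cases into one topological punchline and makes the disk picture explicit. The step that deserves more care is the perturbation: you implicitly need $\myint{X_k}\neq\emptyset$ and, more delicately, that the interior of $X_k$ is reachable near all interior points of $\gamma_k$ that fall on $\mybd{P_{i_1}}\cap\mybd{P_{i_2}}$. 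The first does hold (if $T_k$ has at least two vertices, $P_{i_k}$ must leave $\myint{Q}$ to join distinct components $C_j,C_{j'}$, so $X_k$ has positive area), but is worth stating; the second is exactly where the paper's local ball argument at a touching point would resurface as a fallback if $\gamma_k$ is forced onto a degenerate part of $X_k$ on the shared boundary. Your remark about probe-point interleaving under coincident endpoints (e.g.\ $a_1=b_2$) is the right degeneracy to watch, but this perturbation issue is an additional one.
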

\begin{figure}[!ht]
  \centering
  \includegraphics[width=.9\textwidth]{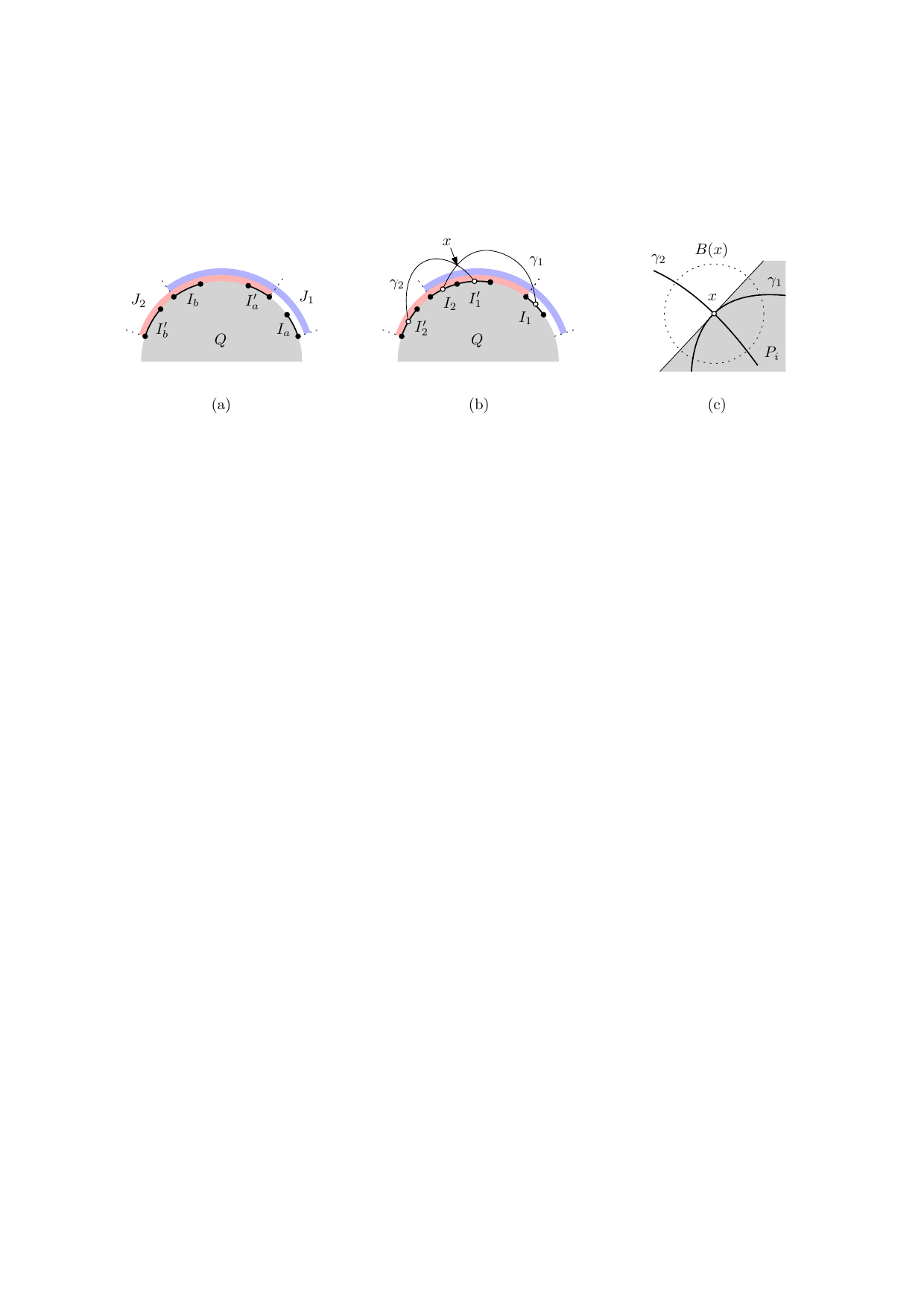}
  \caption{\small 
  (a) $J_1$ (or $J_2$) is the smallest circular interval that spans $I_a$ to $I_b$ (or $I_a'$ to $I_b'$) in counterclockwise direction. 
  (b) The paths $\gamma_1$ and $\gamma_2$, which lie outside $\myint{Q}$, connect $\myint{I_1}$ to $\myint{I_2}$, and 
  $\myint{I_1'}$ to $\myint{I_2'}$, respectively. They intersect at a common interior point $x$. 
  (c) When $i \neq i'$, $\gamma_2$ enters the interior of $P_i$ near the crossing point $x$. 
  }
  \label{fig:assign.layers}
\end{figure}
\begin{proof}
  Each $J \in \mathcal{J}_Q$ is constructed by merging intervals from $\mathcal{I}_i^+$
  that correspond to % the vertices of 
  a connected component $T$ in $G_i$, for some $i \in [m]$.
  % Let $T$ be the connected component in $G_i$ associated with $J$, and
  Let $\mathcal{I}(J)$ be the set of intervals in $\mathcal{I}^+_{Q}$ corresponding to the vertices of $T$. 
  Then, $J$ is the smallest circular interval that spans $I_1$ to $I_2$ 
  in counterclockwise direction for some $I_1,I_2 \in \mathcal{I}(J)$. 
  
  Let $I_a,I_b \in \mathcal{I}(J_1)$ and $I'_a, I'_b \in \mathcal{I}(J_2)$ 
  such that $J_1$ is the smallest interval spanning $I_a$ to $I_b$ in counterclockwise direction 
  and $J_2$ is defined analogously as the interval from $I_a'$ to $I_b'$ in the same direction. 
  See Figure~\ref{fig:assign.layers}(a).
  % If $J_1 \neq J_2$, then 
  Observe that all intervals in $\mathcal{I}(J_1)$ 
  are interior-disjoint from those in $\mathcal{I}(J_2)$ if $J_1 \neq J_2$. 
  It follows that $I_a$, $I_b$, $I_a'$, and $I_b'$ are all distinct circular intervals.

  Suppose that $J_1$ and $J_2$ properly intersect each other, that is 
  $J_1 \not\subseteq J_2$, $J_2 \not\subseteq J_1$, 
  and $\myint{J_1} \cap \myint{J_2} \neq \emptyset$.
  We only consider the case that $I_a' \subseteq J_1$ and $I_b' \not\subseteq J_1$, 
  since the other can be handled analogously.

  Since $I_a' \subseteq J_1$ and $I_b' \not\subseteq J_2$, 
  the intervals $I_a$, $I_a'$, $I_b$, and $I_b'$ appear in counterclockwise order along $\mybd{Q}$. 
  Let $\mathcal{I}(J_1)[I_a',I_b]$ denote the subset of $\mathcal{I}(J_1)$ 
  that appear along $\mybd{Q}$ from $I_a'$ to $I_b$ in counterclockwise order, including $I_b$.
  Let $\mathcal{I}(J_1)[I_a,I_a']$ denote the subset of $\mathcal{I}(J_1)$ consisting of intervals that do not belong to $\mathcal{I}(J_1)[I_a',I_b]$, 
  i.e., $\mathcal{I}(J_1) = \mathcal{I}(J_1)[I_a,I_a'] \cup \mathcal{I}(J_1)[I_a',I_b]$.
  Note that $\mathcal{I}(J_1)[I_a,I_a']$ and $\mathcal{I}(J_1)[I_a',I_b]$ are non-empty.
  Then, there exist $I_1 \in \mathcal{I}(J_1)[I_a,I_a']$ and $I_2 \in \mathcal{I}(J_1)[I_a',I_b]$ such that 
  the vertices corresponding to $I_1$ and $I_2$ are adjacent in $G_i$. 
  If no such $I_1$ and $I_2$ exist, intervals of $\mathcal{I}(J_1)[I_a,I_a']$ and $\mathcal{I}(J_1)[I_a',I_b]$ are not merged into $J_1$ 
  in the process of combine steps. 

  Similarly, let $\mathcal{I}(J_2)[I_a',I_2]$ denote the subset of $\mathcal{I}(J_2)$
  whose elements appear along $\mybd{Q}$ from $I_a'$ to $I_2$ in counterclockwise order, including $I_a'$,
  and let $\mathcal{I}(J_2)[I_2, I_b'] = \mathcal{I}(J_2)\setminus \mathcal{I}(J_2)[I_a',I_2]$. 
  By the same reasoning, there exist $I_1' \in \mathcal{I}(J_2)[I_a',I_2]$ and $I_2' \in \mathcal{I}(J_2)[I_2, I_b']$ such that 
  the corresponding vertices are adjacent in $G_{i'}$ for some $i' \in [m]$.
  Observe that $I_1, I_1', I_2$, and $I_2'$ appear in counterclockwise order along $\mybd{Q}$. 

  As $I_1$ and $I_2$ are non-degenerate intervals, 
  we choose points $p \in \myint{I_1}$ and $q\in \myint{I_2}$.
  By Lemma~\ref{lem:path.connected} and Corollary~\ref{cor:unique.path.structure}, 
  there exists a simple path from $p$ to $q$ in $P_i \setminus \myint{Q}$.
  % $\mycl{X}$ for some $X \in \mathcal{X}_i$.
  Moreover, this path is continuously deformed into one that winds around $\mybd{Q}$ counterclockwise.
  The same argument holds for two intervals $I_1'$ and $I_2'$ in $\mathcal{I}(J_2)$.

  For each pair $(I_1,I_2)$ and $(I_1',I_2')$, 
  we have a simple path that connects their interior points and lies outside $\myint{Q}$. 
  Let $\gamma_1$ and $\gamma_2$ denote these paths for $(I_1,I_2)$ and $(I_1',I_2')$, respectively.
  Then, $\gamma_1$ is the simple path that starts from a point in $\myint{I_1}$ and terminates at $\myint{I_2}$.
  % By the definitions of $\mathcal{I}[I_a,I_a']$ and $\mathcal{I}[I_a,I_b]$, 
  The deformed path of $\gamma_1$ 
  on $\mybd{Q}$ must include $I_a'$. 
  Similarly, % the endpoints of $\gamma_2$ lie on $\myint{I_1'}$ and $\myint{I_2'}$, respectively, and 
  the deformation of $\gamma_2$ on $\mybd{Q}$ includes $I_2$. 
  Then, $\gamma_1 \subseteq P_i$ and $\gamma_2 \subseteq P_{i'}$ must cross at their interior point, 
  denoted by $x$. See Figure~\ref{fig:assign.layers}(b).
  
  If $i \neq i'$, $x$ lies on $\mybd{P_i} \cap \mybd{P_{i'}}$
  since $\myint{P_i} \cap \myint{P_{i'}} = \emptyset$.
  Considering a sufficiently small ball $B(x)$ centered at $x$,  
  $\gamma_1$ divides the ball into two connected regions, 
  where one region is entirely contained in $P_i$. 
  Since $\gamma_2$ crosses $\gamma_1$ at $x$, it must intersect 
  the interior of the region of $P_i$, 
  a contradiction. Figure~\ref{fig:assign.layers}(c) illustrates
  this configuration.

  If $i = i'$, 
  the intersection between $\gamma_1$ and $\gamma_2$ 
  allows us to construct a new simple path in $P_i \setminus \myint{Q}$ % in some $X \in \mathcal{X}_i$ 
  that connects a point in $I_1$ to a point in $I_1'$.
  % This implies that both $\gamma_1$ and $\gamma_2$ are contained in the closure of a single component of $\mathcal{X}_i$.
  By Lemma~\ref{lem:path.connected}, 
  the vertices corresponding to $I_1$ and $I_1'$ are connected in the underlying graph of $G_i$. 
  Then, $I_1$ and $I_2$ are combined into a single interval in $\mathcal{J}_Q$, a contradiction. 
  Whether $i = i'$ or not,
  assuming $J_1 \not\subseteq J_2$ and $J_2 \not\subseteq J_1$ leads to a contradiction.
\end{proof}

By Lemma~\ref{lem:proper.containment.intervals}, we construct the
circular interval graph $G_{\mathcal{J}}$, where each node corresponds
to a circular interval in $\mathcal{J}_Q$, and a directed edge from
$J_1$ to $J_2$ is added if $J_1 \subsetneq J_2$ for
$J_1, J_2 \in \mathcal{J}_Q$.  Then, $G_{\mathcal{J}}$ is acyclic as
each edge corresponds to a strict containment.

\subparagraph{Transitive reduction of $G_{\mathcal{J}}$.}  The graph
$G_{\mathcal{J}}$ represents the transitive closure of the proper
containment relation.  That is, for $J_1, J_2 \in \mathcal{J}_Q$,
there is a directed edge from $J_1$ to $J_2$ in $G_{\mathcal{J}}$ if
there exists some $J' \in \mathcal{J}_Q$ such that $(J_1, J')$ and
$(J' , J_2)$ are directed edges in $G_{\mathcal{J}}$.  A
\emph{transitive reduction} of $G_{\mathcal{J}}$ is a directed graph
on the same vertex set with the minimum number of edges that preserves
all reachability relations of $G_{\mathcal{J}}$.  Since the transitive
reduction of a DAG is unique~\cite{BangJensen2008}, we denote the
transitive reduction of $G_{\mathcal{J}}$ by $\trg{G}_{\mathcal{J}}$.
Let $U\trg{G}_{\mathcal{J}}$ denote the underlying graph of
$\trg{G}_{\mathcal{J}}$.

\begin{lemma}\label{lem:inbranching}
  Any vertex of $\trg{G}_{\mathcal{J}}$ has out-degree at most one,
  and $U\trg{G}_{\mathcal{J}}$ is acyclic.
\end{lemma}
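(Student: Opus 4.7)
My plan is to derive the first claim by contradiction from Lemma~\ref{lem:proper.containment.intervals}, and then to deduce the second from the first by a short degree-counting argument on cycles.

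For the out-degree bound, suppose some $J\in \mathcal{J}_Q$ has two distinct outgoing edges $J\to J_a$ and $J\to J_b$ in $\trg{G}_{\mathcal{J}}$. Both encode strict containments $J\subsetneq J_a$ and $J\subsetneq J_b$, so $\myint{J}\subseteq \myint{J_a}\cap \myint{J_b}$, which is nonempty. Lemma~\ref{lem:proper.containment.intervals} then forces (after relabeling) $J_a \subsetneq J_b$, so $G_{\mathcal{J}}$ contains the length-two path $J\to J_a \to J_b$. In the transitive reduction of a DAG, any edge that is witnessed by a longer path must be removed, so $J\to J_b$ cannot be present in $\trg{G}_{\mathcal{J}}$, a contradiction.

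For acyclicity of $U\trg{G}_{\mathcal{J}}$, I first note that $\trg{G}_{\mathcal{J}}$ itself is acyclic, since $G_{\mathcal{J}}$ is a DAG (each edge encodes strict containment) and the transitive reduction of a DAG is a DAG. Assume for contradiction that $U\trg{G}_{\mathcal{J}}$ contains a simple cycle $J_1 - J_2 - \cdots - J_k - J_1$ with $k \ge 3$. Each of the $k$ underlying edges is directed in $\trg{G}_{\mathcal{J}}$ and contributes $1$ to the outgoing count of exactly one of its two endpoints, so the sum of outgoing-cycle-edges across the $k$ cycle vertices is exactly $k$. On the other hand, each $J_i$ has global out-degree at most one by the first part, so it contributes at most $1$ to this sum. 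Equality is forced, so every $J_i$ has exactly one outgoing cycle edge. Walking from any $J_i$ along its outgoing cycle edge and iterating, one traces out a directed cycle in $\trg{G}_{\mathcal{J}}$, contradicting its acyclicity.

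The main obstacle is really the first step: one has to notice that two outgoing edges at a common vertex $J$ force an interior-intersection between their targets, so that Lemma~\ref{lem:proper.containment.intervals} applies and supplies the containment producing the redundant bypass path. Once the out-degree bound is in hand, the acyclicity of the underlying graph is a routine handshake-style counting argument and requires no further geometric input.
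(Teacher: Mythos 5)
Your proof is correct and follows essentially the same route as the paper: both parts proceed by contradiction, the first via Lemma~\ref{lem:proper.containment.intervals} applied to the two targets of a vertex with out-degree two, and the second via the same handshake-style count of outgoing edges around the cycle. The only cosmetic difference is in the last step of the out-degree bound: you invoke directly the standard fact that a DAG's transitive reduction omits any edge witnessed by a longer path, whereas the paper re-derives this by explicitly exhibiting an alternative reduction and appealing to uniqueness.
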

\begin{proof}
  Suppose, for the sake of contradiction, that there is $J_1 \in \mathcal{J}_Q$ with an out-degree at least two.
  There exist two distinct intervals
  $J_2, J_3 \in \mathcal{J}_Q$ such that $(J_1, J_2)$ and $(J_1,J_3)$ are directed edges in $\trg{G}_{\mathcal{J}}$. 
  Since both $J_2$ and $J_3$ properly contain $J_1$,  
  their interiors must intersect over the interior of $J_1$.  
  % $J_1 \subsetneq J_2$ and $J_1 \subsetneq J_3$, 
  % $\myint{J_2} \cap \myint{J_3} \neq \emptyset$. 
  By Lemma~\ref{lem:proper.containment.intervals}, either $J_2 \subsetneq J_3$ or $J_3 \subsetneq J_2$. 
  Thus, there is a directed edge between $J_2$ and $J_3$ in $G_{\mathcal{J}}$. 
  Without loss of generality, assume that 
  the edge is directed from $J_2$ to $J_3$.

  We construct a graph $G'_{\mathcal{J}}$ from $\trg{G}_{\mathcal{J}}$
  by deleting the edge $(J_1,J_3)$ and 
  adding the edge $(J_2, J_3)$ 
  if it is not already present in $\trg{G}_{\mathcal{J}}$. 
  Note that $G'_{\mathcal{J}}$ is a subgraph of $G_{\mathcal{J}}$ and the number of edges in $G'_{\mathcal{J}}$ 
  is no greater than that of $\trg{G}_{\mathcal{J}}$. 
  Moreover, $G'_{\mathcal{J}}$ preserves the same reachability relations as $\trg{G}_{\mathcal{J}}$, 
  since the only modification from $\trg{G}_{\mathcal{J}}$ to $G'_{\mathcal{J}}$ is replacing the direct path from $J_1$ to $J_3$ 
  by the indirect path $J_1\rightarrow J_2 \rightarrow J_3$. 
  Therefore, $G'_{\mathcal{J}}$ is also a transitive reduction of $G_{\mathcal{J}}$, 
  contradicting that the transitive reduction of a directed acyclic graph is unique.

  Now suppose that there is a cycle $T$ in $U\trg{G}_{\mathcal{J}}$.
  Let $\trg{G}_{\mathcal{J}}[T]$ denote the subgraph of $\trg{G}_{\mathcal{J}}$ induced by $T$.
  As the sum of all out-degrees in a graph equals to the number of edges, 
  the sum of out-degrees in $\trg{G}_{\mathcal{J}}[T]$ is $|T|$. 
  By the earlier argument, every vertex of $T$ must have an out-degree exactly one.
  % Otherwise, some vertex in $T$ has out-degree at least two, contradicting the earlier observation. 
  Then, $T$ forms a directed cycle in $\trg{G}_{\mathcal{J}}$, 
  contradicting that $\trg{G}_{\mathcal{J}}$ is a directed acyclic graph.
  Thus, $U\trg{G}_{\mathcal{J}}$ is acyclic, and it forms a forest.
\end{proof}

By Lemma~\ref{lem:inbranching}, $U\trg{G}_{\mathcal{J}}$ is a forest.
Each directed tree $T$ of $\trg{G}_{\mathcal{J}}$ has total
out-degrees $|T|-1$, implying that exactly one vertex has out-degree
zero and all others have out-degree one. This structure corresponds to
an \emph{in-branching} tree~\cite{BangJensen2008}.

\subparagraph{Layer assignment of link instructions.}  In each
in-branching tree of $\trg{G}_{\mathcal{J}}$, the unique vertex with
out-degree zero is called the root.  The level of a vertex is defined
as the number of edges on the path from the root to that vertex plus
one, where the level of the root is one.  The level of each interval
$J \in \mathcal{J}_Q$ is defined to be the level of the corresponding
vertex in $\trg{G}_{\mathcal{J}}$.

Recall that each circular interval $J \in \mathcal{J}_Q$ is formed by
merging intervals in $\mathcal{I}_i^+$ for some $i \in [m]$.  These
intervals correspond to the vertices of a connected component $T$ of
$G_i$, where each edge of $T$ encodes a link instruction that merges
two elements in $R_i$.  Let $\inst(J)$ denote the set of link
instructions corresponding to the edges of $T$.  We assign the
  $k$-th layer to every link instruction in $\inst(J)$ if $J$ is at
level $k$ in $\trg{G}_{\mathcal{J}}$.  Note that the maximum level in
$\trg{G}_{\mathcal{J}}$ is at most
$\lfloor|\mathcal{I}^+_{Q}|/2\rfloor$ if each connected component of
$G_i$ has size two for all $i = 1,2,\ldots, m$.  Thus, we set the
number of layers in the corridor $\corridor$ to this maximum level:
$h = \lfloor|\mathcal{I}^+_{Q}|/2\rfloor$.

For each $J\in \mathcal{J}_Q$, all link instructions in $\inst(J)$
reallocate layer segments within a single layer $L_k$ to a common
piece $P_i$, where $i \in [m]$ and $k \in [\nlayers]$ are determined
by $J$.  Let $I_a, I_b \in \mathcal{I}_{i}^+$ such that $J$ is the
smallest circular interval that spans from $I_a$ to $I_b$ in
counterclockwise order.  Let $Z_a, Z_b\subseteq L_k$ be the layer
segments corresponding to $I_a$ and $I_b$, respectively.  Applying
$\inst(J)$ is equivalent to reallocating every $Z\in L_k[Z_a,Z_b]$ to
$P_i$.

In summary, $\inst_Q = \bigcup_{J\in \mathcal{J}_Q} \inst(J)$ defines
the set of all link instructions for reconfiguration.  Applying
$\inst_Q$ to $\Pi[Q]$ yields a partition
$Q = Q_1^\ast \cup \cdots \cup Q_m^\ast$, where each $Q_i^\ast$ is a
region assigned to $P_i$ within $Q$.  When $Q$ is
$\overline{W}$-convex with respect to $P$, each $Q_i^\ast$ is
connected and satisfies both constraints $W$ and $U$.

\section{Analysis of reconfigured non-guillotine partitions}\label{sec.analysis.reconfiguration}
Assuming that the link instructions in $\inst_Q$ have been
  applied to $\Pi[Q]$ in an arbitrary order, we verify the following
  statements in order to prove Theorem~\ref{thm:monotonicity.containment}.
\begin{enumerate}[{(1)}]\denseitems
\item For each $i = 1, \ldots, m$, the region allocated to $P_i$
  within $Q$ forms a single connected piece.
\item The reconfiguration of $\Pi[Q]$ is a solution to the problem
  $\prob{Q,W,U}$ if $Q$ is $\overline{W}$-convex with respect to $P$.
\end{enumerate}
It follows from statements (1) and (2) that the reconfigured
partition of $Q$ satisfies the constraints $W$ and $U$ while ensuring
that the number of pieces does not exceed $m$.

\subsection{Connectivity in reconfigured partitions}
For $i\in [m]$, let $\inst_i$ denote the subset of $\inst_Q$
consisting of instructions of the form $(I_1, I_2, k)$, where
$I_1, I_2 \in \mathcal{I}^+_i$ and $k \in [\nlayers]$.  Then,
$\inst_Q = \bigcup_{i\in[m]}\inst_i$.  We first show that the link
instructions in $\inst_i$ merge all elements of $R_i$ into a single
connected piece.  We then verify that applying the link
instructions in $\inst_Q \setminus \inst_i$ does not disconnect the
merged piece. Furthermore, the resulting reconfigured
  partition of $Q$ is invariant under the order in which the link
  instructions in $\inst_Q$ are applied to $\Pi[Q]$.

\subparagraph{Merging $R_i$ via $\inst_i$.}  We show that any two
pieces $C_{j}, C_{j'}\in R_i$ are connected by some link instructions
in $\inst_i$.  During the traversal on $\mybd{P_i}$, each continuous
path of $\mybd{C_j} \cap \myint{Q}$ is encountered at least once for
every $j \in [t]$, and we denote the sequence of these paths in the
order they are visited as $(\gamma_1, \gamma_2, \ldots, \gamma_l)$,
where $\gamma_1 = \gamma_l$.  Then, there exist indices
$k, k' \in [l]$ such that
$\gamma_{k} \subseteq \mybd{C_{j}} \cap \myint{Q}$ and
$\gamma_{k'} \subseteq \mybd{C_{j'}} \cap \myint{Q}$.  Since it is a
cyclic sequence with $\gamma_1 = \gamma_l$, we assume without loss of
generality that $k < k'$.

Recall that an edge of $G_i$ is added whenever two consecutive paths
in the sequence are derived from distinct elements in $R_i$.  This
indicates that a transition between them occurs during the traversal.
The link instruction associated with this edge merges the
corresponding elements in $R_i$.  The subsequence
$(\gamma_{k}, \gamma_{k+1}, \ldots, \gamma_{k'})$ contains multiple
transitions between distinct elements in $R_i$, starting from $C_{j}$
and eventually reaching $C_{j'}$.  Applying all link instructions
associated with the transitions in $(\gamma_{k}, \ldots, \gamma_{k'})$
results in $C_{j}$ and $C_{j'}$ being merged into a single connected
piece.

Let $Q_i$ denote the subregion of $Q$ resulting from merging
elements of $R_i$ via $\inst_i$, with no other instructions
  applied. Then, $Q_i$ consists of all elements in $R_i$ and layer
segments reallocated from other pieces:
$Q_i = \Bigl(\bigcup_{j \in [t]} C_j\Bigr) \cup \segout_i$, where
$\segout_i$ is the set of layer segments reallocated to $P_i$ by
$\inst_i$.

The region $\mathbf{L} = L_1 \cup \cdots\cup L_{\nlayers}$ denotes the
union of all layers in $Q$.  For each $C_j \in R_i$,
$\mathbf{L} \cap C_j$ consists of layer segments within $C_j$.  By
construction of layers, $C_j \setminus \mathbf{L}$ is connected and
non-empty.  We refer to this region as a \emph{core} of $C_j$, denoted
by $\core{C_j} = C_j \setminus \mathbf{L}$.  Each $C_j$ consists of
its core together with the layer segments it contains.  Let $\segin_i$
denote the set of all layer segments in $\mathbf{L} \cap C_j$ over all
$C_j \in R_i$.  Therefore, for $i\in [m]$, we have
$Q_i = \Bigl(\bigcup_{j \in [t]} \core{C_j}\Bigr) \cup \segin_i \cup
\segout_i.$
Figure~\ref{fig:applying.instructions}(a) illustrates the parts of
$Q_i$: cores and layer segments.  The shapes are drawn schematically
to reflect the topological structure, rather than an exact polygonal
description.

\begin{figure}[!t]
  \centering
  \includegraphics[width=0.85\textwidth]{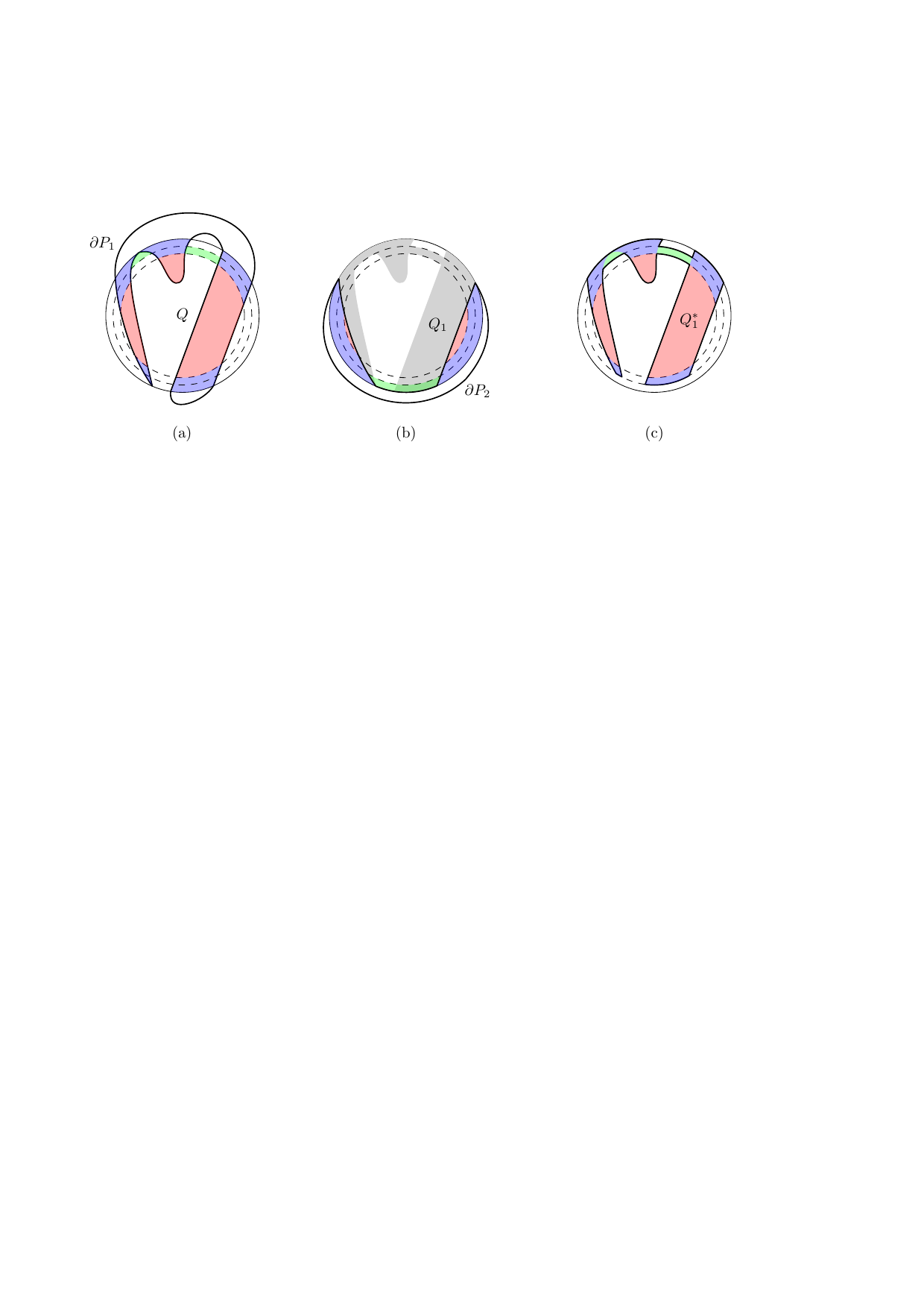}
  \caption{\small The corridor of $Q$ has two layers, with $|R_1| = 3$
    and $|R_2| = 2$.  (a) Red regions represent $\{\core{C} \mid$ for
    $C \in R_1$\}, while blue and green regions indicate layer
    segments in $\segin_1$ and $\segout_1$, respectively.  (b) $Q_1$
    is formed by merging $R_1$ according to $\inst_{1}$, and some
    layer segments in $\segin_1$ are reallocated by $\inst_2$.  (c)
    $Q_1^\ast$ is obtained by applying $\inst_Q\setminus \inst_1$ to
    $Q_1$.  }
  \label{fig:applying.instructions}
\end{figure}

We now turn to the link instructions in $\inst_Q \setminus \inst_i$
that are applied to $Q_i$.  Let $Q_i^{\ast}$ be a subregion of $Q_i$
that is obtained by applying all link instructions in
$\inst_Q \setminus \inst_i$ to $Q_i$.  Note that any part of
  $Q_i$ reallocated by $\inst_Q\setminus \inst_i$ lies within
$\segin_i \cup \segout_i$, and thus the cores remain unchanged.  We
show that $Q_i^\ast$ is well-defined, meaning that $Q_i^\ast$ is
invariant under the order in which the instructions in $\inst_Q$ are
applied.

\begin{lemma}\label{lem:layer.seg.preservation}
  Let $(I_1, I_2, k) \in \inst_i$ be a link instruction, and let $Z_1$
  and $Z_2$ be the layer segments in $L_k$ corresponding to $I_1$ and
  $I_2$, respectively.  Then, the layer segments $L_k[Z_1, Z_2]$
  remain assigned to $P_i$ under any link instruction in
  $\inst_Q \setminus \inst_i$.
\end{lemma}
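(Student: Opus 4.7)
The plan is to argue by contradiction using the level assignment in $\trg{G}_{\mathcal{J}}$ together with Lemma~\ref{lem:proper.containment.intervals}. Suppose some link instruction $\lambda' \in \inst_Q \setminus \inst_i$ reallocates a layer segment $Z \in L_k[Z_1, Z_2]$ to $P_{i'}$ for some $i' \neq i$. Because reallocation only rewrites assignments in the prescribed layer, $\lambda'$ must have the form $(I'_1, I'_2, k) \in \inst_{i'}$ for some $I'_1, I'_2 \in \mathcal{I}^+_{i'}$, with the associated arc $L_k[Z'_1, Z'_2]$ containing $Z$.

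Next I would place the two instructions in the layering scheme. By the construction in Section~\ref{subsubsec.layer.assignment}, $\lambda \in \inst(J)$ and $\lambda' \in \inst(J')$ for some $J, J' \in \mathcal{J}_Q$ with $\mathrm{level}(J) = \mathrm{level}(J') = k$ in $\trg{G}_{\mathcal{J}}$. Each element of $\mathcal{J}_Q$ is produced from a connected component of $G_{i''}$ for a single $i''$, so $J$ is tied to $R_i$ while $J'$ is tied to $R_{i'}$, which forces $J \neq J'$.

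The key geometric claim is that the circular arcs of $\mybd{Q}$ underlying $L_k[Z_1, Z_2]$ and $L_k[Z'_1, Z'_2]$ are contained in $J$ and $J'$, respectively. I would prove this by induction on the combine step: the step extends $J_S$ counterclockwise or clockwise exactly according to the direction of the edge of $T$ being processed, and this is the same convention used to orient the reallocation of the corresponding link instruction. Consequently, every per-edge reallocation arc is swept out during the growth of $J_T$ and therefore sits inside $J_T$. Since $Z$ lies in both arcs, $\myint{J} \cap \myint{J'} \neq \emptyset$. Lemma~\ref{lem:proper.containment.intervals} then yields $J \subsetneq J'$ or $J' \subsetneq J$, so in $G_{\mathcal{J}}$ there is a directed edge between them, and in $\trg{G}_{\mathcal{J}}$ (an in-branching forest by Lemma~\ref{lem:inbranching}) one is a strict descendant of the other. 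A strict descendant has strictly larger level than its ancestor, contradicting $\mathrm{level}(J) = \mathrm{level}(J') = k$.

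The step I expect to require the most care is the containment of the per-instruction reallocation arc inside $J_T$. The induction itself is routine, but aligning the edge-direction convention in $G_i$ (determined by the winding of $\gamma_{pq}$ outside $Q$) with the counterclockwise span that defines $L_k[Z_1, Z_2]$ is the linchpin that connects the graph-theoretic level argument to the geometric layout, and it is where a careful bookkeeping proof is needed.
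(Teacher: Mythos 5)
Your overall strategy matches the paper's: both arguments proceed by contradiction, using the layer-assignment rule together with Lemma~\ref{lem:proper.containment.intervals} to show that two distinct $J,J'\in\mathcal{J}_Q$ at the same level cannot both cover a common layer segment. You also correctly read off the level contradiction from the in-branching structure given by Lemma~\ref{lem:inbranching}.

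The genuine gap is the inference ``Since $Z$ lies in both arcs, $\myint{J}\cap\myint{J'}\neq\emptyset$.'' This step fails as stated when the circular interval $I_Z$ associated with $Z$ is \emph{degenerate} (a single point). In that case $I_Z$ can be precisely a shared endpoint of $J$ and $J'$, so $I_Z\subseteq J\cap J'$ while $\myint{J}\cap\myint{J'}=\emptyset$, and Lemma~\ref{lem:proper.containment.intervals} gives nothing. The paper's proof runs the same containment logic in the opposite direction and is forced to confront this case head on: it first deduces from the equal-level assumption that $J_1,J_2$ cannot be in proper containment, hence by Lemma~\ref{lem:proper.containment.intervals} they are interior-disjoint, hence the shared interval $I'$ must be degenerate and sit at a common endpoint of $J_1$ and $J_2$, and finally observes that the associated layer segment $Z'$ then lies in the \emph{open} arc $L_k(Z_b^1,Z_a^2)$, strictly between the two reallocation ranges, so neither $\inst(J_1)$ nor $\inst(J_2)$ actually reallocates it. That last observation is what closes the loop, and it is precisely what your proposal omits. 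To repair your write-up you would need an explicit argument that if $I_Z$ is a degenerate interval coinciding with an endpoint of $J$, then in the cyclic order of layer segments $Z$ falls outside the closed arc $L_k[Z_a,Z_b]$ (and therefore outside $L_k[Z_1,Z_2]$), contradicting your hypothesis; once that is in place, the degenerate case disappears and the interior-overlap conclusion is safe. A secondary remark: your induction over combine steps to show the per-instruction reallocation arc sits inside $J_T$ is more machinery than needed — the paper gets the same containment directly from the fact that $J$ is defined as the smallest circular interval spanning the extreme intervals $I_a,I_b$ of the component, so the full reallocation range $L_k[Z_a,Z_b]$ is in $J$ by construction, and the per-instruction arcs are subsets of it.
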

\begin{proof}
  Let $J_1 \in \mathcal{J}_Q$ be the circular interval such that $J_1$ assigns the 
  layer $L_k$ to the link instruction $(I_1,I_2,k)$.
  Then, $(I_1,I_2,k) \in \inst(J_1)$.
  % $[Z_1, Z_2]_k$ is contained in the region reallocated by $\inst(J_1)$.
  Suppose that there exists $J_2 \in \mathcal{J}_Q$ such that 
  some $\lambda_1 \in \inst(J_1)$ and $\lambda_2 \in \inst(J_2)$ reallocate the same layer segment $Z' \subseteq L_{k}$.

  By the rule of layer assignment, % for any $J_2 \in \mathcal{J}$, 
  the layers assigned to $\inst(J_1)$ and $\inst(J_2)$ are distinct 
  if $J_1$ and $J_2$ properly intersect. 
  % Suppose that 
  % for $k' \in [h]$.
  Let $I'\in\mathcal{I}_{Q}$ be the circular interval corresponding to $Z'$. 
  Then, $I'$ is contained in $J_1 \cap J_2$. 
  Since $J_1$ and $J_2$ are interior-disjoint, 
  this can only happen if $I'$ is degenerate that lies at a shared endpoint of $J_1$ and $J_2$. 

  % Let $i' \in [m]$ be the index such that % $\inst(J_1) \subseteq \inst_{i_1}$ and 
  % $\inst(J_2) \subseteq \inst_{i'}$.
  For $I_a^1, I_b^1 \in \mathcal{I}^+_{i_1}$, $J_1$ is the smallest circular interval that spans 
  from $I_a^1$ to $I_b^1$ in counterclockwise direction.
  Similarly, for $I_a^2, I_b^2 \in \mathcal{I}^+_{i_2}$, 
  $J_2$ is the smallest one that spans from $I_a^2$ to $I_b^2$ 
  in counterclockwise direction.
  Let $Z_a^\ell$ and $Z_b^\ell$ be layer segments in $L_{k}$ 
  corresponding to $I_a^\ell$ and $I_b^\ell$, respectively, for each $\ell = 1, 2$. 
  Then, link instructions in $\inst(J_1)$ reallocate $L_k[Z_a^1,Z_b^1]$ to $P_{i}$ and 
  link instructions in $\inst(J_2)$ reallocate $L_k[Z_a^2,Z_b^2]$ to $P_{i'}$ for some $i' \in [m]$.

  Without loss of generality, assume that $J_1$ and $J_2$ appear on $\mybd{Q}$ 
  in counterclockwise order. 
  The non-degenerate intervals $I_b^1$ and $I_a^2$ share an endpoint 
  at which the degenerate interval $I'$ lies.
  It follows that $Z'$ lies in $L_k(Z_b^1, Z_a^2)$, and thus is not reallocated by any link instruction in 
  % $Z'$ is not reallocated by 
  either $\inst(J_1)$ or $\inst(J_2)$. 
\end{proof}
Lemma~\ref{lem:layer.seg.preservation} implies that no layer
  segment is reassigned by more than one link instruction.
As a consequence, we obtain the following corollary, which
  states that $Q^\ast_i$ is well-defined.

\begin{corollary}\label{cor:num.reallocate}
  In the reconfiguration of $\Pi[Q]$, each layer segment in $Q$ is
  reallocated at most once.
\end{corollary}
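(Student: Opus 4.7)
The plan is to derive the corollary as an essentially immediate consequence of Lemma~\ref{lem:layer.seg.preservation}. I would fix an arbitrary layer segment $Z \subseteq L_k$ that is touched by some link instruction in $\inst_Q$, and track how its assignment can be changed over the course of the reconfiguration.

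First I would observe that, by construction, every link instruction $\lambda \in \inst_Q$ belongs to $\inst_i$ for a unique $i \in [m]$, and it reassigns each layer segment it targets to $P_i$. So if $Z$ is reallocated at some point during the reconfiguration by an instruction $\lambda_1 = (I_1, I_2, k) \in \inst_i$, with associated layer segments $Z_1, Z_2 \subseteq L_k$, then $Z \in L_k[Z_1, Z_2]$ and $Z$ becomes assigned to $P_i$ after $\lambda_1$ is applied. To establish the corollary, I would then argue that no subsequent instruction $\lambda_2 \in \inst_Q$ can change this assignment again.

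The case analysis is short. If $\lambda_2 \in \inst_{i'}$ for some $i' \neq i$, then $\lambda_2 \in \inst_Q \setminus \inst_i$, and Lemma~\ref{lem:layer.seg.preservation} asserts that every segment in $L_k[Z_1, Z_2]$ — in particular $Z$ — remains assigned to $P_i$ under $\lambda_2$, contradicting that $\lambda_2$ would reassign $Z$ to a different piece. Hence $\lambda_2 \in \inst_i$, and in that case $\lambda_2$ reassigns $Z$ to the same piece $P_i$, so the effective owner of $Z$ is unchanged.

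The only real subtlety — and the one place I would be careful — is the meaning of ``reallocated at most once.'' Multiple link instructions in the same family $\inst(J)$ may formally target overlapping ranges in $L_k[Z_a, Z_b]$, so in a purely syntactic sense a given $Z$ could be touched by several instructions. What the corollary must claim, to be useful for the analysis that follows in Section~\ref{sec.analysis.reconfiguration}, is that the owner of $Z$ is altered at most once across the entire application of $\inst_Q$, which is what the case analysis above yields. Once this interpretation is fixed, the proof is essentially a restatement of Lemma~\ref{lem:layer.seg.preservation} and requires no further geometric argument.
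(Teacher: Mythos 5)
Your proposal is correct and takes essentially the same route as the paper, which gives no separate proof but derives the corollary immediately from Lemma~\ref{lem:layer.seg.preservation}: instructions in $\inst_Q\setminus\inst_i$ cannot touch a segment already claimed for $P_i$, and instructions in $\inst_i$ keep the owner $P_i$. Your reading of ``reallocated at most once'' as ``assigned to a different piece at most once'' is also the paper's intended meaning, as its remark right after the corollary (ruling out chains $P_{i_1}\rightarrow P_{i_2}\rightarrow P_{i_3}$) confirms.
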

Corollary~\ref{cor:num.reallocate} ensures that each layer segment may
be reallocated to a different piece at most once, and no chains of
reallocations such as
$P_{i_1} \rightarrow P_{i_2} \rightarrow P_{i_3}$ with $i_1 \neq i_2$
and $i_2\neq i_3$ occur.

This lemma further implies that layer segments in $\segout_i$ are
preserved, while only those in $\segin_i$ are reallocated by
$\inst_Q \setminus \inst_i$.  Let $\seginast_i$ be the subset of
$\segin_i$ that consists of layer segments preserved under
$\inst_Q \setminus \inst_i$.  Then, $Q_i^\ast$ is the subpolygon that
is obtained from $Q_i$ by removing those layer segments in
$\segin_i \setminus \seginast_i$.  Then
$Q_i^\ast = \Bigl(\bigcup_{j \in [t]} \core{C_j}\Bigr) \cup
\seginast_i \cup \segout_i,$ where $\seginast_i\subseteq \segin_i$.
Figure~\ref{fig:applying.instructions}(b--c) illustrates the
construction of $Q_i^\ast$, in which only layer segments in $\segin_i$
are reallocated by $\inst_Q \setminus \inst_i$.

\subparagraph{Path-Connectivity of $Q_i^\ast$.}  To prove that
$Q_i^\ast$ forms a connected piece, it suffices to verify two types of
path-connectivity among its constituent parts, which must be preserved
during the reallocation induced by $\inst_Q \setminus \inst_i$.
\begin{itemize}\denseitems
\item Each layer segment in $\seginast_i \cup \segout_i$ is
  path-connected to some $\core{C_j}$ within $Q_i^\ast$ for
  $C_j \in R_i$.
\item The cores $\{\core{C_j} \mid C_j \in R_i\}$ are mutually
  path-connected within $Q_i^\ast$.
\end{itemize}
Here, two sets $A, B\subseteq X$ are said to be path-connected within
a region $X$ if there exists a path in $X$ joining some $a \in A$ and
$b \in B$.

\begin{figure}[!b]
  \centering
  \includegraphics[width=0.8\textwidth]{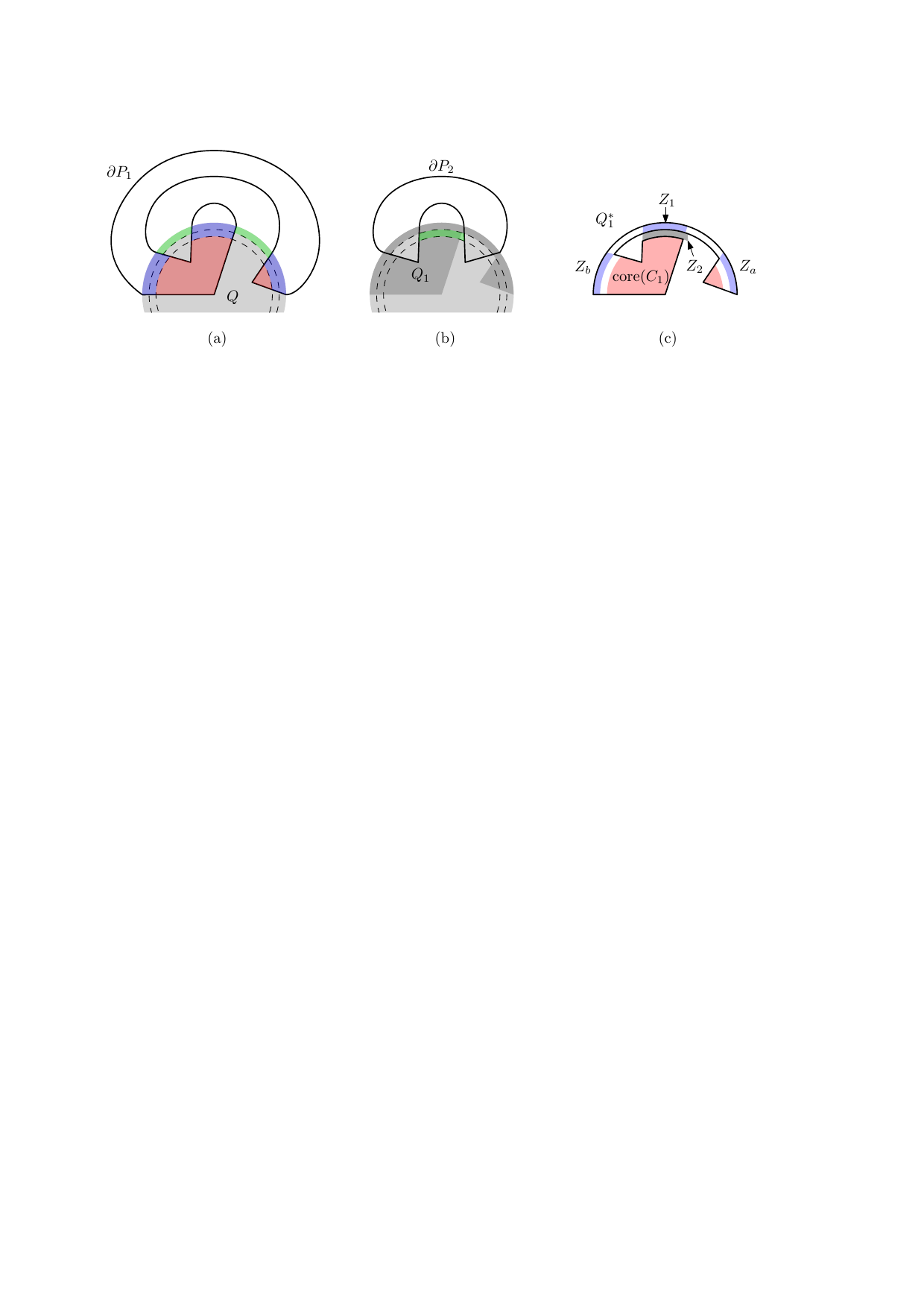}
  \caption{\small 
  (a) The piece $P_1$ defines the cores (red), $\segin_1$ (blue), and $\segout_1$ (green). 
  (b) The layer segment in $\segin_1$ is reallocated by $\inst_2$ which is defined on $P_2$. 
  (c) The layer segment $Z_1$ is connected to $\core{C_1}$ by a path within $Q_1^\ast$ that passes through $Z_b$.
  }
  \label{fig:segment.core.connect}
\end{figure}
\begin{lemma}\label{lem:segment.core.connect}
  Each layer segment in $\seginast_i\cup \segout_i$ is path-connected
  within $Q_i^*$ to the core of some $C_j \in R_i$.
\end{lemma}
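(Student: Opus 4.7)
The plan is to exhibit, for each layer segment $Z \in \seginast_i \cup \segout_i$, an explicit path within $Q_i^\ast$ from $Z$ to $\core{C_j}$ for some $C_j \in R_i$. First I reduce the case $Z \in \segout_i$ to the case $Z \in \seginast_i$. If $Z \in \segout_i$, then $Z$ is reallocated to $P_i$ by some $\lambda = (I_1, I_2, k) \in \inst_i$, and by Lemma~\ref{lem:layer.seg.preservation} the whole range $L_k[Z_1, Z_2]$ remains assigned to $P_i$ within $Q_i^\ast$. Consecutive segments in this range are adjacent across the separating edges of $\bdeset_Q$ in layer $L_k$, so $Z$ is path-connected within $Q_i^\ast$ to $Z_1$, which lies in the strip of $C_j$ (where $I_1 \in \mathcal{I}_{ij}^+$) and itself belongs to $\seginast_i$. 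Thus it suffices to treat $Z \in \seginast_i$.

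For $Z = Z_I^{(k)} \in \seginast_i$ in the strip of $C_j$ with $I \in \mathcal{I}_{ij}^+$, I set $\mathcal{S}(I) = \{J \in \mathcal{J}_Q \mid I \subseteq J\}$. Lemma~\ref{lem:proper.containment.intervals} forces $\mathcal{S}(I)$ to be linearly ordered by containment, so $\mathcal{S}(I)$ forms a root-to-descendant path in $\trg{G}_{\mathcal{J}}$ occupying a contiguous level range $[1, k_{\mathrm{top}}]$ (or is empty). If $k > k_{\mathrm{top}}$, then no $J \in \mathcal{S}(I)$ sits at any level $k' \geq k$, no reallocation from $\inst_Q \setminus \inst_i$ touches any of $Z_I^{(k)}, \ldots, Z_I^{(\nlayers)}$, and a direct descent $Z_I^{(k)} \to Z_I^{(k+1)} \to \cdots \to Z_I^{(\nlayers)}$ across the arcs $\Gamma_k, \ldots, \Gamma_{\nlayers}$ reaches $\core{C_j}$. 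Otherwise $k \leq k_{\mathrm{top}}$ and the unique $J \in \mathcal{S}(I)$ at level $k$ must come from $G_i$ (else $Z_I^{(k)}$ would have been reallocated away, contradicting $Z \in \seginast_i$). Choose an extreme anchor $I_a \in \mathcal{I}(J) \subseteq \mathcal{I}_i^+$ of $J$ (the anchor whose ccw-start matches that of $J$), with corresponding piece $C_{I_a} \in R_i$. The path then detours along the range $L_k[Z_a, Z_b]$ (all reallocated to $P_i$ by $\inst(J)$ and preserved by Lemma~\ref{lem:layer.seg.preservation}) from $Z_I^{(k)}$ to $Z_{I_a}^{(k)}$, and then descends the strip of $I_a$ through $L_{k+1}, \ldots, L_{\nlayers}$ to $\core{C_{I_a}}$.

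The main obstacle is justifying the descent part of the detour: the strip of $I_a$ must be preserved at every level $k' > k$. This reduces to the structural claim that no $J' \in \mathcal{J}_Q$ with $J' \subsetneq J$ contains $I_a$, which ensures $\mathcal{S}(I_a)$ is confined to $J$ and its ancestors in $\trg{G}_{\mathcal{J}}$ and hence no $\inst_Q \setminus \inst_i$ reallocation affects $Z_{I_a}^{(k')}$ for $k' > k$. I plan to prove the claim by observing that the ccw-start of $I_a$ coincides with that of $J$, while any proper sub-interval $J' \subsetneq J$ has its ccw-start at the ccw-start of some interval in $\mathcal{I}(J')$, which is distinct from $I_a$ (it lies either in a different connected component of $G_i$ or in some $\mathcal{I}_{i'}^+$ with $i' \neq i$). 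Because distinct intervals in the decomposition $\mathcal{I}_Q$ of $\mybd{Q}$ never share a ccw-start point, the ccw-start of $J'$ is strictly later than that of $J$, forcing $I_a \not\subseteq J'$. Careful handling of shared endpoints between adjacent intervals is the technical crux; combined with Lemma~\ref{lem:layer.seg.preservation}, this yields the required preservation, and the detour plus descent completes the path within $Q_i^\ast$.
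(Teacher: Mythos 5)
Your proposal is correct and follows essentially the same route as the paper's proof: reduce to segments in $\seginast_i$, settle the unobstructed case by direct descent through the strip of $I$, and otherwise detour within $L_k$ along the range reallocated by the unique level-$k$ interval $J\in\mathcal{J}_Q$ containing $I$ (which must arise from $G_i$, by Lemma~\ref{lem:layer.seg.preservation} and preservation of $Z$), then descend from an extreme anchor of $J$, using that no $J'\subsetneq J$ in $\mathcal{J}_Q$ contains that anchor. Your chain $\mathcal{S}(I)$ with contiguous levels and the endpoint-distinctness argument simply make explicit what the paper asserts from the minimality of $J$ together with the interior-disjointness of anchor intervals of distinct elements of $\mathcal{J}_Q$, so the two arguments coincide in substance.
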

\begin{proof}
  Let $Z_1$ be a layer segment of $C_{j}$ in $L_{\alpha}$, for $j \in [t]$ and $\alpha \in [h]$. 
  Let $I_1 \in \mathcal{I}_{ij}$ be a circular interval corresponding to $Z_1$.
  By Lemma~\ref{lem:layer.seg.preservation}, 
  every layer segment in $\segout_i$ is preserved under $\inst_Q \setminus \inst_i$, and 
  it is connected by a path within $Q^\ast_i$ to some layer segment in $\seginast_i$. 
  Thus, to show the path-connectivity of $Z_1$ to some core, 
  it suffices to consider the case that $Z_1 \in \seginast_i$.

  Assume that $Z_1$ is not path-connected to $\core{C_{j}}$ within $Q_i^\ast$.
  Then, there exists another layer segment $Z_2\in \segin_i \setminus \seginast_i$ 
  corresponding to $I_1$ such that $Z_2$ lies within $L_{\beta}$ for $\alpha < \beta < h$.
  That is, $Z_2$ is reallocated to a piece other than $P_i$, 
  thereby preventing a path between $Z_1$ and $\core{C_{j}}$. 
  
  Let $\lambda_2 \in \inst_Q \setminus \inst_i$ 
  be the link instruction that reallocates $Z_2$, and 
  let $J_2\in \mathcal{J}_Q$ be the circular interval that assigns the $\beta$-th layer %Layer $k_2$ 
  to $\lambda_2$. 
  Since $J_2$ has level $\beta$, 
  there exists $J_1 \in \mathcal{J}_Q$ with level $\alpha$ that contains $J_2$. 
  Note that $I_1 \subseteq J_2 \subseteq J_1$. 

  There exist $I_a, I_b \in \mathcal{I}^+_{Q}$ such that 
  $J_1$ is the smallest circular interval on $\mybd{Q}$ 
  spanning $I_a$ to $I_b$ in counterclockwise direction. 
  Let $Z_a$ and $Z_b$ be
  the layer segments in $L_{\alpha}$ corresponding to $I_a$ and $I_b$, respectively. 
  The link instructions $\inst(J_1)$ reallocates layer segments in $L_{\alpha}[Z_a, Z_b]$ to some piece. 
  Since $Z_1 \in L_{\alpha}[Z_a, Z_b]$ and $Z_1 \subseteq P_i$, this piece must be $P_i$.
  Moreover, we previously showed that every layer segment in $L_{\alpha}[Z_a, Z_b]$ 
  is preserved under $\inst_Q\setminus \inst_i$. 
  This implies that $Z_1$ is path-connected to both $Z_a$ and $Z_b$ in $Q^\ast_i$.
  Thus, the path-connectivity of $Z_1$ to a core within $Q^\ast_i$ can be reduced to that of $Z_b$ (or $Z_a$).

  Let $j' \in [t]$ such that $Z_b$ is the layer segment of $C_{j'}\in R_i$, 
  i.e., $I_b \in \mathcal{I}^+_{ij'}$.
  Since $J_1$ is the smallest circular interval in $\mathcal{J}_Q$ containing $I_a$ and $I_b$, 
  it follows that any proper subinterval $J \subsetneq J_1$ excludes $I_b$.
  % Since $I_b$ participates only in the construction of $J_1$,
  % any circular interval $J \in \mathcal{J}_Q$ with $J\subsetneq J_1$ necessarily excludes $I_b$.
  That is, for all $k>\alpha$, 
  no link instruction in $\inst(J)$ reallocates 
  a layer segment corresponding to $I_b$ in $L_k$. % for $k > \alpha$.
  Thus, $Z_b$ is path-connected to the core of $C_{j'}$ within $Q_i^\ast$.
  Figure~\ref{fig:segment.core.connect} illustrates this situation: 
  although $Z_1$ is not directly connected to the core of $C_1$, 
  a detour path within $Q_1^\ast$ 
  exists from $Z_1$ to the core via $Z_b$.
\end{proof}

By Lemma~\ref{lem:segment.core.connect}, every layer segment
  in $\seginast_i \cup \segout_i$ has a path to some core within
  $Q_i^\ast$. It remains to show that
$\core{C_1}, \ldots, \core{C_t}$ are mutually path-connected within
$Q^\ast_i$.

\begin{lemma}\label{lem:core.mutual.connect}
  All cores of elements in $R_i$ are mutually path-connected within
  $Q_i^\ast$.
\end{lemma}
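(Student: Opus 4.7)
The plan is to show that the cores of all elements of $R_i$ lie in a single connected component of $Q_i^\ast$ by combining the bridge-preservation of Lemma~\ref{lem:layer.seg.preservation} with the core-reachability of Lemma~\ref{lem:segment.core.connect}. The starting observation is that, before any removal, the region $Q_i$ is connected: the counterclockwise cyclic traversal of $\mybd{P_i}$ visits at least one path in $\mybd{C_j}\cap\myint{Q}$ for every $j\in[t]$, so each transition between two distinct elements of $R_i$ along this cyclic sequence contributes an edge to the underlying graph of $G_i$ and, hence, an edge to the piece-level graph on $R_i$. The piece-level graph is therefore connected, and for any two pieces $C_j,C_{j'}\in R_i$ there is a sequence $C_j=C_{j_0},C_{j_1},\ldots,C_{j_s}=C_{j'}$ in which every consecutive pair is merged by some link instruction in $\inst_i$.

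First I would verify that every piece-level edge survives in $Q_i^\ast$ as a connected bridge. For the pair $(C_{j_\ell},C_{j_{\ell+1}})$, let $(I_\ell,I_{\ell+1},k)\in\inst_i$ be the merging link instruction with $I_\ell\in\mathcal{I}_{ij_\ell}^+$ and $I_{\ell+1}\in\mathcal{I}_{ij_{\ell+1}}^+$, and let $Z_\ell,Z_{\ell+1}$ be the layer segments in $L_k$ corresponding to $I_\ell$ and $I_{\ell+1}$. Lemma~\ref{lem:layer.seg.preservation} applied to this instruction gives $L_k[Z_\ell,Z_{\ell+1}]\subseteq Q_i^\ast$, so $Z_\ell$ and $Z_{\ell+1}$ lie in a common connected subset of $Q_i^\ast$.

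Second, Lemma~\ref{lem:segment.core.connect} attaches each of $Z_\ell$ and $Z_{\ell+1}$ by a path within $Q_i^\ast$ to the core of some element of $R_i$. Combined with the preserved bridge from the previous step, this identifies the connected component of $Q_i^\ast$ containing some $\core{C_a}$ with the one containing some $\core{C_b}$, for $C_a,C_b\in R_i$. Chaining these identifications along the piece-level path from $C_j$ to $C_{j'}$ will place $\core{C_j}$ and $\core{C_{j'}}$ in a common connected component of $Q_i^\ast$, which yields the lemma.

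The main obstacle I anticipate is that Lemma~\ref{lem:segment.core.connect} only links a layer segment to \emph{some} core of $R_i$, not necessarily the core of the owning piece, so a priori the cluster of cores reached from $Z_\ell$ could drift away from $C_{j_\ell}$. To close this gap I would argue that the detours produced in the proof of Lemma~\ref{lem:segment.core.connect} never leave $R_i$: the outer circular interval $J_1$ used for a detour is assigned to $P_i$ (because the obstructed segment $Z_1$ lies in $P_i$), so both endpoint intervals $I_a,I_b$ of $J_1$ lie in $\mathcal{I}_i^+$ and the detour terminates at the core of some piece of $R_i$. Hence the equivalence relation on $\{\core{C}\mid C\in R_i\}$ generated by the preserved bridges together with the Lemma~\ref{lem:segment.core.connect} paths is well-defined on this set, and the connectivity of the piece-level graph forces it to have a single class, establishing the lemma.
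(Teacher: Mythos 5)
Your reduction to a connected piece-level graph on $R_i$, together with Lemma~\ref{lem:layer.seg.preservation} for the survival of each bridge $L_k[Z_\ell,Z_{\ell+1}]$, is sound, and you correctly identified the obstacle: Lemma~\ref{lem:segment.core.connect} only ties a surviving layer segment to the core of \emph{some} element of $R_i$, not to the core of its owning piece. The problem is that your proposed fix does not close this gap. Observing that every detour ends at a core of a piece of $R_i$ (which is just the statement of Lemma~\ref{lem:segment.core.connect}) is strictly weaker than what your chaining needs. Abstractly, the two facts you use --- (A) each bridge lies in a single connected component of $Q_i^\ast$, and (B) each surviving segment reaches some core of $R_i$ --- are consistent, for a connected piece-level graph on $\{C_1,C_2,C_3,C_4\}$ with edges $C_1$--$C_2$, $C_2$--$C_3$, $C_3$--$C_4$, with $Q_i^\ast$ having two components: one containing $\core{C_1}$, $\core{C_2}$ and the bridge of the edge $C_2$--$C_3$ (both of whose endpoint segments detour to $\core{C_1}$ or $\core{C_2}$), the other containing $\core{C_3}$, $\core{C_4}$ and the bridge of $C_1$--$C_2$. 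The trouble is precisely that the two bridges incident to a common piece $C_{j_\ell}$ use different layer segments of $C_{j_\ell}$, and nothing in (A)+(B) forces these to lie in the same component; so ``the connectivity of the piece-level graph forces a single class'' is an assertion, not a consequence.

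What is actually needed --- and what the paper proves --- is the stronger statement that for each $J_1\in\mathcal{J}_Q$ arising from a component $T$ of $G_i$ with assigned layer $\alpha$, every bridge endpoint $Z_\ell$ is path-connected in $Q_i^\ast$ to the core of its \emph{own} piece $C_{\pi(\ell)}$. The paper establishes this by showing that no $J\in\mathcal{J}_Q$ with $J\subsetneq J_1$ can intersect the interior of any constituent interval $I_\ell$ of $T$: otherwise, using Lemma~\ref{lem:path.connected} and Corollary~\ref{cor:unique.path.structure}, one gets two paths outside $\myint{Q}$ whose homotopic images on $\mybd{Q}$ interleave, forcing a crossing that either violates interior-disjointness of $P_i$ and $P_{i'}$ (case $i\neq i'$) or merges the two components of $G_i$ and gives $J=J_1$ (case $i=i'$). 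Since every instruction at a layer deeper than $\alpha$ comes from an interval nested inside $J_1$, the entire column of layer segments beneath $Z_\ell$ stays assigned to $P_i$, so $Z_\ell$ reaches $\core{C_{\pi(\ell)}}$ directly, and only then does the chaining along $\inst_i$ go through. Your proposal contains no substitute for this crossing argument, so as written it has a genuine gap at its central step.
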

\begin{proof}
  Let $J_1 \in \mathcal{J}_Q$ be a circular interval 
  that is formed by a connected component $T$ in $G_{i}$. % for $i \in [m]$.
  The vertices in $T$ correspond to intervals $\{I_1, I_2,\ldots, I_{|T|}\} \subseteq \mathcal{I}^+_{i}$, 
  which are listed in counterclockwise order along $\mybd{Q}$.
  Then, $J_1$ is the smallest circular interval spanning $I_1$ to $I_{|T|}$ in counterclockwise direction. 
  The link instructions in $\inst(J_1)$ merge the corresponding elements $C_{\pi(1)}, \ldots, C_{{\pi(|T|)}}$ into a single piece, 
  where $\pi\colon [|T|]\to [t]$ is defined by $I_\ell \in \mathcal{I}^+_{i\pi(\ell)}$.

  Let $\alpha \in [h]$ be the layer index assigned to $\inst(J_1)$, and
  let $Z_\ell$ denote the layer segment in $L_{\alpha}$ corresponding to $I_\ell$ for each $\ell = 1,\ldots, |T|$.
  The link instructions in $\inst(J_1)$ reallocate 
  the layer segments $L_{\alpha}[Z_1, Z_{|T|}]$ to $P_{i}$. 
  By Lemma~\ref{lem:layer.seg.preservation}, the layer segments in $L_{\alpha}[Z_1, Z_{|T|}]$ 
  remain assigned to $P_i$ under the link instructions $\inst_Q \setminus \inst_i$. 
  We begin by showing that each $Z_\ell$ is path-connected to the corresponding $\core{C_{\pi(\ell}}$ within $Q^\ast_i$. 
  
  Let $J_2 \in \mathcal{J}_Q$ be the circular interval that is properly contained in $J_1$. 
  We show that $J_2$ must be interior-disjoint from each $I_j \in \{I_1,\ldots,I_{|T|}\}$.
  Suppose not, and there exists such a $J_2$ that contains some $I_{\ell'} \in \{I_1,\ldots,I_{|T|}\}$.
  Assume that $J_2$ is constructed in $G_{i'}$ for some $i' \in [m]$.
  Let $I_a$ and $I_b$ be two intervals in $\mathcal{I}^+_{i'}$ such that 
  $J_2$ is the smallest circular interval 
  spanning $I_a$ to $I_b$ in counterclockwise direction.
  Since $J_1 \neq J_2$, 
  $I_a \neq I_{\ell'}$ and $I_b \neq I_{\ell'}$.
  Then, $I_1$, $I_a$, $I_{\ell'}$, and $I_b$ 
  appear along $\mybd{Q}$ in counterclockwise order.
  
  By Lemma~\ref{lem:path.connected}, 
  there exist simple paths 
  $\gamma_1 \subseteq P_{i} \setminus \myint{Q}$ and 
  $\gamma_2 \subseteq P_{i'} \setminus \myint{Q}$
  such that 
  $\gamma_1$ connects an interior point of $I_1$ to an interior point of $I_{\ell'}$, and 
  $\gamma_2$ connects an interior point of $I_a$ to an interior point of $I_b$.
  Each $\gamma_r$ can be continuously deformed into 
  a simple path $\tilde{\gamma}_r \subseteq \mybd{Q}$ for $r = 1,2$. 
  Observe that 
  $\tilde{\gamma}_1$ traverses the arc from $I_1$ to $I_{\ell'}$, and 
  $\tilde{\gamma_2}$ traverses the arc from $I_a$ to $I_b$, both in counterclockwise order.
  Given the cyclic order $I_1,I_a,I_{\ell'}, I_b$ along $\mybd{Q}$, 
  $\tilde{\gamma}_1$ and $\tilde{\gamma}_2$ must properly intersect.
  Thus, their original paths $\gamma_1$ and $\gamma_2$ intersect at a common interior point $x$. 

  Consider the case that $i \neq i'$.
  Then $x$ lies in $\mybd{P_{i}} \cap \mybd{P_{i'}}$ 
  since $P_{i}$ and $P_{i'}$ are interior-disjoint. 
  A sufficiently small ball $B$ centered at $x$ is split by $\gamma_1$ 
  into two connected regions, 
  at least one of which lies entirely within $P_{i}$. 
  Moreover, $\gamma_2$ also passes through $x$ and crosses $\gamma_1$ at that point. 
  Thus, $\gamma_2$ must enter the interior of $P_{i}$, contradicting that $\gamma_2 \subseteq P_{i'}$.  

  In the case that $i = i'$, 
  we obtain a new simple path $\gamma'$ by concatenating subpaths of $\gamma_1$ and $\gamma_2$ at $x$. 
  This path lies in $P_{i} \setminus \myint{Q}$, 
  and connects an interior point of $I_1$ to an interior point of $I_a$.
  By Lemma~\ref{lem:path.connected}, 
  the vertices corresponding to $I_1$ and $I_a$ are connected in 
  the underlying graph of $G_i$, 
  which implies $J_1 = J_2$, a contradiction.
  Therefore, for any $J\in \mathcal{J}_Q$ with $J\subsetneq J_1$, 
  $J$ and $I_\ell$ are interior-disjoint for all $\ell =1 ,\ldots, |T|$.
  
  Now let $Z'$ be an arbitrary layer segment in $L_k$ corresponding to $I_\ell$, 
  for some $\ell \in [|T|]$ and some $\alpha < k < \nlayers$.
  % consider 
  % any $I_j \in \{I_1,\ldots,I_{|T|}\}$ and any layer index $k > \alpha$. 
  % Let $Z'$ be the layer segment in $L_k$ corresponding to $I_j$. 
  By the previous argument, no 
  $J \in \mathcal{J}_Q$ with $J\subsetneq J_1$ intersects $\myint{I_\ell}$, and 
  hence $Z'$ is not reallocated to any piece other than $P_i$.  
  Since $\ell$ was arbitrary, 
  it follows that the layer segment $Z_\ell$ in $L_{\alpha}$ 
  is path-connected to $\core{C_{\pi(\ell)}}$ within $Q^\ast_i$,
  for all $\ell = 1, \ldots, |T|$. 

  The layer segments $\{Z_1, \ldots, Z_{|T|}\}$ are merged by
  % path-connected within $Q^\ast_i$, 
  % since they are connected by 
  the link instructions in $\inst(J_1)$, % within $Q_i$, 
  and Lemma~\ref{lem:layer.seg.preservation} ensures that 
  this connectivity is preserved under the reconfiguration.
  As each $Z_\ell$ is path-connected to $\core{C_{\pi(\ell)}}$ within $Q_i^\ast$,
  all cores $\core{C_{\pi(1)}}, \ldots, \core{C_{{\pi(|T|)}}}$ are mutually path-connected in $Q_i^\ast$.
  We extend the argument to every link instruction in $\inst_i$.
  The cores of all elements in $R_i$ are connected within $Q_i$, 
  and this connectivity is preserved under the reconfiguration. 
\end{proof}

By Lemmas~\ref{lem:segment.core.connect}
and~\ref{lem:core.mutual.connect}, each $Q^\ast_i$ is connected.
Thus, applying $\inst_Q$ to $\Pi[Q]$ yields the partition
$\Pi^\ast[Q] = \{Q^\ast_1, Q^\ast_2, \ldots, Q^\ast_m\}$, where each
$Q_i^\ast$ is a connected subregion of $Q$.

\subparagraph{Remarks.}  Applying link instructions in $\inst_Q$ may
induce holes within merged pieces in the reconfigured partition of
$Q$.  When $Q^\ast_i$ contains holes, reallocating them to $P_i$ does
not increase $\dwidth{\vv{v}}{Q^\ast_i}$ for every $\vv{v}\in \usetp$.
Therefore, each $Q^\ast_i$ can be regarded as a simple polygon.

\subsection{Feasibility of reconfigured partitions}
We first observe that the reconfigured partition
  $\Pi^\ast[Q] = \{Q^\ast_1, Q^\ast_2, \ldots, Q^\ast_m\}$ remains as a
  valid partition of $Q$. By construction, the cut
  constraint $U$ is also preserved: each layer segment is bounded by
the boundaries of layers and the cuts from $\Pi[Q]$, all aligned with
  directions in $U$. It remains to check the unit-width
  constraint $W$.

Since $\Pi= \{P_1,\ldots, P_m\}$ is a solution to $\prob{P,W,U}$,
there exists a vector $\vv{v_i} \in W$ such that
$\dwidth{\vv{v_i}}{P_i} \le 1$.  In other words, there exists a unit
strip $H_i$ with normal vector $\vv{v_i}$ that contains $P_i$.  If
every layer segment reallocated by $\inst_i$ is contained within
$H_i$, then $Q^\ast_i$ also satisfies unit-width constraint $W$.

Assume that $\lambda=(I_1,I_2,k')$ is a link instruction associated
with a directed edge from $v_1$ to $v_2$ in $G_i$ for
$I_1, I_2 \in \mathcal{I}^+_i$ and $k' \in [h]$.  Let $j, j' \in [t]$
be two distinct indices such that $I_1 \in \mathcal{I}^+_{ij}$ and
$I_2 \in \mathcal{I}^+_{ij'}$.  Recall that the edge between $v_1$ and
$v_2$ is added to $G_i$ if and only if there is a transition between
$\mybd{C_{j}}$ and $\mybd{C_{j'}}$ during the counterclockwise
traversal on $\mybd{P_i}$.

For each $k \in [h]$, let $Z_1^k$ and $Z_2^k$ denote the layer
segments in $L_k$ corresponding to $I_1$ and $I_2$,
respectively.  The link instruction $\lambda$ reallocates the layer
segments in $L_{k'}(Z_1^{k'}, Z_2^{k'})$ to $P_i$, where $Z_1^{k'}$
and $Z_2^{k'}$ are already assigned to $P_i$ in $\Pi[Q]$.  We define
$\segunion_\lambda \coloneqq \bigcup_{k \in [h]}\bigcup_{Z \in
  \mathcal{Z}^k}Z,$ where
$\mathcal{Z}^k\coloneqq L_k(Z_1^{k}, Z_2^{k}).$ It suffices to
  show that $\segunion_\lambda \subseteq H_i$.  If this inclusion
  holds, every layer segment reallocated by $\lambda$ lies within
  $H_i$, and thus the unit-width constraint $W$ is preserved.

\subparagraph{Decomposition of $\mybd{\segunion_\lambda}$.}  The
region $\segunion_\lambda$ is connected and bounded by four parts: two
continuous portions of the inner and outer boundaries of $\corridor$,
and two subsegments of edges from $C_{j}$ and $C_{j'}$.  The outer
boundary of $\corridor$ is $\mybd{Q}$ and its inner boundary is
$\mybd{Q^\phi}$, where $Q^\phi$ is the inner $\phi$-offset polygon of
$Q$.

For the sake of clarity, we introduce interval notation to represent
subpaths of $\mybd{Q}$ and $\mybd{Q}^\phi$.  For any two points
$x,y \in \mybd{Q}$, we denote by $\mybd{Q}[x,y]$ the portion of
$\mybd{Q}$ from $x$ to $y$ in counterclockwise order, including both
endpoints.  Let $\mybd{Q}(x,y] = \mybd{Q}[x,y]\setminus\{x\}$,
$\mybd{Q}[x,y) = \mybd{Q}[x,y]\setminus\{y\}$, and
$\mybd{Q}(x,y) = \mybd{Q}[x,y]\setminus\{x,y\}$.  Similarly, we define
$\mybd{Q}^\phi[x,y], \mybd{Q}^\phi(x,y], \mybd{Q}^\phi[x,y)$, and
$\mybd{Q}^\phi(x,y)$ as portions of $\mybd{Q}^\phi$.

Since $I_1$ and $I_2$ are non-degenerate intervals on $\mybd{Q}$, let
$I_1 = \mybd{Q}[p_1,q_1]$ and $I_2 = \mybd{Q}[p_2,q_2]$ for some
points $p_1,q_1,p_2,q_2 \in \mybd{Q}$ with $p_1 \neq q_1$ and
$p_2 \neq q_2$. The portion of $\mybd{\segunion_\lambda}$
  contained in $\mybd{Q}$ is $\mybd{Q}[q_1,p_2]$. Let
  $r_1, r_2\in \mybd{Q}^\phi$ be the endpoints of the portion of
  $\mybd{\segunion_\lambda}$ lying on $\mybd{Q}^\phi$, which we denote
  by $\mybd{Q}^\phi[r_1, r_2]$. Finally, the parts of
$\mybd{\segunion_\lambda}$ along the edges of $C_{j}$ and $C_{j'}$
correspond to the segments $\overline{q_1r_1}$ and
$\overline{p_2r_2}$, respectively.  Thus, $\mybd{\segunion_\lambda}$
decomposes into the four parts
$\mybd{Q}[q_1,p_2],\mybd{Q}^\phi[r_1,r_2],\overline{q_1r_1}$, and
$\overline{p_2r_2}$.  This decomposition is shown in
Figure~\ref{fig:Z_lambda}(a).

Since $H_i$ is convex, $\segunion_\lambda \subseteq H_i$ if and only
if all parts of $\mybd{\segunion_\lambda}$ are contained in $H_i$.
Note that $\overline{q_1r_1}$ and $\overline{p_2r_2}$ lie in $H_i$, as
both $C_{j}$ and $C_{j'}$ are contained in $P_i$.  To prove
$\segunion_\lambda \subseteq H_i$, it remains to show that
$\mybd{Q}[q_1,p_2]$ and $\mybd{Q}^\phi[r_1,r_2]$ are contained in
$H_i$.

\begin{figure}[!t]
  \centering
  \includegraphics[width=0.6\textwidth]{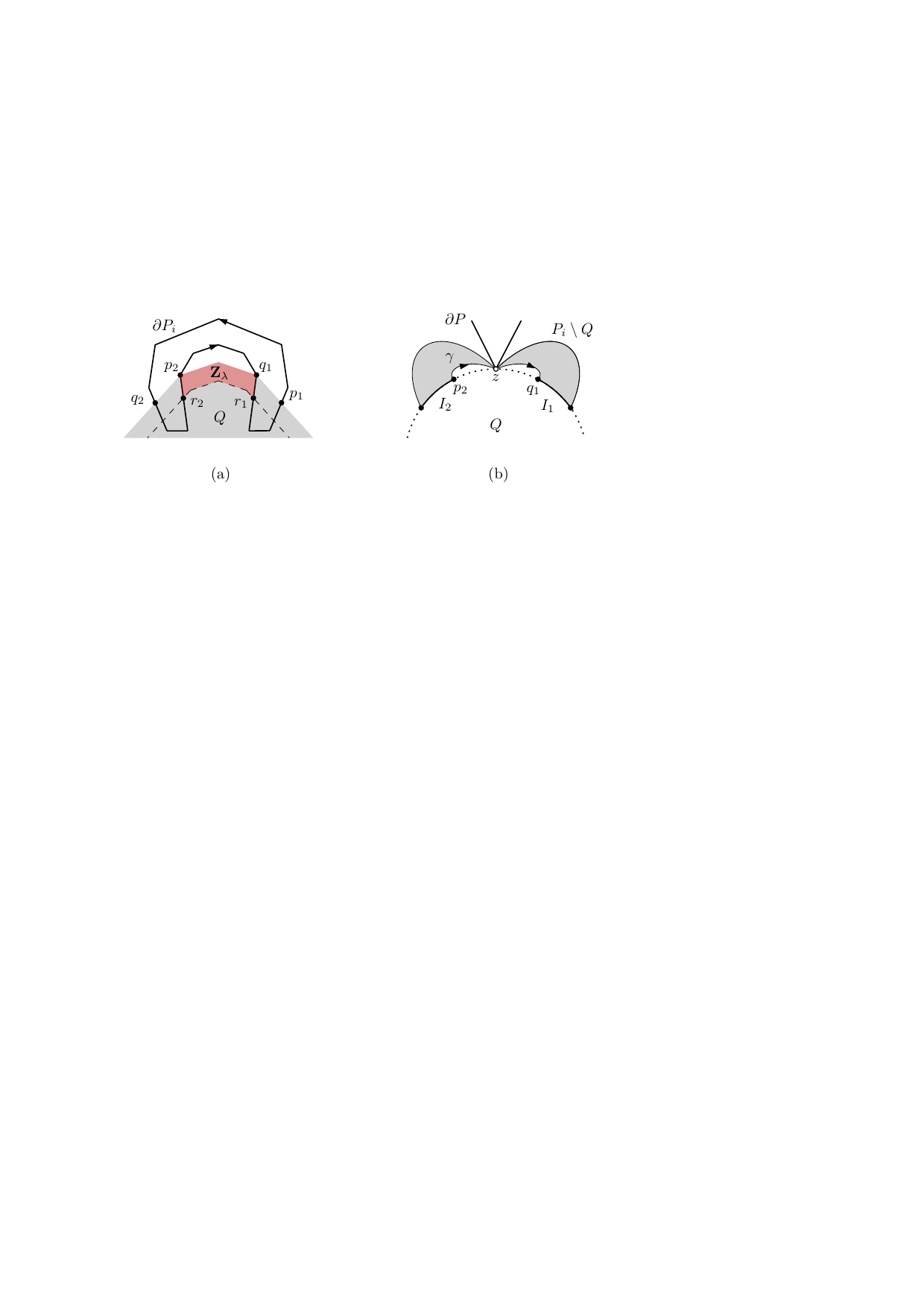}
  \caption{\small (a) The link instruction $\lambda = (I_1,I_2,k)$ is
    derived from the traversal of $\mybd{P_i}$ from $p_1$ to $q_2$ or
    from $p_2$ to $q_1$, where $I_1 = \mybd{Q}[p_1,q_1]$ and
    $I_2 = \mybd{Q}[p_2,q_2]$.  $\mybd{\segunion_\lambda}$ consists of
    $\mybd{Q}[q_1,p_2],\mybd{Q}^\phi[r_1,r_2],\overline{q_1r_1}$, and
    $\overline{p_2r_2}$.  (b) The path $\gamma$ from $p_2$ to $q_1$
    must pass through $z$ when $\mybd{P}$ intersects $\mybd{Q}$ at
    $z$.  }
  \label{fig:Z_lambda}
\end{figure}

\subparagraph{Containment of $\mybd{Q}[q_1, p_2]$ in $\myint{P}$.}
During the transition between $\mybd{C_{j}}$ and $\mybd{C_{j'}}$ in
the counterclockwise traversal on $\mybd{P_i}$, it follows a simple
path, denoted by $\gamma$, which lies outside $\myint{Q}$.  The path
$\gamma$ exits and re-enters $\myint{Q}$ through the endpoints of
$I_1$ and $I_2$.

Let $v_1$ and $v_2$ denote the vertices in $G_i$ corresponding to
$I_1$ and $I_2$, respectively.  The link instruction
$\lambda= (I_1, I_2, k')$ is derived from the directed edge
$(v_1,v_2)$ in $G_i$.  The direction of the edge is determined by how
$\gamma$ winds around $\mybd{Q}$ (either clockwise or
counterclockwise) and whether the path exits $\myint{Q}$ through the
endpoint of $I_1$ or that of $I_2$.  If $\gamma$ winds around
$\mybd{Q}$ counterclockwise, it exits $\myint{Q}$ from $p_1$ and
re-enters at $q_2$.  Otherwise, it exits $\myint{Q}$ from $p_2$ and
re-enters at $q_1$.  Figure~\ref{fig:Z_lambda}(a) illustrates both
cases.

Up to this point, we consider both clockwise- and
counterclockwise-type instructions.  However, counterclockwise ones
can be omitted in the reconfiguration.  Assume that $\gamma$ winds
around $\mybd{Q}$ counterclockwise.  By
Corollary~\ref{cor:unique.path.structure}, there exists a unique
component $X \in \mathcal{X}_i$ whose boundary contains $I_1$, $I_2$,
and the path $\gamma$.  Since $X$ is a weakly simple polygon, we can
traverse $\mybd{X}$ in counterclockwise order.  This traversal
encounters a sequence of circular intervals in $\mathcal{I}^+_i$, and,
from the construction of $G_i$, each consecutive pair of intervals in
this sequence corresponds to an edge in $G_i$ whose direction is
determined by whether the subpath of $\mybd{X}$ between the
  intervals winds around $\mybd{Q}$ clockwise or counterclockwise.

Since the traversal of $\mybd{X}$ follows the path $\gamma$ from $I_1$
to $I_2$ that winds around $\mybd{Q}$ counterclockwise,
the other subpath of $\mybd{X}$ runs from $I_2$ back to $I_1$,
  and winds around $\mybd{Q}$ clockwise. This path encounters a
sequence of intervals in $\mathcal{I}^+_i$ starting from $I_2$ to
$I_1$, where each consecutive pair of intervals induces a
clockwise-type edge in $G_i$.  In Figure~\ref{fig:path.homotopy}(b),
the counterclockwise-type path from $I_4$ to $I_1$ corresponds to the
sequence of clockwise-type paths $I_1\rightarrow I_5$ and
$I_5 \rightarrow I_4$.  Thus, link instructions associated with
counterclockwise-type edges can be omitted without affecting the
resulting partition $\Pi^\ast[Q]$.

Without loss of generality, we restrict our analysis to clockwise type
instructions.  In this case, $\gamma$ is a path from $p_2$ to $q_1$.
The path $\gamma$ can be continuously deformed into a path
$\tilde{\gamma}= \mybd{Q}[q_1, p_2]$ on $\mybd{Q}$ while preserving
its endpoints.  As illustrated in Figure~\ref{fig:Z_lambda}(b), if
$\mybd{P}$ intersects $Q$ at a point $z \in \mybd{Q}[q_1,p_2]$, then
$\gamma$ must pass through $z$.  This implies that $\mybd{P_i}$
intersects itself at $z$, contradicting that $P_i$ is a simple
polygon.

\begin{lemma}\label{lem:lying.interior.P}
  All points on $\mybd{Q}[q_1, p_2]$ lie within the interior of $P$.
\end{lemma}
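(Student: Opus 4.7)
The plan is a proof by contradiction. Suppose some point $z \in \mybd{Q}[q_1, p_2]$ fails to lie in $\myint{P}$. Since $\mybd{Q}[q_1, p_2] \subseteq Q \subseteq P$, we would have $z \in \mybd{P} \cap \mybd{Q}$. I will derive a contradiction by showing that a sufficiently small neighborhood of $z$ is entirely contained in $P$, which forces $z \in \myint{P}$.

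First, I would form the closed curve $L$ by concatenating $\gamma$ with $\mybd{Q}[q_1, p_2]$; these two arcs share exactly the endpoints $p_2$ and $q_1$. Since $\gamma$ winds clockwise around $\mybd{Q}$, it is homotopic in $\mathbb{R}^2 \setminus \myint{Q}$ to the reverse of $\mybd{Q}[q_1, p_2]$. Hence $L$ is null-homotopic in $\mathbb{R}^2 \setminus \myint{Q}$ and bounds a simply connected region $R \subseteq \mathbb{R}^2 \setminus \myint{Q}$ adjacent to $\mybd{Q}[q_1, p_2]$ on the exterior side of $\mybd{Q}$.

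Next, I would prove the containment $R \subseteq P$. By construction $\gamma \subseteq P_i \subseteq P$ and $\mybd{Q}[q_1, p_2] \subseteq Q \subseteq P$, so $L \subseteq P$. Because $P$ is a simple polygon, it is simply connected in $\mathbb{R}^2$, and any Jordan curve contained in such a region encloses a bounded open set that is itself contained in the region. Thus $R \subseteq P$. Now consider a small open ball $B$ centered at $z$. The arc $\mybd{Q}$ locally separates $B$ into a part inside $\myint{Q} \subseteq \myint{P}$ and a part on the exterior side of $\mybd{Q}$, which, for $B$ small enough, lies in the adjacent region $R \subseteq P$. Since $B \cap \mybd{Q} \subseteq \mybd{Q}[q_1, p_2] \subseteq P$, we obtain $B \subseteq P$, which forces $z \in \myint{P}$ and contradicts $z \in \mybd{P}$.

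The main obstacle is the tacit claim that $L$ is a simple closed curve, i.e., $\gamma$ meets $\mybd{Q}[q_1, p_2]$ only at the two shared endpoints. In principle $\gamma \subseteq \mybd{P_i} \setminus \myint{Q}$ could touch $\mybd{Q}$ at interior degenerate points of intervals in $\mathcal{I}_Q$. One can argue that any such touching point cannot lie on $\mybd{Q}[q_1, p_2]$ by combining the clockwise winding of $\gamma$ with the simplicity of $\mybd{P_i}$: a self-touching at an interior point of $\mybd{Q}[q_1, p_2]$ would create a second visit of $\mybd{P_i}$ to that point when traversing the pieces of $\mybd{P_i}$ lying on $\mybd{Q}$. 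Alternatively, one may perturb $\gamma$ slightly within $P_i \setminus \myint{Q}$ to a nearby simple path disjoint from $\mybd{Q}$ except at its endpoints; such a perturbation preserves the homotopy class and the relevant containment, and the argument above applies verbatim to the perturbed curve.
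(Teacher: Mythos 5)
Your approach is genuinely different from the paper's. The paper proves the claim locally: it first argues that $\gamma$ must pass through $z$ (because $\gamma$ is homotopic in $\mathbb{R}^2 \setminus \myint{Q}$ to the reversed arc, and the set $P\setminus\myint{Q}$ is pinched at $z$ if $z \in \mybd{P}$), and then does a small-ball analysis at $z$ using the counterclockwise orientation of $\mybd{P_i}$: the ball is split by $\gamma$ into $B_1\subseteq\myint{P_i}$ and $B_2$ meeting $\myint{Q}$, and $\mybd{P}$ would have to reach $z$ through $B_1\subseteq\myint{P}$, a contradiction. You instead argue globally: you form the closed curve $L=\gamma\cup\mybd{Q}[q_1,p_2]$, note it is null-homotopic both in $\mathbb{R}^2\setminus\myint{Q}$ and in $P$ (simply connected), and conclude that the pocket $R$ enclosed by $L$ lies in $P\setminus\myint{Q}$; then every $z$ on the arc has a full neighborhood covered by $R$, $\myint{Q}$, and the arc itself, all inside $P$, so $z\in\myint{P}$. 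Your version avoids the orientation bookkeeping and gives the conclusion for all points of the arc simultaneously, which is arguably cleaner.

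The subtle point you correctly flag, that $L$ might not be a Jordan curve because $\gamma\subseteq\mybd{P_i}\setminus\myint{Q}$ could touch $\mybd{Q}(q_1,p_2)$, is real, and neither of your two proposed repairs is airtight as stated. Fix (1) is not obviously correct: a touch of $\gamma$ at an interior point $w$ of the arc need not be a second visit of $\mybd{P_i}$ to $w$; $w$ may simply be a single degenerate touching of $\mybd{P_i}$ with $\mybd{Q}$, with the adjacent arcs of $\mybd{Q}$ owned by a different $P_{i'}$. Fix (2) is also delicate: pushing $\gamma$ off $\mybd{Q}$ while staying in $P_i\setminus\myint{Q}$ requires $\myint{P_i}$ to be on the far side of $\mybd{Q}$ at each touching point, which is not automatic. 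A cleaner repair, in the spirit of your argument, is to drop simplicity of $L$ entirely and use winding numbers: since $L\subseteq P\cap(\mathbb{R}^2\setminus\myint{Q})$ is null-homotopic in both $P$ and $\mathbb{R}^2\setminus\myint{Q}$, every point of nonzero winding number about $L$ lies in $P\setminus\myint{Q}$; for $z$ on the arc with $z\notin\gamma$, the winding number is $\pm 1$ on the exterior side of $\mybd{Q}$ near $z$, which gives the neighborhood you need, and the case $z\in\gamma$ must then be handled separately (essentially reducing to the paper's local argument). So your route works, but closing it cleanly requires either the winding-number reformulation or falling back to the paper's local analysis at the touching points.
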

\begin{proof}
  Suppose, for the sake of contradiction, 
  that there is a point $z \in \mybd{Q}[q_1, p_2]$ 
  that touches $\mybd{P}$. 
  Then, the point $z$ lies in $\mybd{Q} \cap \mybd{P}$. 
  Note that the path $\gamma\subseteq P \setminus \myint{Q}$ is a simple path connecting 
  % the points in $I_1$ and $I_2$
  $q_1$ to $p_2$, and $\tilde{\gamma}$ is the path homotopic to $\gamma$ that follows the arc $\mybd{Q}[q_1,p_2]$ from $p_2$ to $q_1$.
  Then $\gamma$ passes through $z$ as shown in Figure~\ref{fig:Z_lambda}(b).

  Consider a sufficiently small open ball $B$ centered at $z$. 
  The path $\gamma$ intersects $B$ and separates it into two open regions $B_1$ and $B_2$:
  $B_1$ lies in $\myint{P_i}$, and $B_2$ lies outside $P_i$.  
  Since $\mybd{P_i}$ is traversed in counterclockwise direction, 
  $\myint{P_i}$ lies to the left of an observer walking along $\gamma$.
  Then, $B_2$ intersects $\myint{Q}$, while $B_1$ does not.  
  The only way for $\mybd{P}$ to reach $z$ is from $B_1$.  
  Since $B_1 \subset \myint{P_i}$, 
  this contradicts that $P_i$ is a simple polygon.
\end{proof}

Note that a point $p \in \myint{P}$ if and only if there exists a
sufficiently small ball $B(p)$ centered at $p$ such that
$B(p) \subseteq \myint{P}$.  We slightly extend the polygonal chain
$\mybd{Q}[q_1, p_2]$ beyond its endpoints and denote the resulting
chain by $\mybd{Q}[q_1^-, p_2^+]$ for $q_1^- \in \mybd{Q}(p_1,q_1)$
and $p_2^+ \in \mybd{Q}(p_2,q_2)$.  Since
$\mybd{Q}[q_1, p_2] \subseteq \myint{P}$ by
Lemma~\ref{lem:lying.interior.P}, $\mybd{Q}[q_1^-, p_2^+]$ also lies
in $\myint{P}$.  Furthermore, as $I_1$ and $I_2$ are non-degenerate,
the slight extension guarantees that
$\mybd{Q}[q_1^-, p_2^+] \subseteq \mybd{Q}[p_1,q_2]$.

\begin{figure}[!b]
  \centering
  \includegraphics[width=0.7\textwidth]{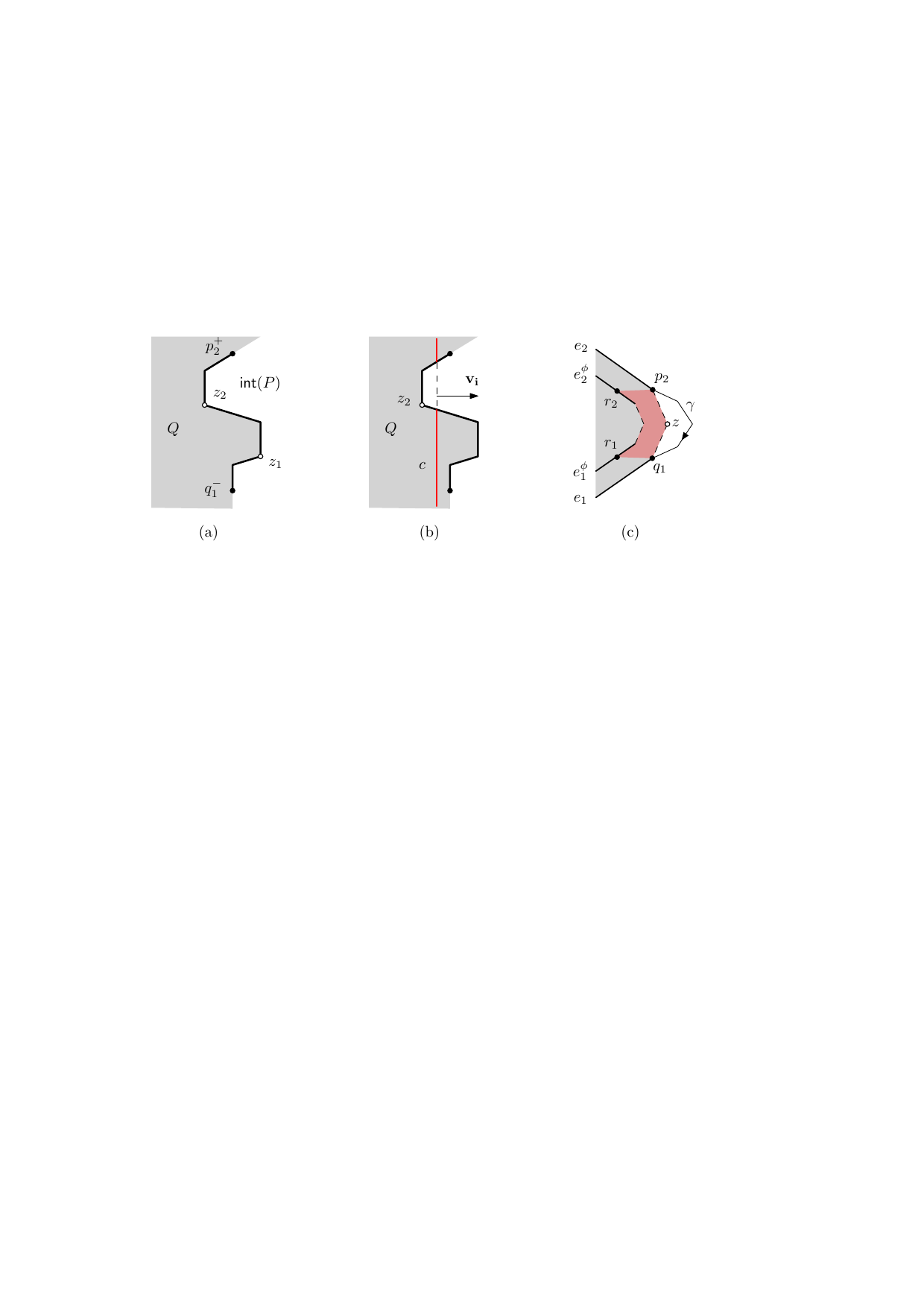}
  \caption{\small (a) The turning points on $\mybd{Q}[q_1^-, p_2^+]$
    are $z_1$ and $z_2$.  (b) For the guillotine cut $c$ along
    $ x= x(z) + \varepsilon$, the intersection $c \cap Q$ consists of
    at least two maximal line segments.  (c) $\gamma$ contains a point
    with $x$-coordinate larger than that of the turning point $z$.
    The sequence of edges on $\mybd{Q}^\phi[r_1,r_2]$ from $e_1^\phi$
    to $e_2^\phi$ corresponds to a subsequence of those of
    $\mybd{Q}[p_1,q_2]$.  }
  \label{fig:turning.points}
\end{figure}

\subparagraph{Turning points on $\mybd{Q}[q_1^-, p_2^+]$.}  Without
loss of generality, assume $H_i$ is a vertical strip, meaning
that its boundary consists of two vertical lines.  A portion of
$\mybd{Q}$ is a polygonal chain, consisting of a sequence of line
segments.  As we traverse a polygonal chain from one endpoint to the
other, these line segments are encountered sequentially.  A point
$z \in \mybd{Q}(q_1^-, p_2^+)$ is called a \emph{turning point} of
$\mybd{Q}[q_1^-, p_2^+]$ if the traversal changes its
horizontal direction (from leftward to rightward or vice
  versa) at $z$. Since $q_1^-$ and $p_2^+$ lie sufficiently close to
$q_1$ and $p_2$ along $\mybd{Q}$, every turning must occur on
$\mybd{Q}[q_1, p_2]$.

When $\mybd{Q}[q_1^-, p_2^+]$ contains vertical edges, the above
definition of turning points is not sufficient.  Consider a path that
initially moves in the positive (or negative) $x$-direction, then
follows a vertical segment, and subsequently moves in the negative (or
positive) $x$-direction.  We define the lowest point on the vertical
segment as the unique turning point on that segment.  See
Figure~\ref{fig:turning.points}(a).

Note that the number of turning points is invariant under the
traversal direction; that is, it remains unchanged whether we traverse
$\mybd{Q}[q_1^-, p_2^+]$ from $q_1^-$ to $p_2^+$ or in the reverse
direction.  As illustrated in Figure~\ref{fig:turning.points}(b), if
there are two turning points on $\mybd{Q}[q_1^-, p_2^+]$, we can draw
a vertical guillotine cut $c$ in $P$ such that $c \cap Q$ is
disconnected, since all points on $\mybd{Q}[q_1^-, p_2^+]$ lie within
$\myint{P}$.  This implies that $Q$ cannot be $\overline{W}$-convex
with respect to $P$.

\begin{lemma}\label{lem:num.turning.points}
  If $Q$ is $\overline{W}$-convex with respect to $P$, then the number
  of turning points on $\mybd{Q}[q_1^-, p_2^+]$ is at most one.
\end{lemma}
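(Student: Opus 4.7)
I argue by contradiction: assume $\mybd{Q}[q_1^-, p_2^+]$ has at least two turning points, and produce a guillotine cut $c$ of $P$, parallel to a vector in $\overline{W}$, whose intersection $c \cap Q$ is disconnected. This contradicts the $\overline{W}$-convexity of $Q$ with respect to $P$. Since $H_i$ is a vertical strip, its normal is a horizontal unit vector in $W$, so $(0,1) \in \overline{W}$; hence vertical cuts are legitimate test directions.

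Let $z_1, z_2$ be the first two turning points encountered along the traversal from $q_1^-$ to $p_2^+$. Between consecutive turning points the $x$-coordinate along the chain is monotone, so $z_1$ and $z_2$ are local extrema of $x$ of opposite types; without loss of generality $z_1$ is a local $x$-maximum and $z_2$ a local $x$-minimum, and in particular $x(z_1) > x(z_2)$. By Lemma~\ref{lem:lying.interior.P} the entire chain lies in $\myint{P}$, so a sufficiently short vertical segment crossing any point of the chain is contained in $\myint{P}$ and can be extended to a guillotine cut of $P$.

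The main step is to pick $x^* \in (x(z_2), x(z_1))$ very close to $x(z_2)$ and consider the vertical line $c:\{x=x^*\}$. Near $z_2$, both chain arcs incident to $z_2$ have $x \ge x(z_2)$, so $c$ meets them at two nearby points whose $c$-subsegment lies on one specific side of $\mybd{Q}$; an analogous picture holds at $z_1$, and since $z_1$ and $z_2$ have opposite types the resulting local subsegment at $z_1$ sits on the opposite side of $\mybd{Q}$ from the one at $z_2$. Using the counterclockwise orientation of $\mybd{Q}$ (so that $\myint{Q}$ lies on the left of the traversal direction) to identify which side is $\myint{Q}$ at each turning point, a short case analysis shows that one can choose $x^*$, and if necessary shift to the next turning point, so that $c$ is inside $Q$ near one of $z_1, z_2$, crosses into $P \setminus Q$, and enters $Q$ again near the other. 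Extending $c$ to a full guillotine cut of $P$ then gives $c \cap Q$ with at least two maximal components, contradicting $\overline{W}$-convexity.

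The main obstacle is the orientation bookkeeping at turning points: one must verify that, under the counterclockwise orientation of $\mybd{Q}$, a local $x$-maximum and a local $x$-minimum place $\myint{Q}$ on opposite horizontal sides of the respective local subsegments of $c$, which is precisely what ensures that the two pieces of $c \cap Q$ are separated by a piece of $c$ lying in $P \setminus Q$. A minor additional subtlety concerns turning points lying on a vertical edge of $\mybd{Q}$; here the paper's convention picks the lowest point of such an edge as the turning point, and the same disconnection argument applies by perturbing $x^*$ slightly.
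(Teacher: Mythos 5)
Your overall strategy coincides with the paper's: assume two turning points and exhibit a vertical guillotine cut $c$ of $P$ (legitimate since $(0,1)\in\overline{W}$ after the rotation making $H_i$ vertical) with $c\cap Q$ disconnected, contradicting $\overline{W}$-convexity. But the crux is missing, and the step you substitute for it is incorrect. The disconnection must come from an \emph{exterior} stretch of $c$ flanked by $Q$ on both sides, and your claim that this follows because $z_1$ and $z_2$ are ``of opposite types'' (local $x$-maximum versus local $x$-minimum) is false: a local $x$-maximum can be a \emph{convex} vertex of $Q$ (a locally rightmost point, with $\myint{Q}$ to its west) and a local $x$-minimum can be convex as well (a locally leftmost point, with $\myint{Q}$ to its east) --- exactly what happens along the $x$-monotone upper chain of a polygon between a locally rightmost and a locally leftmost convex vertex. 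In that both-convex configuration the subsegment of $c$ between the two crossings near $z_2$ (or near $z_1$) lies \emph{inside} $Q$, so neither your choice of $x^*$ nor ``shifting to the next turning point'' yields a disconnected $c\cap Q$; the deferred ``short case analysis'' cannot be completed as described. The paper's proof instead identifies that at least one of the two consecutive turning points is a \emph{reflex} vertex of $Q$ (a right turn in the counterclockwise traversal) and cuts at $x=x(z_2)+\varepsilon$ infinitesimally past that reflex vertex: the exterior notch then lies between two crossings of $\mybd{Q}$ that are both contained in an arbitrarily small neighborhood of $z_2\in\myint{P}$, so the whole relevant portion of the line lies on a single guillotine cut of $P$. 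The convex/reflex distinction, not the max/min dichotomy, is the fact your argument needs and never establishes.

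There is a second gap in the configuration you do describe (one component of $c\cap Q$ near $z_1$, another near $z_2$, separated by a stretch in $P\setminus Q$): for this to contradict $\overline{W}$-convexity, both components must lie on the \emph{same} guillotine cut of $P$, i.e., the entire vertical segment of $c$ joining the vicinity of $z_1$ to the vicinity of $z_2$ must stay inside $P$. Lemma~\ref{lem:lying.interior.P} only places the chain $\mybd{Q}[q_1^-,p_2^+]$ in $\myint{P}$; it says nothing about that connecting vertical segment, which may leave $P$, in which case the chord through the notch near one turning point simply ends before reaching the other and no contradiction arises. The paper's purely local construction at a single reflex turning point avoids this issue entirely, which is precisely why pinning down a reflex turning point is the essential step.
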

\begin{proof}
  Suppose that $\mybd{Q}[q_1^-, p_2^+]$ has at least two turning points when $Q$ is $\overline{W}$-convex with respect to $P$.
  Let $z_1$ and $z_2$ be two consecutive turning points along the traversal from $q_1^-$ to $p_2^+$;
  no other turning point lies on $\mybd{Q}(z_1,z_2)$. 
  Without loss of generality, 
  assume that the traversal moves in the positive $x$-direction immediately before $z_1$ and 
  moves in the negative $x$-direction immediately before $z_2$. 
  At least one of these points must be a reflex vertex of $Q$ 
  as it corresponds to a right turn in the counterclockwise traversal of $\mybd{Q}$.
  We may assume that the reflex vertex of $Q$ lies at $z_2$. 
  See Figure~\ref{fig:turning.points}(a).

  Using Lemma~\ref{lem:lying.interior.P}, we have $\mybd{Q}[q_1^-, p_2^+] \subseteq \myint{P}$, and then 
  the turning point $z_2$ also lies in $\myint{P}$. 
  For an infinitesimally small $\varepsilon > 0$, 
  let $\ell$ be the vertical line defined by $x = x(z_2) + \varepsilon$.
  Since $\varepsilon$ is chosen infinitesimally small, 
  $c$ intersects $\mybd{Q}$ in at least four distinct points.
  Thus, $c \cap Q$ is disconnected.
  Recall that the coordinate system has been rotated so that the vector $\vv{v_i} \in W$ 
  becomes horizontal.
  Since $c \cap Q$ is disconnected, $Q$ is not $\overline{W}$-convex with respect to $P$, a contradiction.
\end{proof}

Since both $\mybd{Q}[p_1,q_1]$ and $\mybd{Q}[p_2,q_2]$ lie in $P_i$,
$q_1^-, p_2^+ \in H_i$.  If there are no turning points on
$\mybd{Q}[q_1^-, p_2^+]$, then the points with the largest and
smallest $x$-coordinates along $\mybd{Q}[q_1^-, p_2^+]$ appear at
$q_1^-$ and $p_2^+$.  Since $H_i$ is a vertical slab, all points on
$\mybd{Q}[q_1^-, p_2^+]$ lie in $H_i$.

Consider the case that $\mybd{Q}[q_1^-, p_2^+]$ has a turning point at
$z \in \mybd{Q}[q_1, p_2]$, and it is unique by
Lemma~\ref{lem:num.turning.points}.  Then, the point with the largest
or smallest $x$-coordinate along $\mybd{Q}[q_1^-, p_2^+]$ may appear
at $z$.  Without loss of generality, we assume that $z$ is the point
with the largest $x$-coordinate.  The point with the smallest
$x$-coordinate lies at $q_1^-\in H_i$ or $p_2^+\in H_i$.

Recall that the path $\gamma$ follows from $p_2$ to $q_1$ outside
$\myint{Q}$ and is deformed into the path $\tilde{\gamma}$ that
traverses $\mybd{Q}[q_1,p_2]$.  Note that $z$ is a convex vertex of
$Q$ with locally largest $x$-coordinate, and $\gamma$ encloses $z$
from outside $\myint{Q}$.  It follows that $\gamma$ must pass through
a point with $x$-coordinate at least $x(z)$.  Since
$\gamma \subseteq P_i \subseteq H_i$, the $x$-coordinate of $z$ is
smaller than that of the right boundary of $H_i$.  Thus,
$\mybd{Q}[q_1^-, p_2^+]\subseteq H_i$. See
Figure~\ref{fig:turning.points}(c).

\subparagraph{Containment of $\segunion_\lambda$ within $H_i$.}
Revisiting the boundary of $\segunion_\lambda$, we have shown that
three parts, $\overline{r_1q_1}, \overline{r_2p_2}$, and
$\mybd{Q}[q_1,p_2]$, are contained in $H_i$.  The remaining part is
$\mybd{Q^\phi}[r_1,r_2]$ which is the portion of the inner
$\phi$-offset polygon of $Q$.  By definition of the offset polygon,
each edge of $Q^\phi$ is parallel to its corresponding edge in $Q$ and
the edges of $Q^\phi$ appear in the same cyclic order along its
boundary as the edges of $Q$.

We traverse the chain $\mybd{Q}[q_1^-, p_2^+]$ from $q_1^-$ to
$p_2^+$.  Let $e_1$ and $e_2$ be the edges of $Q$ that contain the
first and last segments of this chain, respectively.  Likewise,
traversing $\mybd{Q}^\phi[r_1,r_2]$ from $r_1$ to $r_2$ gives edges
$e_1^\phi, e_2^\phi$ of $Q^\phi$ incident to $r_1$ and $r_2$,
respectively.  We claim that the edges of $Q$ corresponding to
$e_1^\phi$ and $e_2^\phi$ appear along the traversal of
$\mybd{Q}[q_1^-, p_2^+]$.

The segments $\overline{r_1q_1}$ and $\overline{r_2p_2}$ lie on edges
of $C_{j}$ and $C_{j'}$, respectively.  Since $\phi < \phi_{ij}$ and
$\phi < \phi_{ij'}$, the edges of $Q$ corresponding to $e_1^\phi$ and
$e_2^\phi$ are incident to $q_1$ and $p_2$, respectively.
Recall that we work on the extended chain
  $\mybd{Q}[q_1^-, p_2^+]$, obtained by slightly extending
  $\mybd{Q}[q_1,p_2]$.  This guarantees that the corresponding edges
  of $Q$ appear along $\mybd{Q}[q_1^-, p_2^+]$. Consequently, the
sequence of line segments forming $\mybd{Q^\phi}[r_1, r_2]$
corresponds to a subsequence of those forming
$\mybd{Q}[q_1^-, p_2^+]$.  See Figure~\ref{fig:turning.points}(c) for
an illustration of this correspondence.  Thus, by
Lemma~\ref{lem:num.turning.points}, the number of turning points of
$\mybd{Q^\phi}[r_1, r_2]$ is also at most one.

Assuming that the largest $x$-coordinate of $\mybd{Q}[q_1^-, p_2^+]$
occurs at its turning point, the largest $x$-coordinate of
$\mybd{Q^\phi}[r_1, r_2]$ is smaller than that of
$\mybd{Q}[q_1^-, p_2^+]$. The argument is symmetric when the
  smallest $x$-coordinate is attained at the turning point. It
follows that $\mybd{Q^\phi}[r_1, r_2]$ is also contained in $H_i$,
which completes the proof that
$\mybd{\segunion_\lambda}\subseteq H_i$.  Hence, the reconfigured
partition $\Pi^\ast[Q] = \{Q_1^\ast, \ldots, Q_m^\ast\}$ is a solution
to the problem $\prob{Q, W, U}$ with at most $m$ connected pieces;
thus, by definition, $\nopt{Q} \le m = \nopt{P}$.

\section{Bang-type theorem for partitions of a convex body}\label{sec:bang.type.theorem}
We adapt the reconfiguration technique in
Section~\ref{sec:reconfig.restricted.partition} to prove
Theorem~\ref{thm:bang.type.analogue}.  We then show that, when
$\overline{W} \subseteq U$, an optimal partition of a convex polygon
$P$ is achieved by equally spaced parallel cuts, which can be computed
in linear time.

Let $K$ be a convex body in $\mathbb{R}^2$, and let
$P_1\cup P_2 \cup \cdots \cup P_m$ be its arbitrary partition.  Note
that each $P_i$ is compact and possibly non-convex.  Let $\conv{X}$
denote a convex hull of a set $X$ in $\mathbb{R}^2$.  A pocket of
$\conv{X}$ is defined as a closure of a connected component of
$\conv{X}\setminus X$.  Each pocket is bounded by a subpath along
$\mybd{P_i}$ and a unique line segment lying outside $P_i$.  We refer
to this segment as the \emph{hull-edge} of the pocket.

To convexify $P_i$, we iteratively reallocate its pockets to $P_i$.
However, such a reallocation may split other pieces $P_j$ with
$j \neq i$.  Accordingly, we perform a reconfiguration step to merge
such fragments into a single piece in $K$, ensuring that each piece
remains connected.

This configuration mirrors
Section~\ref{sec:reconfig.restricted.partition}, but with spatial
roles reversed.  Previously, we considered the restriction of a
partition to a subpolygon $Q$, and reconnected the fragments of other
pieces within $Q$.  Here, we restrict the partition to the complement
of a pocket, and consider the fragments of other pieces that lie
outside the pocket.  The circular intervals in the earlier setting now
correspond to the intervals along the hull-edge of the pocket.

\begin{lemma}\label{lem:convert.convex.partition}
  Let $\{P_1,\ldots,P_m\}$ be a partition of a convex body
  $K\subseteq \mathbb{R}^2$.  Then, there exists a convex partition
  $K = P_1^\ast \cup \cdots \cup P_m^\ast$, such that
  $\dwidth{\vv{v}}{P_i} \ge \dwidth{\vv{v}}{P_i^\ast}$ for all
  $\vv{v} \in \usetp$ and $i \in [m]$.
\end{lemma}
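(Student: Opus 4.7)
The plan is to convexify the pieces iteratively, one at a time, by adapting the reconfiguration machinery of Section~\ref{sec:reconfig.restricted.partition} with the spatial roles reversed: the complement $K \setminus D_s$ of each pocket plays the role of the subpolygon $Q$, and the hull-edge $e_s$ of the pocket plays the role of the portion of $\mybd{Q}$ on which the circular intervals are defined.

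Starting from $\{P_1,\ldots,P_m\}$, I would select any non-convex piece, say $P_1$, and enumerate the pockets $D_1,\ldots,D_t$ of $\conv{P_1}$. Each pocket $D_s$ is bounded by a portion of $\mybd{P_1}$ together with its hull-edge $e_s$, a chord of $\conv{P_1}$. Reassign the interior of every pocket entirely to $P_1$, so that $P_1$ is replaced by $P_1^\ast \coloneqq \conv{P_1}$. Since taking the convex hull preserves width in every direction, $\dwidth{\vv{v}}{P_1^\ast} = \dwidth{\vv{v}}{P_1}$ for every $\vv{v}\in\usetp$, so the width bound for $P_1^\ast$ holds automatically.

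Absorbing the pockets may disconnect other pieces, since fragments of $P_j$ (for $j \ne 1$) inside the pockets are transferred to $P_1$. For each pocket $D_s$, I would apply the corridor/layer/link-instruction construction of Section~\ref{sec:reconfig.restricted.partition}: a narrow corridor is laid along $e_s$ on the $K \setminus \conv{P_1}$-side, subdivided into layers using two non-parallel cut directions available in the plane; the contact intervals of other pieces along $e_s$ play the role of the circular intervals in $\mathcal{I}_Q$; and the link instructions, organized via the directed graphs $G_i$ and their in-branching containment trees, reassign layer segments so that each $P_j$ becomes a single connected piece $P_j^\ast$.

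The main obstacle is verifying the width inequality for $j \ne 1$. Each $P_j^\ast$ consists of $P_j \setminus \conv{P_1}$ together with thin strips reallocated from other pieces within the corridor along the hull-edges $e_s$. Following the bounding-strip analysis of Section~\ref{sec.analysis.reconfiguration} (adapted to the convex ambient body $K$, which is $\overline{W}$-convex with respect to itself for every $W \subseteq \usetp$), no reallocated layer segment exits the bounding strip of the piece that absorbs it. Taking the corridor width to zero and extracting a Hausdorff limit among convex partitions yields $\dwidth{\vv{v}}{P_j^\ast} \le \dwidth{\vv{v}}{P_j}$ exactly. Iterating the convexification on each remaining non-convex piece completes the proof; while a later convexification may in principle trim an already-convex piece along a narrow corridor, the same limiting argument preserves convexity of the earlier pieces, so the process terminates with the desired convex partition of $K$.
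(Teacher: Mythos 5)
Your overall plan coincides with the paper's: absorb the pockets of $\conv{P_i}$ one at a time and reuse the corridor/link-instruction machinery, with the hull-edge playing the role of $\mybd{Q}$, to reconnect the pieces that the absorbed pocket fragments. The gap is in the width step, which is the heart of the lemma. The bounding-strip analysis of Section~\ref{sec.analysis.reconfiguration} only confines reallocated layer segments to \emph{one} unit strip $H_i$, in the single direction witnessing the unit-width constraint of $P_i$; here there is no width constraint at all, and the lemma demands $\dwidth{\vv{v}}{P_j^\ast} \le \dwidth{\vv{v}}{P_j}$ for \emph{every} $\vv{v}\in\usetp$ simultaneously, so that analysis (even granting your ``$K$ is $\overline{W}$-convex with respect to itself'' remark, which is anyway aimed at the wrong set --- the role of $Q$ is played by $K\setminus D_s$, not by $K$) does not deliver what is needed. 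You then try to close this with ``corridor width $\to 0$ and a Hausdorff limit among convex partitions,'' but for any fixed positive corridor width the reconnected pieces $P_j^\ast$ are \emph{not} convex (only the piece currently being convexified is), and in the limit the connecting layer segments collapse onto the hull-edge, leaving fragments joined only through one-dimensional whiskers --- so the limit is neither a convex partition nor a usable starting point for the subsequent iterations, and the reconnection itself degenerates. The zigzag layers you import from the $U$-constrained setting make this worse: there is no cut constraint in this lemma, and with zigzag boundaries the trimming of an already-convex piece is never a half-plane cut, so convexity of earlier pieces genuinely fails at every positive width, which is exactly what forces you onto the unjustified limit.

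The paper avoids all of this by choosing the layers to be strips bounded by lines \emph{parallel to the hull-edge} $\ell$. Then the region $\segunion_\lambda$ reallocated by a link instruction joining two fragments $C_1,C_2$ of $P_j$ is bounded by two lines parallel to $\ell$ and by one edge of $C_1$ and one edge of $C_2$, hence $\segunion_\lambda \subseteq \conv{C_1\cup C_2}\subseteq \conv{P_j}$; consequently $P_j^\ast\subseteq\conv{P_j}$ and the width inequality holds in every direction exactly, for every (sufficiently small) positive corridor width, with no limiting argument. The same choice makes the removal of consecutive parallel layer segments from an already-convex piece equivalent to intersecting it with a half-plane, so convexity of previously convexified pieces is preserved exactly as the iteration proceeds. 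To repair your proof you should replace the zigzag corridor and the Hausdorff-limit step by this parallel-layer construction and the resulting containment in $\conv{P_j}$.
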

\begin{proof}
Let $\Delta$ denote a pocket of $\conv{P_i}$ bounded by hull-edge $\ell$, that will be reallocated to $P_i$. 
% with hull-edge $\ell$ for some $i \in [m]$. 
We define layers for reconfiguration 
by drawing a sequence of lines parallel to $\ell$, 
starting from $\ell$ and extending outward in the exterior direction of $\Delta$.  
% These lines start from $\ell$ and extend outward, in the direction exterior to the pocket $\Delta$.
These lines are placed at an infinitesimal distance apart, forming a narrow corridor along $\ell$.

For each other piece $P_j$ with $i\neq j$, 
we traverse $\mybd{P_j}$ 
and construct an undirected graph $G_j' = (V_j', E_j')$. 
Each vertex in $V_j'$ represents a connected component of $P_j \cap \ell$ with positive length, 
viewed as a non-degenerate interval along $\ell$. 
% regarded as an interval along $\ell$. 
Two vertices are connected by an edge if a transition between their corresponding intervals occurs during the traversal.
This graph encodes the link instructions that reconnect 
the fragments of $P_j$ induced by the reallocation of $\Delta$. 
Let $\inst_\Delta$ denote the set of link instructions that are encoded in $G'_j$ for $j\neq i$.

% Traversing the boundaries of pieces intersecting $\ell$, we construct a graph $G_j'$ for each $j\neq i$, 
% where each vertex corresponds to a non-degenerate interval on $\ell$, 
% and each edge encodes a link instruction. 

% Unlike Section~\ref{sec:reconfig.restricted.partition}, 
% $G_j'$ here is undirected, 
% as any two paths between the same pair of points are homotopic within the pocket. 
For each connected component of $G_j'$, 
we take the smallest interval on $\ell$ containing all associated intervals.
% a combine step produces the smallest interval 
% on $\ell$ containing all associated intervals. 
This yields a set of intervals $\mathcal{J}'_j$ for each $j\neq i$. 
Let $\mathcal{J}'$ denote the union of all such sets, i.e., $\mathcal{J}' = \bigcup_{j \neq i} \mathcal{J}_j'$. 
This collection $\mathcal{J}$ 
is analogous to the set $\mathcal{J}$ from Section~\ref{sec:reconfig.restricted.partition} 
in that its intervals do not properly intersect each other. 
% defined in Section~\ref{sec:reconfig.restricted.partition}, and, 
% as before, intervals in $\mathcal{J}'$ do not properly intersect each other. 
As before, we construct the transitive reduction $\trg{G}_{\mathcal{J}'}$ and 
assign layers to link instructions in $\inst_\Delta$ 
according to the depths of their corresponding nodes in $\trg{G}_{\mathcal{J}'}$.

Executing all link instructions in $\inst_\Delta$
merges the fragments of each $P_j$ in $K \setminus \Delta$ 
into a single piece. 
Iterating this process over all pockets of $\conv{P_i}$ eventually transforms each $P_i$ into its convex hull $\conv{P_i}$.
% each $P_i$ is incrementally reconfigured into its convex hull $\conv{P_i}$. 
Throughout, the connectivity of all other pieces is preserved, 
so the number of connected pieces in the partition of $K$ remains unchanged.
The same procedure is applied to the other non-convex pieces. 
%  we do the same process for other non-convex pieces. 
% Thus, the original partition can be reconfigured into a convex partition, as claimed.
% This reconfiguration continues until the partition becomes convex. 

% To show that the process terminates after finite many iterations, 
% we argue that 
During the process, the convexity of initially convex pieces is preserved. % throughout the reconfiguration.
Consider a convex piece $P_j$ that intersects a pocket $\Delta$ of $\conv{P_i}$ for $i \neq j$.
The intersection $P_j \cap \ell$ is a line segment, and the remaining part 
$P_j \setminus \Delta$ is also convex.  
Moreover, $P_j \setminus \Delta$ meets each layer $L_k$ in at most one connected component $Z_k$, which we refer to as the layer segment. 
By construction, 
the layer segments $Z_1,Z_2, \ldots ,Z_k$ are arranged consecutively along the corridor, so that 
their removal from $P_j \setminus \Delta$ 
is equivalent to intersecting with a single half-plane. 
This operation preserves convexity.
% since the intersection of $P_j$ with each layer is connected 
% the process that reconnect fragments of other pieces can 
% and thus the reallocation does not affect its convexity.
As a result, we obtain a convex partition of $K$, 
$\{P_1^\ast,P_2^\ast, \ldots, P_m^\ast\}$, where 
each $P_i$ is reconfigured to $P_i^\ast$.

Finally, 
we show that the directional width of each piece is non-increasing 
during reconfiguration:
for any $\vv{v} \in \usetp$ and any $i \in [m]$, we have
$\dwidth{\vv{v}}{P_i}\ge \dwidth{\vv{v}}{P_i^\ast}$.
Assume that two fragments $C_1$ and $C_2$ of some piece $P_{i'}$ are merged 
by a link instruction. 
The link instruction reallocates a region $\mathbf{Z}$ within a single layer. 
Since layer is defined as a strip bounded by two parallel lines, 
$\mathbf{Z}$ is bounded by an edge of $C_1$, 
an edge of $C_2$, and those two parallel lines. 
 % , which consists of layer segments lying within a single layer.
% Since the region $\mathbf{Z}$ is bounded by an edge of $C_1$, 
% an edge of $C_2$, and two parallel lines, 
Hence, $\mathbf{Z} \subseteq \conv{C_1 \cup C_2} \subseteq \conv{P_{i'}}$. 
% See Figure~\ref{fig:bang.type}(b).
Therefore, 
the reconfigured piece $P_{i'}^\ast$ is contained in $\conv{P_{i'}}$, and 
for all $\vv{v} \in \usetp$, $\dwidth{\vv{v}}{P_{i'}}\ge \dwidth{\vv{v}}{P_{i'}^\ast}$.
\end{proof}

By Lemma~\ref{lem:convert.convex.partition}, we have a convex
partition $K = P_1^\ast\cup \cdots \cup P_m^\ast$ such that
$\dwidth{\vv{v}}{P_i} \ge \dwidth{\vv{v}}{P_i^\ast}$ for all
$\vv{v} \in \usetp$ and $i \in [m]$.  Given this convex partition,
Akopyan~\cite{Akopyan2019} showed that
$\sum_{i=1}^m r_K(P_i^\ast) \ge 1$, where
$r_K(P^\ast_i)= \sup\{h\ge 0 \mid \exists t \in \mathbb{R}^2 \
\text{such that}\ hK + t \subseteq P^\ast_i\}$.  For any direction
$\vv{v}\in \usetp$, we have
$r_K(P^\ast_i) \le \dwidth{\vv{v}}{P_i^\ast}/\dwidth{\vv{v}}{K}$.
Thus, for any subset $W \subseteq \usetp$,
$ \sum_{i=1}^m \inf_{\vv{v} \in
  W}\frac{\dwidth{\vv{v}}{P_i}}{\dwidth{\vv{v}}{K}} \ge \sum_{i=1}^m
\inf_{\vv{v} \in
  W}\frac{\dwidth{\vv{v}}{P^\ast_i}}{\dwidth{\vv{v}}{K}} \ge
\sum_{i=1}^m r_K(P^\ast_i) \ge 1$.

\subparagraph{Optimal partition for a convex polygon.}  Let $P$ be a
convex polygon with $n$ vertices, and let $W,U\subseteq \usetp$ such
that $\overline{W} \subseteq U$.  Choose an arbitrary vector
$\vv{u} \in W$; without loss of generality, assume that
$\vv{u} = (1,0)$.  Let $q$ be the leftmost vertex of $P$, and
partition $P$ by vertical lines along $x_i = x(q)+ i$ for all
$i = 1, \ldots, \lceil\dwidth{\vv{u}}{P}\rceil-1$.  This partitions
$P$ into $\lceil\dwidth{\vv{u}}{P}\rceil$ pieces, each of horizontal
width at most 1, and it is a feasible solution to $\prob{P, W, U}$.
Since $\vv{u}$ is chosen arbitrarily,
$\nopt{P} \le \min_{\vv{v}\in W} \lceil\dwidth{\vv{v}}{P}\rceil$.

Suppose, for the sake of contradiction, that the optimal partition has
fewer than $\min_{\vv{v}\in W}\lceil\dwidth{\vv{v}}{P}\rceil$ pieces.
Let $P = P_1 \cup \cdots \cup P_m$ be an optimal partition for
$\prob{P,W,U}$, with $m = \nopt{P}$.  By
Theorem~\ref{thm:bang.type.analogue}, we have
$1 \le \sum_{i=1}^m \inf_{\vv{v} \in
  W}(\dwidth{\vv{v}}{P_i}/\dwidth{\vv{v}}{P})$.  Since each $P_i$
satisfies unit-width constraint $W$, there exists
$\vv{u}_i \in \usetp$ such that $\dwidth{\vv{u}_i}{P_i} \le 1$.  Thus,
$1 \le \sum_{i=1}^m \inf_{\vv{v} \in
  W}(\dwidth{\vv{v}}{P_i}/\dwidth{\vv{v}}{P}) \le \sum_{i=1}^m
(\dwidth{\vv{u}_i}{P_i}/\dwidth{\vv{u}_i}{P}) \le \sum_{i=1}^m (1 /
\dwidth{\vv{u}_i}{P})$.

We analyze two cases depending on whether $\dwidth{\vv{v}}{P}$ attains
a minimum over $W$.  If it does, we have
$1 \le \sum_{i=1}^m (1 / \min_{\vv{v}\in W} \dwidth{\vv{v}}{P})$, and
$\min_{\vv{v}\in W} \dwidth{\vv{v}}{P} \le m$.  As $m$ is an integer,
$\min_{\vv{v}\in W} \lceil \dwidth{\vv{v}}{P} \rceil \le m$.  If no
minimum is attained over $W$,
$1 < \sum_{i=1}^m (1 / \inf_{\vv{v}\in W} \dwidth{\vv{v}}{P})$ and
$\inf_{\vv{v}\in W} \dwidth{\vv{v}}{P} < m$.  Then
$\min_{\vv{v}\in W} \lceil \dwidth{\vv{v}}{P} \rceil = \inf_{\vv{v}\in
  W} \lceil \dwidth{\vv{v}}{P} \rceil \le m$.  Both cases contradict
our assumption, and thus
$\nopt{P} = \min_{\vv{v}\in W} \lceil\dwidth{\vv{v}}{P}\rceil$.

Let $\vv{u} \in W$ be a vector minimizing
$\lceil \dwidth{\vv{u}}{P} \rceil$, computed in $O(n)$ time for
$W = \usetp$~\cite{Houle1985}.  Assume that $\vv{u}$ is given.  For
the vertices of $P$ given in counterclockwise order, such $m-1$
parallel cuts can be computed in $O(\min\{n, m\log\frac{n}{m}\})$
time~\cite{Chung2022}.  Since $x/(1+x) < \log (1+x) < x$ for all
$x>0$, $\min\{n, m\log\frac{n}{m} \} = \Theta(m\log (1+\frac{n}{m}))$.
For fixed $n$, $f(m)= m\log(1+\frac{n}{m})$ starts at $\Theta(\log n)$
when $m = 1$ and increases monotonically, approaching $\Theta(n)$ as
$m \to \infty$.
\begin{corollary}\label{cor:convex.opt.partition}
  Let $P$ be a convex polygon with $n$ vertices, and let
  $W, U\subseteq \usetp$ be sets of unit vectors such that
  $\overline{W} \subseteq U$.  Then an optimal partition for the
  problem $\prob{P,W,U}$ is achieved by equally spaced parallel cuts
  orthogonal to $\vv{u} \in W$ that minimizes
  $\lceil \dwidth{\vv{u}}{P} \rceil$. Given such a direction $\vv{u}$,
  the partition can be computed in
  $O(\dwidth{\vv{u}}{P} \log (1+\frac{n}{\dwidth{\vv{u}}{P}}))$ time.
\end{corollary}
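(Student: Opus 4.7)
The plan is to establish the equality $\nopt{P}=\min_{\vv{v}\in W}\lceil\dwidth{\vv{v}}{P}\rceil$ by matching upper and lower bounds, and then separately analyze the running time. The upper bound is via an explicit parallel-cut construction, while the lower bound is a short reduction to the newly-established Bang-type partition analogue (Theorem~\ref{thm:bang.type.analogue}).

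For the upper bound, I would fix any $\vv{u}\in W$ and, after rotating coordinates so that $\vv{u}=(1,0)$, draw the vertical lines at $x=x(q)+i$ for $i=1,\dots,\lceil\dwidth{\vv{u}}{P}\rceil-1$, where $q$ is the leftmost vertex of $P$. By convexity of $P$ each such line intersects $P$ in a segment, which is a guillotine cut of the current subpiece, and each resulting slab has horizontal width at most one and is thus feasible for $W$ via $\vv{u}$. Admissibility of the cut directions is precisely what $\overline{W}\subseteq U$ guarantees. Minimizing over $\vv{u}\in W$ then yields $\nopt{P}\le\min_{\vv{v}\in W}\lceil\dwidth{\vv{v}}{P}\rceil$. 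For the lower bound, let $P=P_1\cup\cdots\cup P_m$ be optimal with $m=\nopt{P}$ and pick for each $i$ a witness $\vv{u}_i\in W$ with $\dwidth{\vv{u}_i}{P_i}\le 1$. Theorem~\ref{thm:bang.type.analogue} gives
\[
 1\le \sum_{i=1}^m \inf_{\vv{v}\in W}\frac{\dwidth{\vv{v}}{P_i}}{\dwidth{\vv{v}}{P}} \le \sum_{i=1}^m \frac{1}{\dwidth{\vv{u}_i}{P}} \le \frac{m}{\inf_{\vv{v}\in W}\dwidth{\vv{v}}{P}},
\]
so $\inf_{\vv{v}\in W}\dwidth{\vv{v}}{P}\le m$, and integrality of $m$ promotes this to $\min_{\vv{v}\in W}\lceil\dwidth{\vv{v}}{P}\rceil\le m$.

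For the running-time claim, I would use rotating calipers~\cite{Houle1985} to find the minimizing $\vv{u}\in W$ on the convex $n$-gon in $O(n)$ time when needed, and then apply the sweep of~\cite{Chung2022} on the counterclockwise vertex list to place the $m-1$ equally spaced cuts in $O(\min\{n,\,m\log(n/m)\})$ time. The elementary bound $x/(1+x)<\log(1+x)<x$ rewrites this as $\Theta(m\log(1+n/m))$, matching the stated $\Theta(\dwidth{\vv{u}}{P}\log(1+n/\dwidth{\vv{u}}{P}))$. The one subtle point, which I expect to be the main obstacle, is the lower-bound argument when the infimum $\inf_{\vv{v}\in W}\dwidth{\vv{v}}{P}$ is not attained on $W$; in that case the displayed chain becomes a strict inequality at the second step, giving $\inf_{\vv{v}\in W}\dwidth{\vv{v}}{P}<m$ and hence $\inf_{\vv{v}\in W}\lceil\dwidth{\vv{v}}{P}\rceil\le m$, which needs to be reconciled with the minimum via a short case analysis on the attainment. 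Everything else is essentially a direct application of Theorem~\ref{thm:bang.type.analogue} together with standard convex-polygon algorithmics.
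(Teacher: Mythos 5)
Your proposal is correct and follows essentially the same route as the paper: the same parallel-cut upper bound using $\overline{W}\subseteq U$, the same lower bound via Theorem~\ref{thm:bang.type.analogue} with witnesses $\vv{u}_i\in W$ and the case analysis on whether $\inf_{\vv{v}\in W}\dwidth{\vv{v}}{P}$ is attained, and the same running-time argument via~\cite{Houle1985} and~\cite{Chung2022}. (Only a cosmetic quibble: the strictness in the non-attained case enters at the last inequality of your chain, $\sum_i 1/\dwidth{\vv{u}_i}{P} < m/\inf_{\vv{v}\in W}\dwidth{\vv{v}}{P}$, not the second, exactly as the paper uses it.)
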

\subparagraph{Remarks.}  We work in the Real RAM model, which
  supports unit-cost arithmetic ($+,-,\times,\div$) and comparisons on
  real numbers.
  % Floor and ceiling operations are not included in the model's
  % instruction set.
  Our algorithms perform integer rounding via comparisons.

\section{Concluding remarks}
We studied the minimum partition problem $\prob{P,W,U}$ 
% under width and cut constraints, %unit-width constraints,
for a simple polygon $P$ and direction sets $W, U\subseteq \usetp$. % and $W\subseteq \usetp$.
We provided necessary and sufficient conditions for the existence of feasible partitions, 
along with a decision algorithm for checking feasibility.
% and presented a decision algorithm that determines whether $\prob{P,W,U}$ has a solution.

Our main contribution is an analysis of the monotonicity of the minimum partition number under polygon containment.
  To this end, we introduced the notion of $\oset$-convexity (for $\oset \subseteq \usetp$) of a subpolygon with respect to a container, 
  as a sufficient condition. % for monotonicity under containment.
  The monotonicity holds in two cases:
  (1) when $Q$ is $U$-convex with respect to $P$ in the guillotine case, and 
  (2) when $Q$ is $\overline{W}$-convex in the non-guillotine case. 
  The reconfiguration technique in the non-guillotine case also leads to a Bang-type theorem for partitions of convex bodies.

  We identify two open problems concerning the intractability 
  of the minimum partition problem under constraints $W$ and $U$.
  \begin{enumerate}
    \item $P$ is a hole-free simple polygon, with $U = \overline{W}$ and $|U| = |W| = 2$.  
    \item $P$ is a simple polygon with holes, with $U = \overline{W}$ and $|U| = |W| = 1$.  
  \end{enumerate}

  For hole-free polygons (Question 1), we may assume, without loss of generality, that $U$ and $W$ consist of orthogonal vectors, 
  by applying an appropriate linear transformation to $P$. 
  Abrahamsen and Stade~\cite{Abrahamsen2024} 
  proved NP-hardness of the AND version of the problem, 
  where each piece must have both horizontal and vertical width at most 1. 
  In contrast, our variant imposes the OR condition, where each piece must have either horizontal or vertical width at most 1. 
  Determining whether this OR version is NP-hard remains an open problem.

  For the case with holes (Question 2), 
  Buchin et al.~\cite{Buchin2021} and Worman~\cite{Worman2003} provide strong evidence 
  that our variant is NP-hard.  
  They studied 
  the minimum $\alpha$-fat partition problem where each piece has diameter at most $\alpha >0$, 
  without allowing Steiner points. 
  In our setting, arbitrary Steiner points are permitted, 
  which may increase the complexity of the problem,    
  whereas the restriction to parallel cuts may decrease it.  
  As in their works, 
  a common strategy is to reduce the problem to 
  planar 3-SAT in order to 
  show the intractability of partition problems for polygons with holes. 
  Thus, an attempt to reduce our variant to planar 3-SAT 
  appears to be a promising direction.

\end{document}